%% file: main.tex
 \documentclass[final,12pt]{clear2026} 

\title[Cost-Aware Optimized Front-Door Experimental Design]{Cost-Aware Optimized Front-Door Experimental Design}
\usepackage[utf8]{inputenc}
\usepackage{amsmath}
\usepackage{amsfonts}
\usepackage{amssymb}
\usepackage{times}
\usepackage{mathrsfs}
\usepackage{tikz}
\usepackage{tcolorbox}
\usepackage{graphicx}
\usepackage{array}
\usepackage{bm}
\usepackage{multirow}

\usetikzlibrary{arrows.meta, positioning}

\newtheorem{model}[theorem]{Model}
\newtheorem{problem}[theorem]{Problem}
\newtheorem{assumption}[theorem]{Assumption}

\newcommand{\E}[1]{\mathrm{E}\!\left[ #1 \right]}
\newcommand{\Var}[1]{\mathrm{Var}\!\left( #1 \right)}
\newcommand{\restatement}[2]{\begin{center}
\fbox{%
  \begin{minipage}{0.97\textwidth}
  \textbf{#1:} #2
  \end{minipage}
} 
\end{center} }

\clearauthor{%
 \Name{Leopold Mareis} \Email{leopold.mareis@tum.de}\\
 \addr 
 Technical University of Munich, Germany
 \AND
 \Name{Mathias Drton} \Email{mathias.drton@tum.de}\\
 \addr 
 Technical University of Munich, Germany \\
\addr Munich Center for Machine Learning, Munich, Germany
}

\jmlrvolume{}  
\jmlryear{}
\jmlrproceedings{}{}
\jmlrpages{}
\jmlrworkshop{}
\editors{}

\begin{document}

\maketitle

\begin{abstract}%
  Causal effect estimation often succeeds cost-constrained sequential data collection. This work considers multivariate linear front-door models with arbitrary unobserved confounding on treatment and response. We optimize the experimental design by balancing the statistical efficiency and measurement costs through partial data. The full‑data efficient influence function for the causal effect is derived, together with the geometry of all observed‑data influence functions. This characterization yields a closed-form optimal sampling policy and an estimator to minimize the asymptotic variance of regular asymptotically linear (RAL) estimators within a class of augmented full-data influence functions. The resulting design also covers back-door estimation. In simulations and applications to biological, medical, and industrial datasets, the optimized designs achieve substantial efficiency gains ($ 5.3 \%$ to $ 31.9 \%$) over naive full-sampling strategies.%
\end{abstract}

\begin{keywords}%
  Causal inference, experimental design, front-door estimation, linear SEM, semiparametric estimation%
\end{keywords}

\section{Introduction}

Estimating the linear causal effect between a treatment $X_t$ and a response $X_r$ from observational data under unobserved confounding is a central problem in causal inference, with applications ranging from pharmacology, medicine, economics, to social sciences. This paper addresses a problem that arises naturally in practice: how to allocate a limited measurement budget across staged sampling designs with intermediate costs so as to minimize the variance of a front-door causal effect estimator. We combine semiparametric efficiency theory and causal graphical modelling to derive RAL estimators and optimized experimental designs that achieve substantial efficiency gains over naive full-measurement strategies without sacrificing inferential validity.

A helpful way to view the setting is as a statistician planning a medical study. The goal is to estimate the effect of a medical procedure, but collecting additional biological measurements per patient is costly, risky or time intensive. As even partial data can be informative, the experimenter must decide which patients to measure more extensively and how to allocate limited measurement resources while maintaining statistical efficiency. In our framework, baseline covariate information allows the experimenter to make patient-specific probabilistic decisions about whether to measure mediators and, conditional on that, whether to measure the response. These decisions are based on information already available at the time of sampling. Importantly, the design operates at the population level: while every unit contributes information in expectation, it is not always optimal to collect all measurements for every individual.

We study these issues in a multivariate linear front-door model defined by structural equations 
\[
X = \beta X + \varepsilon,\quad X = (I - \beta)^{-1} \varepsilon,
\]
where the lower-triangular parameter matrix $ \beta $ encodes linear causal relationships between the confounder, treatment, mediator, and response variables $ X:=X_{C,t,M,r}$ as visualized in Figure \ref{figure_causal_graphs}. To reflect the sequential data collection, we introduce coarsening levels $ X_{C,t}$, $ X_{C,t,M} $, $ X_{C,t,M,r} $ with known propensities $ \pi $, resulting in an observed-data distribution $ \mathcal{P}_{\beta, \varepsilon, \pi} $.

In well-behaved statistical models all sensible consistent estimators, including the MLE, are regular and asymptotically linear (RAL). Our work builds on several strands of RAL literature. \cite{robins1994} and \cite{robins1995} established semiparametric efficiency of parameter estimates in multivariate regression models under coarsening, but were limited to single-stage regressions. Over the years, this single-stage problem has been optimized with recent advances in automatic bias corrections using learned Riesz representers \citep{Tan_IPW, hines2025automaticdebiasingneuralnetworks}. Utilizing the graphical structure in acyclic directed mixed graphs (ADMGs), \cite{bhattacharya2020identification} and \cite{Jung_double_ID} developed efficient, doubly robust full-data estimators targeting interventional distributions \citep{Hines2022, chernozhukov2017double, vanderVaart2014}. In contrast, we impose a linear structure, further restricting the semiparametric tangent space and allowing for practical estimation without non-parametric density estimation. Unifying graphical models and missing data theory, \cite{mohan_missingness} solved the distributional identification problem, allowing for reinforced causal effect estimation \citep{seitzer2021causal}. Related work includes the search for back‑door adjustment sets that minimize the asymptotic variance of linear regression estimators \citep{Witte2020, drton_linear_identifiability}, as well as the optimization of efficient linear regression within instrumental‑variable experimental designs \citep{mareis_2025}.

\paragraph*{Contribution} 
The primary contributions are the following. 
\begin{itemize}
\item \textbf{Theorem~\ref{thm_full_data_EIF}}: We derive the efficient influence function $\varphi_\xi^{F,\mathrm{eff}}$ for the causal effect $\xi$ in the full-data model $\mathcal{M}_1$. This function characterizes the minimum achievable asymptotic variance among all RAL estimators when complete observations are available. It decomposes into two orthogonal components that separately and efficiently target the treatment-to-mediator parameter $\beta_{Mt}$ and the mediator-to-response parameter $\beta_{rM}$.
\item \textbf{Theorem~\ref{theorem_optimization}:} Given $\varphi_\xi^{F,\mathrm{eff}}$, we characterize the geometry of all observed-data influence functions in $\mathcal{M}_\pi$ and identify the optimal augmentation within the class $IF^\pi(\varphi^{F,\mathrm{eff}})$; see Lemma~\ref{lemma_tsiatis}. Theorem~\ref{theorem_optimization} yields a closed-form optimized propensity $ \pi^\ast $ that solves the constrained minimization of asymptotic variance over all RAL estimators within $IF^\pi(\varphi^{F,\mathrm{eff}})$ and all sampling regimes under a fixed expected budget $b_0$. The optimal propensity balances the statistical efficiency of the intermediate estimators $\hat\beta_{Mt,n}$ and $\hat\beta_{rM,n}$ against measurement cost. The optimized design for back-door estimation follows as a corollary. 
\end{itemize}
In a simulation study and in applications to biological, medical and industrial datasets, we demonstrate substantial efficiency gains of $ 5.3 \% $ to $ 31.9 \% $ compared to the naive full-sampling design in Section~\ref{section_Computational_results}. To ensure reproducibility and support practical applications through step‑by‑step instructions, the accompanying code is available at \href{https://doi.org/10.5281/zenodo.18960268}{https://doi.org/10.5281/zenodo.18960268}.

\section{Multivariate Front-Door Model} 
We begin by introducing the linear multivariate front-door model. 
Let $ (\Omega, \mathcal{F}, \mathcal{P}) $ be a probability space, and consider the Hilbert space $ \mathcal{H} = L^2_0(\Omega; \mathbb{R}^d) $ consisting of all $ d $~dimensional mean-zero, finite variance functions $ h: \Omega \to \mathbb{R}^d $, with its inner product $ \E{ h_1^\top h_2} $. Let $ \mathcal{C}(A,B) $ denote the space of continuous functions from $ A $ to $ B $. 

\begin{definition}\label{definition_ANMFD}
Fix integers $ d_C, d_M\in \mathbb{N} $ and let $ d := d_C + 1 + d_M + 1 $. We partition the vector $ x \in \mathbb{R}^d $ as $ x = (x_C, x_t, x_M, x_r) $. The parameter set of admissible coefficient matrices is
\[
\mathcal{B} := \left\{ \beta \in \mathbb{R}^{d \times d} : \beta = \begin{pmatrix} 0 & 0 & 0 & 0  \\ \beta_{tC} & 0 & 0 & 0 \\ \beta_{MC} & \beta_{Mt} & 0 & 0  \\ \beta_{rC} & 0 & \beta_{rM} & 0  \end{pmatrix} \right\}.
\]
The nuisance parameter space $ \mathcal{E} $ of additive noise terms is restricted by a factorization of the induced measures $ P_\varepsilon $. In particular, it is defined as
\begin{align} \label{formula_varepsilon_factorization}
\mathcal{E} := \left\{\varepsilon \in \mathcal{H} \, : \ \begin{matrix} 
\varepsilon  \text{ absolutely continuous, } \\ 
\mathcal{P}_\varepsilon = \mathcal{P}_{\varepsilon_C} \otimes \mathcal{P}_{\varepsilon_{t,r}} \otimes \mathcal{P}_{\varepsilon_M} 
\end{matrix} \right\}.
\end{align}
The linear multivariate front-door model is then defined as the set of pushforward measures
\[
\mathcal{M}_1 = \{ \mathcal{P}_{\beta, \varepsilon} =  (I - \beta)^{-1} \mathcal{P}_\varepsilon \, : \, \beta \in \mathcal{B}, \varepsilon \in \mathcal{E} \},
\]
which satisfy the linear structural equation  $ X = \beta X + \varepsilon $, or equivalently $ X = (I - \beta)^{-1} \varepsilon $.
\end{definition}
This formulation allows for confounded errors $ \varepsilon_t $ and $ \varepsilon_r $, as well as arbitrary confounding and dependencies within the subvectors $ X_C $ and $ X_M $. As motivated in the introduction, we further permit the data to be generated by a known coarsening or, more precisely, missingness process, affecting the mediating, and treatment variables $ (X_M, X_t) $ in two stages. The \textit{full-data} model $ \mathcal M_1 $ is particularly important in the subsequent derivation of efficient estimators.

\begin{definition} \label{definition_observed_data}
Let $ X $ follow model $ \mathcal{M}_1 $. For given propensity functions $ \pi_1 \in \mathcal{C}(\mathbb{R}^{d_C + 1}, (0, 1]) $ and $ \pi_2 \in \mathcal{C}(\mathbb{R}^{d_C + 1 + d_M}, (0, 1]) $, let the random variable $ \Delta: \Omega \mapsto \{ 1, 2, \infty \} $, with probabilities
\begin{align*} 
\mathcal{P}_\pi(\Delta = 1 \,|\, X) = 1 - \pi_1(X_{C,t}), &&& 
\mathcal{P}_\pi(\Delta = 2 \,|\, X) = \pi_1(X_{C,t}) (1 - \pi_2(X_{C,t,M})) ,\\ 
&&&\mathcal{P}_\pi(\Delta = \infty \,|\, X) = \pi_1(X_{C,t}) \pi_2(X_{C,t,M}) 
\end{align*}
indicate which subset of $ X $ is observed. The observed-data distribution is then denoted by $ \mathcal{P}_{\theta,\pi}$, with $\theta = (\beta, \varepsilon) $, and generates realizations 
\[
\left\{ \left(\Delta^{(i)}, G_{\Delta^{(i)}}(X^{(i)}) \right) \right\}_{i=1, \dots ,n} , \quad \text{where} \quad 
G_{\Delta^{(i)}}(X^{(i)}) =   \begin{cases} (X_{C}^{(i)}, X_t^{(i)})  & \text{if } \Delta^{(i)} = 1 \\ (X_C^{(i)}, X_t^{(i)}, X_M^{(i)})  & \text{if } \Delta^{(i)} = 2 ,\\ X^{(i)}  &  \text{if } \Delta^{(i)} = \infty. \end{cases}
\]
\end{definition}
\begin{model} 
For each $ \pi_1 \in \mathcal{C}(\mathbb{R}^{d_C + 1}, (0, 1])$ and $ \pi_2 \in \mathcal{C}(\mathbb{R}^{d_C + 1 + d_M}, (0, 1])$, define the observed-data multivariate front-door model as the collection of measures
\[
\mathcal{M}_{\pi} = \left\{ \mathcal{P}_{\theta,\pi} = \mathcal{L}(\Delta, G_\Delta(X)) : \ \begin{matrix}
    \theta = (\beta, \varepsilon) \in \mathcal{B} \times \mathcal{E} = \Theta, \\ 
    X \sim \mathcal{P}_{\beta, \varepsilon} \in \mathcal{M}_1, \\ 
    \Delta | X \sim \mathcal{P}_\pi(\cdot \, | \, X)    
\end{matrix} \right\}.
\]
\end{model}
A visualizations of the causal graph to which the model $ \mathcal{M}_\pi $ is Markov is shown in Figure~\ref{figure_causal_graphs}. 
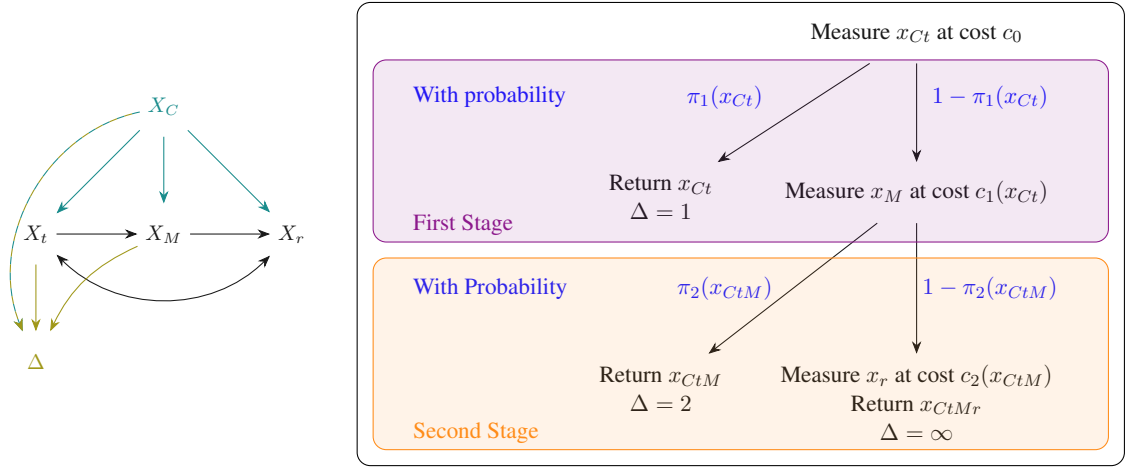
\begin{figure}
    \centering
    \input{rebuttal/Figure_generation}
    \caption{\textit{Left:} ADMG for the multivariate linear front-door model.\textit{ Right:} Generation of a new sample. The {\color{violet} first stage} {\color{blue}probabilistically} decides whether to measure $ x_M $. If so, the {\color{orange} second stage} decides probabilistically whether to additionally measure $ x_r $.}
    \label{figure_causal_graphs}
\end{figure}

\subsection{Practical Workflow}
Figure~\ref{figure_causal_graphs} illustrates the mechanism used to generate a new sample. We write '\textit{Measure }$x_i$' to denote a measurement of $X_i =[ (I - \beta)^{-1}\varepsilon]_i$.  A dataset with four samples from model $ \mathcal{M}_\pi$ can therefore take the form 
\(
\left\{ (\Delta^{(1)} = \infty, x_{CtMr}^{(1)}),\ (2, x_{CtM}^{(2)}),\ (1, x_{Ct}^{(3)}),\ (1, x_{Ct}^{(4)})\right\},
\) and no model can produce a sample on the components $x_{Ctr}$.  The theoretically important full-data Model $ \mathcal{M}_1$ with propensity $ \pi \equiv (1, 1) $ always yields datasets of the form $  \{(\infty, x_{CtMr}^{(i)})\}_{i = 1}^n $. A practical example is personalized staged measurement of patient features for medical diagnostics \citep{vonkleist2025AFA}.

The experimenter adopts model $ \mathcal{M}_\pi $ based on the unknown design matrix $ \beta $ and errors $\varepsilon$ with the goal of estimating the causal effect $\xi$ as cost-efficiently as possible. For each propensity $ \pi = (\pi_1, \pi_2)$, the experimenter can generate datasets according to Figure \ref{figure_causal_graphs} under Model~$ \mathcal{M}_\pi$. The experimenter specifies the base cost $ c_0 $ and cost functions $ c_1, c_2 $, based on real-world considerations, which together determine the cost of each sample. Varying $ \pi $ alters the proportion of $ X_{Ct}, X_{CtM}$ and $X_{CtMr}$ occurrences and thus the associated cost. Since certain parameter configurations require relatively more information on $ X_{Ct} $ and less on $ X_{CtMr}$ for estimating $ \xi $, the experimenter can exploit the staged cost and optimize $ (\pi^\ast_1, \pi^\ast_2) $ via Theorem~\ref{theorem_optimization} or Corollary~\ref{corollary_back_door}. Data is then sampled exclusively from the induced model $ \mathcal{M}_{\pi^\ast}$ to estimate the causal effect in a cost-efficient manner.

\section{Background on Efficient Estimation}
Our objective is to identify propensity functions $ \pi_1, \pi_2 $ that minimize the asymptotic variance of an estimator for the causal effect of the treatment $ X_t $ on the response $ X_r $. This causal effect is defined as the derivative of the interventional expectation evaluated at the treatment level 
\[
\xi(x_t) =\left. \frac{\partial}{\partial x_t^\ast}\mathrm{E}\left[ X_r ; do(X_t = x_t^\ast) \right] \right|_{x_t^\ast = x_t}.
\]
The $ do(\cdot) $ operator modifies the structural equation $ X = \beta X + \varepsilon $ in Definition \ref{definition_ANMFD}, thereby altering the data-generating process \citep{maathuis_handbook_2019}. In the linear front-door model, the causal effect reduces to the sum over all mediating paths, 
\[ 
\xi = \phi(\beta_{Mt}, \beta_{rM}^\top) := \beta_{rM} \beta_{Mt} \in \mathbb{R}. 
\]
We restrict our attention to estimators of $ \xi $ that are regular and asymptotically linear (RAL) in $ \mathcal{M}_\pi $, denoted by $ \mathcal{E}_{\textrm{RAL}}(\xi, \mathcal{M}_\pi) $. Let $ X^{(i)} $, $ i \in [n] $, be i.i.d. draws from $ \mathcal{P}_{\theta, \pi} \in \mathcal{M}_\pi $. An estimator $ \hat{\xi}_n = (X^{(1)}, \dots, X^{(n)})  $ for the causal effect $ \xi $ is asymptotically linear if
\[
 \sqrt{n} (\hat{\xi}_n  - \xi) = n^{-1/2} \sum_{i = 1}^n \varphi(X^{(i)}) + o_{\mathcal{P}_{\theta; \pi}}(1) ,
\] 
for some influence function $ \varphi $ in the Hilbert space of centered, square-integrable  functions. For any RAL estimator $ \hat{\xi}_n $, the quantity $ \sqrt{n} (\hat{\xi}_n  - \xi) $ converges in distribution to $ \mathcal{N}(0, \mathrm{E}_{\mathcal{P}_{\theta_0, \pi}}[ \varphi \varphi^\top ])$, and this convergence is stable under all local perturbations $ \mathcal{P}_{\theta_0 + h / \sqrt{n}} $, $ h \in \Theta$, of the data generating distribution \citep[Thm. 25.20]{Vaart_1998}. 
To ensure regularity, consider parametric submodels $ \{ \mathcal{P}_{(\beta, \varepsilon_0 + \gamma h), \pi} : \gamma \in \mathbb{R} \} \subset \mathcal{M}_\pi $ passing through the truth $\mathcal{P}_{\beta_0, \varepsilon_0} $, for fixed $ h \in \mathcal{E} $.
\begin{assumption}[\cite{newey1990}, A1, or \cite{Vaart_1998}, Thm. 7.10] \label{assumption_regularity}
For each parametric submodel, let $ p_{\beta, \varepsilon_0 + \gamma h }$ denote the density function. Assume that all mappings $ (\beta, \gamma) \mapsto \sqrt {p_{\beta, \varepsilon_0 + \gamma h }(x)}$ are almost surely continuously differentiable at $ \theta_0 $, implying differentiability in quadratic mean. Further assume that the information matrix $ \E{ S_{\beta, \gamma} S_{\beta, \gamma}^\top }$ is finite and positive definite at $ \theta_0 $. 
\end{assumption}
Influence functions must satisfy orthogonality conditions with respect to the semiparametric tangent space $ \mathscr{T} $.  The semiparametric tangent space is, by definition, the $ L^2 $ closure of submodel tangent spaces, where a submodel tangent space is the linear span of its score function $ S_{\beta, \gamma}$ at $ \theta_0 $.
All influence functions are orthogonal to the nuisance tangent space $ \Lambda_\varepsilon $, constructed as the component of $ \mathscr{T} $ associated with the nuisance variation $ \varepsilon $. 
This ensures that influence functions are sensitive only to perturbations of the parameter of interest. 
Consequently, the set of influence functions is characterized by the affine space $ \varphi_0 + \mathscr{T}^\perp$, anchored at an arbitrary influence function $ \varphi_0 $ \citep[Thm. 4.3.]{Tsiatis_Semiparametric}. The efficient influence function minimizes the norm, or equivalently the asymptotic variance (with variances $ V_1 \preceq V_2 $ if $ V_2 - V_1 $ is positive semi-definite), and is characterized by the projection $ \varphi^{\textrm{eff}} = \Pi(\varphi_0 \vert \mathscr{T})$. 
This leads to the following overarching optimization problem.

\begin{problem} \label{problem_main}
Let $ c_0 > 0 $ denote the cost of measuring $ X_{C,t} $, and let the measurable cost functions $ c_1 \in \mathcal{C}(\mathbb{R}^{d_C + 1}, \mathbb{R}_{> 0})$,  $c_2 \in  \mathcal{C}(\mathbb{R}^{d_C + 1 + d_M}, \mathbb{R}_{> 0})$ represent the additional cost of measuring $ X_{M} $ after $ X_{C,t} $ is observed, and of measuring $ X_{r} $ after $ X_{C,t,M} $ is observed. For an average per-sample budget $ b_0 \in \left( c_0, c_0 + \E{c_1(X_{C,t}) + c_2(X_{C,t,M}) }\right] $, the goal is to minimize the asymptotic variance $ \mathrm{Var}_\infty( \hat{\xi}_n ; \pi) $ over all RAL estimators $\hat{\xi}_n$ of $ \xi $ and all sampling regimes induced by $ \mathcal{M}_{\pi = (\pi_1, \pi_2)} $:
\newcommand{\argminexprxi}{%
  \underset{
    \substack{
      \pi_1 \in \mathcal{C}(\mathbb{R}^{d_C + 1}) \\
      \pi_2 \in \mathcal{C}(\mathbb{R}^{d_C + d_M + 1}) \\
      \hat{\xi}_n \in \mathcal{E}_{\textrm{RAL}}(\xi, \mathcal{M}_\pi)
    }
  }{\operatorname{argmin}}
  \; \mathrm{Var}_\infty(\hat{\xi}_n;\pi)
}

\[
\begin{array}{c c l}
  \multirow{3}{*}{$\argminexprxi$}
  & \text{s.t.} & 0 < \pi_1(X_{C,t}) \le 1 \text{ a.e.,} \\
  &             & 0 < \pi_2(X_{C,t,M}) \le 1 \text{ a.e.,} \\
  &             & \mathrm{E}[c_0 + (\pi_1 c_1)(X_{C,t})
                  + \pi_1(X_{C,t})(\pi_2 c_2)(X_{C,t,M})] = b_0.
\end{array}
\]
\end{problem}
The optimal propensity functions $(\pi^\ast_1, \pi^\ast_2) $ specify the experimental design. Since the problem cannot be solved in practice without prior knowledge, we require access to an initial estimate of $ \mathcal{P}_{\theta, \pi} $.

\section{Optimized Causal Effect Estimation}
Since the asymptotic variance $\mathrm{Var}_\infty(\hat{\xi}_n; \pi) = \mathrm{E}[ \varphi \varphi^\top ] = \Var{\varphi}$ fully determines the limiting distribution of any RAL estimator, solving Problem \ref{problem_main} reduces to analyzing the influence function space and deriving the efficient influence function. A logical overview of the proceeding results is presented in Appendix \ref{appendix_structure_of_results}, clarifying the relationships between the full‑data and observed‑data influence functions established by the main theorems and lemmas.

\subsection{Efficient Influence Function in \texorpdfstring{$\mathcal{M}_1$}{M\_1} and its Optimal Augmentation in \texorpdfstring{$\mathcal{M}_\pi$}{M\_pi}} \label{section_eIF_opt_aug}
As a first step towards solving Problem~\ref{problem_main}, we characterize the minimum-variance RAL estimator under the full-data model $ \mathcal{M}_1 $. All major proofs are presented in Appendix \ref{appendix_proofs}.
\newcommand{\TextTheoremFullDataEIF}{Denote the residual error $ \varepsilon_r - \E{ \varepsilon_r | \varepsilon_t} $ by $\varepsilon_r^\perp $. In Model $\mathcal{M}_1$, the $0$-indexed row and column of $ \beta \in \mathcal{B} $ are structurally zero, and the efficient full-data influence function $ \varphi^{F, \text{eff}}$ therefore only has non-zero components on the $(t,M,r)\times(C,t,M)$ block. This remaining block is given by
\[ \resizebox{\textwidth}{!}{ $
  \begin{aligned}
  & \varphi^{F, \text{eff}}_{\beta_{(t,M,r)(C,t,M)}} =  \\ 
& \begin{pmatrix}
\varepsilon_t \varepsilon_C^\top \Var{\varepsilon_C}^{-1} & 0 & 0  \\ 
\varepsilon_M ( \varepsilon_C^\top \Var{\varepsilon_C}^{-1} - \varepsilon_t \Var{\varepsilon_t}^{-1} \beta_{tC}) &  \varepsilon_M \varepsilon_t \Var{\varepsilon_t}^{-1} & 0 \\ 
\varepsilon_r^\perp (\varepsilon_C^\top \Var{\varepsilon_C}^{-1} - \varepsilon_M^\top \Var{\varepsilon_M}^{-1}(\beta_{Mt}\beta_{tC} + \beta_{MC})) & 0 &  \varepsilon_r^\perp \varepsilon_M^\top \Var{\varepsilon_M}^{-1}  \\ 
\end{pmatrix}. 
  \end{aligned}
$ }\]
Consequently, the full-data efficient influence function in $ \mathcal{M}_1 $ for the causal effect $ \xi $ is
\[  \varphi_{\xi}^{F, \text{eff}}  = \beta_{rM} \varepsilon_M \varepsilon_t \Var{\varepsilon_t}^{-1} + \varepsilon_r^\perp \varepsilon_M^\top \Var{\varepsilon_M}^{-1} \beta_{Mt} .\]
}
\begin{theorem} \label{thm_full_data_EIF}
\TextTheoremFullDataEIF
\end{theorem}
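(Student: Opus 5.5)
We use the standard semiparametric recipe. First compute the nuisance tangent space $\Lambda_\varepsilon$ of $\mathcal{M}_1$, then find the residual of the parametric $\beta$-scores after projecting onto $\Lambda_\varepsilon$ (the efficient score). Then normalize by the inverse efficient information and apply the delta method to $\phi(\beta_{Mt},\beta_{rM}^\top)=\beta_{rM}\beta_{Mt}$.

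\textbf{Tangent space.} Since $\det(I-\beta)=1$, the density of $X$ is
\[
p_{\beta,\varepsilon}(x)=p_{\varepsilon_C}(x_C)\,p_{\varepsilon_{t,r}}\bigl(x_t-\beta_{tC}x_C,\;x_r-\beta_{rC}x_C-\beta_{rM}x_M\bigr)\,p_{\varepsilon_M}(x_M-\beta_{MC}x_C-\beta_{Mt}x_t),
\]
by the factorization \eqref{formula_varepsilon_factorization}. Under Assumption~\ref{assumption_regularity}, perturbing each factor separately gives
\[
\Lambda_\varepsilon=\Lambda_C\oplus\Lambda_{tr}\oplus\Lambda_M,
\]
where $\Lambda_C=\{a(\varepsilon_C):\E{a}=0\}$, $\Lambda_{tr}=\{a(\varepsilon_t,\varepsilon_r):\E{a}=0\}$ and $\Lambda_M=\{a(\varepsilon_M):\E{a}=0\}$. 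The three pieces are mutually orthogonal because the three error blocks are independent. The $\beta$-score for a row $i$ regressing on parents $j$ is $-\partial_{\varepsilon_i}\log p(\varepsilon)\,X_j^\top$.

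\textbf{Row-wise projection.} Rather than project these scores directly, I will use the characterization that influence functions are the elements orthogonal to $\Lambda_\varepsilon$ satisfying $\E{\varphi S_\beta^\top}=I$. I will then verify that the displayed matrix meets both conditions and lies in the tangent space $\mathscr T=\Lambda_\varepsilon+\mathrm{span}(S_\beta)$. Verification is row by row.

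\emph{$t$-row.} The candidate $\varepsilon_t\varepsilon_C^\top\Var{\varepsilon_C}^{-1}$ is orthogonal to $\Lambda_C$ and $\Lambda_{tr}$ because $\varepsilon_t$ and $\varepsilon_C$ are independent and mean zero. It is orthogonal to $\Lambda_M$ because $\varepsilon_M$ is independent of both.

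\emph{$M$-row.} We rewrite $X_C=\varepsilon_C$ and $X_t=\varepsilon_t+\beta_{tC}\varepsilon_C$, so that the parents of $M$ become independent errors. This produces the term $\varepsilon_C^\top\Var{\varepsilon_C}^{-1}-\varepsilon_t\Var{\varepsilon_t}^{-1}\beta_{tC}$, i.e.\ the OLS influence function of $M$ on $(C,t)$ written in the orthogonalized basis.

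\emph{$r$-row.} Here $\varepsilon_r$ is correlated with $\varepsilon_t$, so any function of $(\varepsilon_t,\varepsilon_r)$ must be removed. Replacing $\varepsilon_r$ by $\varepsilon_r^\perp$ and multiplying by functions of $(\varepsilon_C,\varepsilon_M)$ that are mean zero gives orthogonality to $\Lambda_{tr}$. Indeed, $\E{\varepsilon_r^\perp\mid\varepsilon_t}=0$ kills the cross terms, and the product with an independent mean-zero factor has zero conditional mean given $(\varepsilon_t,\varepsilon_r)$. The parents of $r$ are $(C,M)$ with $X_M=\varepsilon_M+(\beta_{Mt}\beta_{tC}+\beta_{MC})\varepsilon_C+\beta_{Mt}\varepsilon_t$. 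The $\varepsilon_t$ part of $X_M$ must be dropped, since $\varepsilon_t$ is not independent of $\varepsilon_r^\perp$ in the higher-moment sense. This explains the coefficient on $\varepsilon_M$ and why the $t$ column of the $r$-row is zero: $\beta_{rt}=0$ is structural.

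\textbf{Normalization and efficiency.} Next we check $\E{\varphi S_\beta^\top}=I$, using the Stein identity $\E{-\partial\log p(\varepsilon_i)\,g}=\E{\partial g}$. For the $r$-row the relevant derivative is with respect to $\varepsilon_r$ inside the joint $(\varepsilon_t,\varepsilon_r)$ density, and it yields $\partial_{\varepsilon_r}\varepsilon_r^\perp=1$. For efficiency, each row has the form (error or residual)$\times$(linear in parents). Such a product lies in $\Lambda_\varepsilon+\mathrm{span}(S_\beta)$, because linear-in-$x$ scores of location-type shifts of each factor span exactly these products after adding nuisance directions; equivalently, it is the projection of any influence function onto $\mathscr T$. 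I expect this membership-in-$\mathscr T$ step, especially for the $r$-row with non-Gaussian dependent $(\varepsilon_t,\varepsilon_r)$, to be the main obstacle. I would first try exhibiting explicit submodels: shifting $\varepsilon_r$ by a multiple of $\varepsilon_M$-dependent terms, absorbed via $\beta_{rM}$ perturbations combined with nuisance scores.

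\textbf{Causal effect.} Finally, the delta method gives
\[
\varphi_\xi^{F,\mathrm{eff}}=\beta_{rM}\,\varphi_{\beta_{Mt}}+\varphi_{\beta_{rM}}\,\beta_{Mt},
\]
which reads off the two displayed entries.
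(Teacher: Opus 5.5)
\textbf{Same route as the paper, but the efficiency step fails.} Your plan follows the paper's proof almost step for step. You factor the density, check that each displayed row is orthogonal to $\Lambda_\varepsilon$, and verify $\E{\varphi S_\beta^\top}=I$ by integration by parts; the paper differentiates the mean-zero identity, which is the same Stein identity. You then argue membership in $\mathscr T$ and finish by linearity in $(\beta_{Mt},\beta_{rM})$. The first two steps are fine: they show the displayed matrix is \emph{an} influence function, namely that of the OLS/moment-type estimator. The genuine gap is the step you flag yourself, and it is not merely hard. In $\mathcal M_1$ as defined it is false unless $\varepsilon_M$ and $(\varepsilon_t,\varepsilon_r)$ are Gaussian.

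\textbf{Why it fails.} Put $\psi_t=-\partial_{\varepsilon_t}\log p_{\varepsilon_{t,r}}$, $\psi_r=-\partial_{\varepsilon_r}\log p_{\varepsilon_{t,r}}$, $\psi_M=-\nabla\log p_{\varepsilon_M}$ and $B=\beta_{MC}+\beta_{Mt}\beta_{tC}$. The scores for $\beta_{tC},\beta_{MC},\beta_{Mt},\beta_{rC},\beta_{rM}$ are $\psi_t\varepsilon_C^\top$, $\psi_M\varepsilon_C^\top$, $\psi_M X_t$, $\psi_r\varepsilon_C^\top$ and $\psi_r X_M^\top$. All are orthogonal to $\Lambda_\varepsilon$ except the piece $\psi_r\varepsilon_t\beta_{Mt}^\top$ of the last one. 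That piece lies in $\Lambda_{\varepsilon_{t,r}}$, since it is orthogonal to $1,\varepsilon_t,\varepsilon_r$. Hence
$\mathscr T=\Lambda_\varepsilon\oplus\mathrm{span}\{\psi_t\varepsilon_C^\top,\psi_M\varepsilon_C^\top,\psi_M X_t,\psi_r\varepsilon_C^\top,\psi_r(\varepsilon_M+B\varepsilon_C)^\top\}$.
A candidate already orthogonal to $\Lambda_\varepsilon$ is in $\mathscr T$ only if it lies in this finite-dimensional span.
\begin{itemize}
\item Matching the $\beta_{rM}$ entry $\varepsilon_r^\perp\varepsilon_M^\top\Var{\varepsilon_M}^{-1}$ forces $\psi_r\propto\varepsilon_r^\perp$, i.e.\ $\varepsilon_r\mid\varepsilon_t$ Gaussian with constant variance.
\item The $t$-row forces $\varepsilon_t\in\mathrm{span}(\psi_t,\psi_r)$.
\item The $\beta_{Mt}$ entry forces $\psi_M$ to be linear.
\end{itemize}
For non-Gaussian errors these fail, e.g.\ for the bivariate $t_5$ errors of the simulation study. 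The efficient influence function is then built from the location scores $\psi$, as in adaptive estimation for regression with independent errors. It has strictly smaller variance than the displayed one, and the same holds for $\xi$ whenever $\beta_{Mt}\neq 0$.

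\textbf{Your proposed repair cannot work.} Neither the heuristic about ``location-type shifts'' nor explicit submodels can close this step. Every submodel score already lies in $\Lambda_\varepsilon+\mathrm{span}(S_\beta)$. Shifting $\varepsilon_r$ by an $\varepsilon_M$-dependent amount leaves $\mathcal E$ unless it is exactly a $\beta_{rM}$ move.

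\textbf{The paper has the same gap.} It does not supply the missing argument; its proof simply asserts that the candidate ``lies after projection in the tangent space $\mathscr T$''. What your steps, and the paper's, actually establish is that the displayed matrix is a valid full-data influence function. To get efficiency you need an extra hypothesis: Gaussian $\varepsilon_M$ and $(\varepsilon_t,\varepsilon_r)$. Then $\psi_M=\Var{\varepsilon_M}^{-1}\varepsilon_M$, $\psi_r=\varepsilon_r^\perp/\Var{\varepsilon_r\mid\varepsilon_t}$ and $\psi_t$ are all linear, the span above visibly contains your rows, and your argument closes at once. Otherwise the claim must be weakened from ``the efficient'' to ``an'' influence function.

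\textbf{Minor point.} Your $\Lambda_C,\Lambda_{tr},\Lambda_M$ omit the constraints $\E{a\,\varepsilon_j}=0$ implied by $\varepsilon\in L^2_0$. This is harmless here only because every $X_j$ is centred.
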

The proof decomposes the full-data efficient influence function $\varphi_{\xi}^{F, \text{eff}} $  into two orthogonal components, efficiently and separately targeting $ \beta_{Mt} $ and $ \beta_{rM} $. To transition from full-data to observed-data influence functions, we rely on an existing characterization of their geometry: Each observed-data influence function can be expressed as an augmentation of a full-data influence function.

\begin{lemma}[\cite{Tsiatis_Semiparametric}, Thm. 8.3.] \label{lemma_tsiatis}
Denote the tangent space in Model $ \mathcal{M}_\pi $ corresponding to variation in the propensity functions $ \pi $ by $ \Lambda_\pi $. The observed-data augmentation space is
\begin{align*}
\Lambda_{2,\pi} = &\left\{\sum_{\delta \in \{ 1, 2\}}  \frac{I(\Delta = \delta) - (1 - \pi_\delta)I(\Delta \geq \delta)}{\Pi_{j = 1}^\delta \pi_j} (h_\delta \circ G_\delta)  \, : \, \right. \\ 
&\qquad \qquad \left. h_1 \in L_0^2(\mathbb{R}^{d_C + 1}, \mathbb{R}), h_2 \in L_0^2(\mathbb{R}^{d_C + 1 + d_M}, \mathbb{R}) \right\}. 
\end{align*}
Each full-data influence function $ \varphi^F $ induces an functional family of augmented observed-data influence functions in Model $ \mathcal{M}_\pi $, denoted by
\[
IF^\pi(\varphi^F) =  \left\{ \left[\frac{I(\Delta = \infty)\varphi^F}{\pi_1 \pi_2} + h \right] - \Pi\left([\, \cdot \,] |  \Lambda_\pi \right) \, : \,  h \in \Lambda_{2,\pi} \right\},
\]
The union of these sets coincides with the entire space of observed-data influence functions.
\end{lemma}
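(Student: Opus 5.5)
The plan is to specialize Tsiatis' general monotone-coarsening theory (his Thm.~8.3 together with Thms.~7.2, 9.1 and 10.1) to the three-level monotone coarsening $\Delta\in\{1,2,\infty\}$ of Definition~\ref{definition_observed_data}. The known propensities enter only through $\Lambda_\pi$, so the nuisance tangent space of $\mathcal{M}_\pi$ splits as $\Lambda_\pi \oplus \Lambda_{\varepsilon}^{obs}$. Here $\Lambda_{\varepsilon}^{obs}=\{\E{\alpha(X)\mid \Delta, G_\Delta(X)} : \alpha\in\Lambda^F_\varepsilon\}$ is the observed-data image of the full-data nuisance tangent space.

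The first step is to identify the space of all mean-zero functions of the observed data whose conditional expectation given $X$ vanishes. Under monotone coarsening with coarsening at random (here $\pi_1$ depends only on $X_{C,t}$ and $\pi_2$ only on $X_{C,t,M}$), any such function has the form $\sum_\delta [I(\Delta=\delta)-(1-\pi_\delta)I(\Delta\ge\delta)]\,(\Pi_{j\le\delta}\pi_j)^{-1} h_\delta(G_\delta(X))$. I would verify this directly. First compute $\E{I(\Delta=\delta)-(1-\pi_\delta)I(\Delta\ge \delta)\mid X}=0$ using the product formulas for $\mathcal{P}_\pi(\Delta=\cdot\mid X)$. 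Conversely, write a general $g(\Delta,G_\Delta)$ as $\sum_\delta I(\Delta=\delta)g_\delta(G_\delta)$ and solve the constraint $\E{g\mid X}=0$ recursively from the finest level downward; this shows that $g$ is uniquely of the displayed form. This yields $\Lambda_{2,\pi}$.

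The second step is to show that each element of $IF^\pi(\varphi^F)$ is an observed-data influence function. Inverse weighting gives $\E{I(\Delta=\infty)\varphi^F/(\pi_1\pi_2)\mid X}=\varphi^F$, and because the element of $\Lambda_{2,\pi}$ has zero conditional mean, the whole bracket has conditional expectation $\varphi^F$. For any full-data nuisance score $\alpha$, its observed-data counterpart $\E{\alpha\mid \Delta,G_\Delta}$ then satisfies
\[
\E{[\,\cdot\,]\,\E{\alpha\mid\Delta,G_\Delta}}=\E{[\,\cdot\,]\alpha}=\E{\varphi^F\alpha}=0 .
\]
The same iterated-expectation computation transfers the pathwise-derivative identity for $\beta$, since scores in $\beta$ also arise as conditional expectations of full-data scores. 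Finally, subtracting the projection onto $\Lambda_\pi$ enforces orthogonality to $\Lambda_\pi$ without altering these identities. The identities survive because $\Lambda_\pi$ consists of functions of the form $\sum_\delta [I(\Delta=\delta)-\lambda_\delta I(\Delta\ge\delta)]\,a_\delta(G_\delta)$, each of which has conditional mean zero given $X$ and is therefore orthogonal to every observed-data score induced by $\theta$.

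The third step is the converse: every observed-data influence function $\varphi$ lies in some $IF^\pi(\varphi^F)$. I would set $\varphi^F:=\E{\varphi\mid X}$ and check that it is a full-data influence function. It is orthogonal to $\Lambda^F_\varepsilon$ because $\E{\varphi^F\alpha}=\E{\varphi\,\E{\alpha\mid\Delta,G_\Delta}}=0$, and the derivative identity transfers in the same way. The difference $\varphi-I(\Delta=\infty)\varphi^F/(\pi_1\pi_2)$ has conditional mean zero given $X$, so by the first step it lies in $\Lambda_{2,\pi}$. Since $\varphi\perp\Lambda_\pi$, subtracting the projection onto $\Lambda_\pi$ leaves $\varphi$ unchanged, which places $\varphi$ in $IF^\pi(\varphi^F)$.

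The main obstacle is this converse step. It requires positivity, $\pi_1,\pi_2$ bounded away from zero, so that the weighted terms are square-integrable; I would assume this or argue it via continuity on the support. It also requires the full-data tangent space to be rich enough that the map $\alpha\mapsto\E{\alpha\mid\Delta,G_\Delta}$ captures the whole observed nuisance tangent space. I would handle this by showing that the observed-data likelihood factorizes as a full-data marginal integral times a $\pi$-part, so that every observed parametric submodel score is the conditional expectation of a full-data score plus a score in $\pi$.
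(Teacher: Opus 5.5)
Your proposal is correct and follows the argument the paper relies on. The paper gives no proof of its own; it imports Tsiatis's Theorem~8.3 together with his monotone-coarsening form of $\Lambda_2$. You reconstruct exactly that reasoning: characterize $\{g:\E{g\mid X}=0\}$, push full-data influence functions forward by inverse weighting, and pull observed-data ones back via $\varphi^F=\E{\varphi\mid X}$. You also rightly flag the positivity caveat.

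One point should be stated explicitly instead of writing ``this yields $\Lambda_{2,\pi}$''. Your Step~1 produces $h_1=g_1$ and $h_2=\pi_1g_2+(1-\pi_1)g_1$, which need not be centred. So what you actually prove is the lemma with $h_\delta\in L^2$, as in Tsiatis, not $L^2_0$. That is in fact the true version. Take $\pi$ known, so $\Lambda_\pi=\{0\}$, and $\pi_1<1$ a.s. The element $1-I(\Delta\ge 2)/\pi_1(X_{C,t})$ has conditional mean zero given $X$ and corresponds uniquely to $h_1\equiv 1$, $h_2\equiv 0$. Adding it to any member of the $L^2_0$-restricted union therefore gives an influence function outside that union. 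The discrepancy is harmless downstream, because the optimal augmentation in Lemma~\ref{lemma_observed_data_EIF} uses the centred choices $h_\delta=\E{\varphi^F\mid G_\delta(X)}$.
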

This characterization enables us to identify the optimal full-data influence function $ \varphi^{\text{opt},\pi} $ within $ IF^\pi(\varphi^{F, \text{eff}}) $. It can be obtained by retaining the conditional mean term $ \mathbb{E} \left[ \varphi^{F, \text{eff}} | \varepsilon_C, \varepsilon_t \right] $ and weighting the remaining augmentation terms by indicators of the sampling stage variable $ \Delta $ and the inverse propensity functions $ \pi $.

\newcommand{\TextLemmaObservedDataEIF}{Let $ \pi_1 \in \mathcal{C}(\mathbb{R}^{d_C + 1}, (0, 1])$ and $ \pi_2 \in \mathcal{C}(\mathbb{R}^{d_C + d_M + 1}, (0, 1]) $ be given. Under the Model $ \mathcal{M}_\pi $, the optimal observed‑data influence‑function augmentation of the full‑data efficient influence function for the $(t,M,r)\times(C,t,M)$ block of the parameter matrix $\beta $ and for the causal effect $ \xi $~are
\begin{align*}
\varphi_{\beta_{(t,M,r)(C,t,M)}}^{\text{opt},\pi}(X) & = \begin{pmatrix}
\text{diag}(1) & 0 & 0  \\
0 & \frac{\text{diag}\left(I(\Delta \geq 2)\right)}{\pi_1(X_{C,t})}& 0 \\
0 & 0 & \frac{\text{diag}\left(I(\Delta = \infty)\right)}{\pi_1(X_{C,t})\pi_2(X_{C,t,M})} \\
\end{pmatrix} \varphi^{F,\text{eff}}_{\beta_{(t,M,r)(C,t,M)}}(X), \quad \text{and } \\ 
\varphi_{\xi}^{\text{opt},\pi}(X) & = \frac{I(\Delta \geq 2)\beta_{rM} \varepsilon_M \varepsilon_t \Var{\varepsilon_t}^{-1} }{\pi_1(X_{C,t})} + \frac{I(\Delta = \infty)\varepsilon_r^\perp \varepsilon_M^\top \Var{\varepsilon_M}^{-1} \beta_{Mt}}{\pi_1(X_{C,t})\pi_2(X_{C,t,M})}.
\end{align*}
}
\begin{lemma} \label{lemma_observed_data_EIF}
\TextLemmaObservedDataEIF
\end{lemma}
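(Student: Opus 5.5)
We take Lemma~\ref{lemma_tsiatis} as the starting point. The propensities $\pi$ are known, so $\Lambda_\pi=\{0\}$ and the projection term in $IF^\pi(\varphi^{F})$ vanishes. The optimal element of $IF^\pi(\varphi^{F,\mathrm{eff}})$ is therefore $\frac{I(\Delta=\infty)\varphi^{F,\mathrm{eff}}}{\pi_1\pi_2} - \Pi\bigl(\frac{I(\Delta=\infty)\varphi^{F,\mathrm{eff}}}{\pi_1\pi_2}\,\big|\,\Lambda_{2,\pi}\bigr)$. For monotone coarsening this projection is known in closed form \citep[Thm.~10.1/10.4]{Tsiatis_Semiparametric}. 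Choosing $h_\delta = \E{\varphi^{F}\mid G_\delta(X)}$, up to sign, and telescoping gives
\begin{align*}
\varphi^{\mathrm{opt},\pi} = \E{\varphi^F \mid G_1} + \frac{I(\Delta\ge 2)}{\pi_1}\bigl(\E{\varphi^F\mid G_2}-\E{\varphi^F\mid G_1}\bigr) + \frac{I(\Delta=\infty)}{\pi_1\pi_2}\bigl(\varphi^F - \E{\varphi^F\mid G_2}\bigr).
\end{align*}
This telescoping identity is algebraic. To confirm that it is the projection, we check that the residual $\frac{I(\Delta=\infty)\varphi^F}{\pi_1\pi_2}-\varphi^{\mathrm{opt},\pi}$ lies in $\Lambda_{2,\pi}$. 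We also check that $\varphi^{\mathrm{opt},\pi}$ is orthogonal to every element of $\Lambda_{2,\pi}$, using $\E{I(\Delta=\delta)-(1-\pi_\delta)I(\Delta\ge\delta)\mid X, \Delta\ge\delta}=0$ together with the nested conditioning structure.

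The remaining work is to evaluate the conditional expectations for each row of the block in Theorem~\ref{thm_full_data_EIF}. Note that $G_1=X_{C,t}$ is in bijection with $(\varepsilon_C,\varepsilon_t)$, and $G_2$ is in bijection with $(\varepsilon_C,\varepsilon_t,\varepsilon_M)$, since $\beta$ is lower triangular. The factorization $\mathcal{P}_\varepsilon=\mathcal{P}_{\varepsilon_C}\otimes\mathcal{P}_{\varepsilon_{t,r}}\otimes\mathcal{P}_{\varepsilon_M}$ and the mean-zero property then give the following.

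\textbf{Row $t$.} The entry $\varepsilon_t\varepsilon_C^\top\Var{\varepsilon_C}^{-1}$ is $G_1$-measurable, so both increments vanish and the weight is $1$.

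\textbf{Row $M$.} Each entry is $\varepsilon_M$ times a $G_1$-measurable factor. Since $\varepsilon_M$ is independent of $(\varepsilon_C,\varepsilon_t)$ with mean zero, $\E{\cdot\mid G_1}=0$, while the entry is $G_2$-measurable. The weight is therefore $I(\Delta\ge2)/\pi_1$.

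\textbf{Row $r$.} Each entry is $\varepsilon_r^\perp$ times a $G_2$-measurable factor. We have $\E{\varepsilon_r^\perp\mid \varepsilon_C,\varepsilon_t,\varepsilon_M}=\E{\varepsilon_r^\perp\mid\varepsilon_t}=0$, because $\varepsilon_r^\perp$ depends only on $\varepsilon_{t,r}$, which is independent of $(\varepsilon_C,\varepsilon_M)$, and it is a conditional-mean residual. Both conditional means vanish and the weight is $I(\Delta=\infty)/(\pi_1\pi_2)$.

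The $\xi$ statement then follows from the delta method. The influence function of $\phi(\beta_{Mt},\beta_{rM}^\top)$ is $\beta_{rM}\varphi_{\beta_{Mt}}+\varphi_{\beta_{rM}}\beta_{Mt}$, which is linear, so the row weights carry over directly. Linearity also commutes with the projection, so the result is the optimal element for $\xi$ as well.

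The main obstacle is the identification step. One has to justify that the optimal element of $IF^\pi(\varphi^{F,\mathrm{eff}})$ is exactly the projection residual, i.e.\ that $\Lambda_\pi$ contributes nothing, and then verify rigorously the orthogonality of the telescoped form to $\Lambda_{2,\pi}$. The orthogonality check is where I would work carefully, using iterated expectations over $\Delta$ given $X$ at each stage. The conditional-expectation computations are routine once the bijection between $G_\delta$ and the corresponding noise blocks is recorded.
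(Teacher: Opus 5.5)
Your proposal is correct and follows essentially the same route as the paper. Both invoke the closed-form optimal augmentation for monotone coarsening from Tsiatis (Thm.~10.1/10.4), where your telescoped expression is algebraically identical to the paper's $\mathcal{J}(\varphi^\ast)$, then check row by row that each entry's conditional mean given $G_1$ or $G_2$ is either $0$ or the entry itself, and obtain the $\xi$ case by linearity; you add justification the paper omits (the error-block independence argument, the explicit orthogonality check, and why $\Lambda_\pi$ contributes nothing), but the argument has the same structure.
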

The optimized influence function $ \varphi^{\text{opt},\pi} $ might not be an efficient influence function in the observed-data Model $ \mathcal{M}_\pi $. Recall that every full-data influence function lies in the affine linear space $ \varphi^{F,\text{eff}} + \Lambda_\varepsilon^\perp $ as specified by Equation \eqref{formula_nuisance_orthogonal_compelement}. By linearity of the operator $ \mathcal{J} $, which maps a full-data influence function $ \varphi^F $ to its optimum in $ IF^\pi(\varphi^F) $, the efficient observed-data influence function is given as $ \varphi^{F,\text{eff}} + u $, where the orthogonal nuisance tangent space element $ u \in \Lambda_\varepsilon^\perp $ is chosen to minimize the variance 
\( \Var{\mathcal{J}(\varphi^\text{F,eff})} + \Var{\mathcal{J}(u)} + 2 \mathrm{Cov}\left(\mathcal{J}(u), \mathcal{J}(\varphi^\text{F,eff})\right) \). Nevertheless, the RAL estimator associated with $ \varphi^{\text{opt},\pi} $ achieves an asymptotic variance no larger than that of any other RAL estimator arising from an augmentation in $ IF^\pi(\varphi^{F, \text{eff}}) $. We refer to Appendix \ref{appendix_structure_of_results} for a visualization.

\subsection{Optimized Observed-Data Estimator \texorpdfstring{$\hat{\xi}_n$}{xi\_n} and Optimized Propensity \texorpdfstring{$\pi^{\text{opt}}$}{pi\textasciicircum opt}} \label{section_optimized_EIF}
Efficient RAL estimators $ \hat{\beta}_{Mt}  $ and $ \hat{\beta}_{rM} $ are obtained by solving estimating equations associated with the efficient influence function. Concretely, they correspond to weighted linear equations on the partial datasets defined by the indicators $ I(\Delta^{(i)} \geq 2) $ and $ I(\Delta^{(i)} = \infty) $, respectively. Knowing the optimal estimator for each model $ \mathcal{M}_\pi $ allows solving a restricted version of Problem \ref{problem_main} on the subset of observed-data augmentations of $ \varphi^{F, \text{eff}} $, in particular $ IF^\pi(\varphi^{F, \text{eff}})$. This restriction is purely technical, and Figure \ref{figure_proof_logic} in Appendix \ref{appendix_structure_of_results} illustrates its implied admissible influence function set.
\newcommand{\TextLemmaEfficientEstimator}{The observed-data optimized RAL estimators $ \hat{\beta}_{Mt, n}  $ and $ \hat{\beta}_{rM, n} $ are solutions to a sequence of linear equations on the full and partially observed datasets. Their product estimator $ \hat\xi_n = \hat{\beta}_{rM, n} \hat{\beta}_{Mt, n}  $ is the optimized RAL estimator for the causal effect $ \xi $ in Model $ \mathcal{M}_\pi $. The exact formulas are provided in Appendix~\ref{appendix_proof_nested_least_squares}.}
\begin{lemma} \label{lemma_linear_estimator}
 \TextLemmaEfficientEstimator
\end{lemma}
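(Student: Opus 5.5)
The plan is to read the estimators directly off the optimized influence function of Lemma~\ref{lemma_observed_data_EIF}. I will write the corresponding estimating equations row by row and solve them in the order of the structural equations. Each row of $\varphi^{\text{opt},\pi}_{\beta_{(t,M,r)(C,t,M)}}$ has the form $W\cdot \varepsilon_j\,(\text{regressor residual})^\top \Var{\cdot}^{-1}$, with weight $W=1$, $W=I(\Delta\ge 2)/\pi_1$ or $W=I(\Delta=\infty)/(\pi_1\pi_2)$. Each $\varepsilon_j$ is the residual $X_j-\beta_{j\cdot}X$. So setting the empirical mean of the unnormalized row (dropping the constant inverse variances) to zero gives a weighted least-squares normal equation.

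First I will treat the $t$-row. With weight $1$ on all $n$ samples, $\varepsilon_C=X_C$ and $\varepsilon_t = X_t-\beta_{tC}X_C$, so $\hat\beta_{tC,n}$ is ordinary least squares of $X_t$ on $X_C$.

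Second I will treat the $M$-row, which uses weight $I(\Delta\ge2)/\pi_1$. The two blocks of the row, multiplied by $\varepsilon_C^\top$ and by $\varepsilon_t$, span the same space as the regressors $(X_C,X_t)$ up to the invertible triangular map $(\varepsilon_C,\varepsilon_t)\mapsto(X_C,X_t)$. Hence the equations are equivalent to inverse-propensity-weighted least squares of $X_M$ on $(X_C,X_t)$ over the samples with $\Delta\ge2$. This yields $(\hat\beta_{MC,n},\hat\beta_{Mt,n})$; the plug-in $\hat\beta_{tC,n}$ enters only through this reparametrization.

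Third I will treat the $r$-row, with weight $I(\Delta=\infty)/(\pi_1\pi_2)$. The residual is $\varepsilon_r^\perp = X_r-\beta_{rC}X_C-\beta_{rM}X_M - \E{\varepsilon_r|\varepsilon_t}$. In the linear model the projection is $\E{\varepsilon_r|\varepsilon_t}=\gamma\varepsilon_t$, and I will justify linearity by the Gaussian or linear-projection interpretation; in general it is the $L^2$ projection onto the span. So the $r$-equation becomes a weighted regression of $X_r$ on $(X_C,X_M,\hat\varepsilon_t)$ over the complete cases, with $\hat\varepsilon_t=X_t-\hat\beta_{tC,n}X_C$. The regressor $X_t$ does not appear because $\beta_{rt}=0$, but $\hat\varepsilon_t$ absorbs the confounding. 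Again a triangular change of variables maps the scores $\varepsilon_C,\varepsilon_M$ onto the regressors $X_C,X_M$. This step also requires an extra orthogonality equation $\E{W\varepsilon_r^\perp\varepsilon_t}=0$ to estimate $\gamma$.

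With these equations in place, I will establish the asymptotic properties. Standard Z-estimation arguments (van der Vaart, Thm.~5.21) apply: the map is smooth, the Jacobians are invertible under Assumption~\ref{assumption_regularity} and positivity of $\pi$, and by the orthogonality of the stages the plug-in nuisances do not alter first-order behaviour. Together these give asymptotic linearity with influence functions equal to the corresponding rows of $\varphi^{\text{opt},\pi}$. Finally, applying the delta method to $\phi(\beta_{Mt},\beta_{rM})=\beta_{rM}\beta_{Mt}$ gives the influence function $\beta_{rM}\varphi^{\text{opt},\pi}_{\beta_{Mt}}+\varphi^{\text{opt},\pi}_{\beta_{rM}}\beta_{Mt}$, which is exactly $\varphi^{\text{opt},\pi}_\xi$. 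Regularity follows since the estimator is asymptotically linear with an influence function lying in $IF^\pi(\varphi^{F,\mathrm{eff}})$, by Lemma~\ref{lemma_tsiatis}.

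I expect the main obstacle to be the third stage: showing that estimating $\gamma$ and plugging in $\hat\varepsilon_t$ (estimated on all data, not only complete cases) contributes no additional term to the influence function. I would first try to verify the Neyman orthogonality $\E{W\,\partial_{\beta_{tC}}(\varepsilon_r^\perp\varepsilon_M)}=0$. This uses that $\varepsilon_M$ is independent of $(\varepsilon_t,\varepsilon_r)$ and that $\E{W\mid X}=1$. If that holds, the cross terms vanish and the product estimator inherits the claimed influence function.
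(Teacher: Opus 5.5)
Your route is the paper's: read the estimating equations off $\varphi^{\text{opt},\pi}$, solve them stage by stage, use $\gamma\hat\varepsilon_t$ as a control function in the $X_r$-stage, and finish with the delta method. Your reduction of the paper's IV-type systems to inverse-propensity-weighted least squares is correct and useful. $(X_C,\hat\varepsilon_t,\hat\varepsilon_M)$ is a fixed invertible triangular transform of $(X_C,\hat\varepsilon_t,X_M)$, which spans the same space as $(X_C,X_t,X_M)$. Hence $\hat\beta_{rM,n}$ is just the $X_M$-coefficient of the complete-case weighted regression of $X_r$ on $(X_C,X_t,X_M)$, and it does not depend numerically on $\hat\beta_{tC,n}$ or $\hat\varepsilon_M$; the same holds for $\hat\beta_{Mt,n}$. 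So the generated-regressor problem you flag as the main obstacle vanishes identically. The genuine gap is the step $\E{\varepsilon_r\mid\varepsilon_t}=\gamma\varepsilon_t$. The nuisance class $\mathcal{E}$ leaves $\mathcal{P}_{\varepsilon_{t,r}}$ unrestricted beyond absolute continuity and finite variance, so there is no Gaussian interpretation to appeal to. Moreover, the $L^2$ projection onto $\mathrm{span}(\varepsilon_t)$ is the best linear predictor, whereas $\varepsilon_r^\perp$ in Lemma~\ref{lemma_observed_data_EIF} is defined through the projection onto all of $L^2(\sigma(\varepsilon_t))$. These coincide only when the conditional mean is linear.

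To see that this matters, let $\gamma=\E{\varepsilon_r\varepsilon_t}/\Var{\varepsilon_t}$, $\delta(\varepsilon_t):=\E{\varepsilon_r\mid\varepsilon_t}-\gamma\varepsilon_t$ and $W=I(\Delta=\infty)/(\pi_1\pi_2)$. Your estimator remains consistent and asymptotically linear. This is because $(\varepsilon_t,\varepsilon_r)$ is independent of $(\varepsilon_C,\varepsilon_M)$ and $\E{(\varepsilon_r-\gamma\varepsilon_t)\varepsilon_t}=0$, so the population equations still vanish at the truth. However, its influence function for $\beta_{rM}$ is $W\bigl(\varepsilon_r^\perp+\delta(\varepsilon_t)\bigr)\varepsilon_M^\top\Var{\varepsilon_M}^{-1}$, not $\varphi^{\text{opt},\pi}_{\beta_{rM}}$. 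Already for $\pi\equiv(1,1)$, the asymptotic variance of $\hat\xi_n$ exceeds the value in Lemma~\ref{lemma_effect_EIF_var} by $\E{\delta(\varepsilon_t)^2}\,\beta_{Mt}^\top\Var{\varepsilon_M}^{-1}\beta_{Mt}$. So your final identification of the delta-method influence function with $\varphi^{\text{opt},\pi}_\xi$ fails whenever $\delta\not\equiv 0$. The paper's proof makes the same substitution, justified by orthogonality of $\varphi^{\text{opt}}$ to $\Lambda_{\varepsilon_{r|t}}$. That argument yields only the unbiasedness just described (any function of $\varepsilon_t$ may replace the conditional mean without losing consistency), not invariance of the influence function, so it does not close the gap either. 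There are two ways to repair it. You can add the assumption that $\E{\varepsilon_r\mid\varepsilon_t}$ is linear (Gaussian or elliptical errors, e.g.\ the bivariate $t$ of the simulations), under which your argument is complete. Alternatively, replace $\gamma\hat\varepsilon_t$ by a cross-fitted nonparametric estimate of $\E{\varepsilon_r\mid\varepsilon_t}$, and use $\E{W h(\varepsilon_t)\varepsilon_M}=0$ for all $h$ to show that its estimation error enters only at order $o_P(n^{-1/2})$. In that case the $X_r$-stage becomes a partially linear regression rather than a finite system of linear equations.
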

Note that the equations determining the estimators $ \hat\beta_{rC,n} $ and $ \hat\beta_{rM,n}$ are conceptually different from a sequence of weighted linear regressions. 
\newcommand{\TextLemmaEffectEIFVar}{
The asymptotic variance  $ \mathrm{Var}_\infty(\hat{\xi}_n;\pi) $ of the optimized estimator $ \hat\xi_n $ for the causal effect $ \xi $ in Model $ \mathcal{M}_\pi $,  $ \pi_1 \in \mathcal{C}(\mathbb{R}^{d_C + 1}, (0, 1]) $, $ \pi_2 \in \mathcal{C}(\mathbb{R}^{d_C + d_M + 1}, (0, 1]) $, is determined by
\[ \Var{\varphi_\xi^{\text{opt},\pi}} = \E{ \frac{\varepsilon_t^2 \beta_{rM}  \Var{\varepsilon_M} \beta_{rM}^\top }{ \Var{\varepsilon_t}^{2} \pi_1(X_{C,t})} } +  \E{ \frac{\left(\varepsilon_M^\top \Var{\varepsilon_M}^{-1} \beta_{Mt}\right)^2 \Var{ \varepsilon_r\,  | \, \varepsilon_{t} } }{\pi_1(X_{C,t})\pi_2(X_{C,t,M})}} .
\]}
\begin{lemma} \label{lemma_effect_EIF_var}
\TextLemmaEffectEIFVar
\end{lemma}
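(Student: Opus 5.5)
We compute $\Var{\varphi_\xi^{\text{opt},\pi}}$ directly from the expression in Lemma~\ref{lemma_observed_data_EIF}. By Lemma~\ref{lemma_linear_estimator}, $\hat\xi_n$ has influence function $\varphi_\xi^{\text{opt},\pi}$, so this variance is $\mathrm{Var}_\infty(\hat\xi_n;\pi)$. Write
\[
\varphi_\xi^{\text{opt},\pi} = A + B, \qquad A = \frac{I(\Delta\ge 2)\,\beta_{rM}\varepsilon_M\varepsilon_t}{\Var{\varepsilon_t}\pi_1}, \qquad B = \frac{I(\Delta=\infty)\,\varepsilon_r^\perp\varepsilon_M^\top\Var{\varepsilon_M}^{-1}\beta_{Mt}}{\pi_1\pi_2}.
\]
Both terms have mean zero. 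For $A$, condition on $X_{C,t}$: $\varepsilon_M$ is independent of $(\varepsilon_C,\varepsilon_t)$ and hence of $X_{C,t}$, and $\E{\varepsilon_M}=0$. For $B$, condition on $X_{C,t,M}$: $\varepsilon_r^\perp$ has conditional mean zero given $\varepsilon_t$, and $\varepsilon_r$ is independent of $(\varepsilon_C,\varepsilon_M)$ jointly with $\varepsilon_t$. The proof then has three steps: compute $\E{A^2}$, compute $\E{B^2}$, and show the cross term vanishes.

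\emph{Cross term.} On the event where both indicators are nonzero, $\Delta=\infty$. Conditioning on $(\varepsilon_C,\varepsilon_t,\varepsilon_M)$, which determines $X_{C,t,M}$ and hence $\pi_1,\pi_2$, the product $AB$ has factor
\[
\E{\varepsilon_r^\perp \mid \varepsilon_C,\varepsilon_t,\varepsilon_M}\,\mathcal{P}(\Delta=\infty\mid X).
\]
Here $\Delta$ is conditionally independent of $\varepsilon_r$ given $X_{C,t,M}$, since $\pi$ depends only on $X_{C,t,M}$. By the factorization $\mathcal P_\varepsilon=\mathcal P_{\varepsilon_C}\otimes\mathcal P_{\varepsilon_{t,r}}\otimes\mathcal P_{\varepsilon_M}$, we get $\E{\varepsilon_r^\perp\mid\varepsilon_C,\varepsilon_t,\varepsilon_M}=\E{\varepsilon_r^\perp\mid\varepsilon_t}=0$. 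So $\E{AB}=0$.

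\emph{Square of $A$.} The indicator gives $\E{I(\Delta\ge2)\mid X}=\pi_1(X_{C,t})$, which cancels one power of $\pi_1$ and leaves $\pi_1^{-1}$. Then condition on $X_{C,t}$, equivalently on $(\varepsilon_C,\varepsilon_t)$. Independence of $\varepsilon_M$ gives
\[
\E{(\beta_{rM}\varepsilon_M)^2\mid \varepsilon_C,\varepsilon_t}=\beta_{rM}\Var{\varepsilon_M}\beta_{rM}^\top,
\]
which yields the first expectation in the statement.

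\emph{Square of $B$.} Similarly, $\E{I(\Delta=\infty)\mid X}=\pi_1\pi_2$ cancels one power of the propensities. Conditioning on $(\varepsilon_C,\varepsilon_t,\varepsilon_M)$, we need
\[
\E{(\varepsilon_r^\perp)^2\mid\varepsilon_C,\varepsilon_t,\varepsilon_M}=\E{(\varepsilon_r^\perp)^2\mid\varepsilon_t}=\Var{\varepsilon_r\mid\varepsilon_t},
\]
again using the independence of $(\varepsilon_t,\varepsilon_r)$ from $(\varepsilon_C,\varepsilon_M)$. This gives the second expectation. Note that $\Var{\varepsilon_r\mid\varepsilon_t}$ is a random function of $\varepsilon_t$, so it stays inside the expectation together with $\pi_1\pi_2$.

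The main point of care is the conditioning structure. The propensities are functions of $X_{C,t}$ and $X_{C,t,M}$, not of the errors directly, so each step needs the fact that $X_{C,t,M}$ is a bijective linear function of $(\varepsilon_C,\varepsilon_t,\varepsilon_M)$ because $I-\beta$ is unitriangular. This lets us replace conditioning on observed variables by conditioning on the errors and then apply the independence blocks of $\mathcal E$. If $\varepsilon_M$ is vector-valued, we read the scalar $\beta_{rM}\varepsilon_M\varepsilon_t$ as $\beta_{rM}\varepsilon_M$ times $\varepsilon_t$, so its square produces the quadratic form $\beta_{rM}\Var{\varepsilon_M}\beta_{rM}^\top$ as stated.
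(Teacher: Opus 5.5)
Your proof is correct and follows the paper's argument. You expand the square and kill the cross term via $\E{\varepsilon_r^\perp \mid \varepsilon_{C,t,M}} = \E{\varepsilon_r^\perp \mid \varepsilon_t} = 0$, then evaluate the two squared terms using $\mathcal{P}(\Delta \ge 2 \mid X) = \pi_1$, $\mathcal{P}(\Delta = \infty \mid X) = \pi_1\pi_2$ and the block independence of the errors. Your conditioning on $(\varepsilon_C,\varepsilon_t,\varepsilon_M)$ for the cross term is slightly more careful than the paper's appeal to the unconditional moment $\E{\varepsilon_r^\perp \varepsilon_t \varepsilon_M \varepsilon_M^\top} = 0$, which on its own does not account for the propensity weights.
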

\newcommand{\TextTheoremOptimization}{Let $ X $ follow model $ \mathcal{M}_\pi $. Let the base cost $ c_0 $, cost functions $ c_1, c_2 $ as well as the average budget $ b_0 $ be according to the specifications of Problem \ref{problem_main}. Under the additional restriction that admissible influence functions $\varphi $ satisfy $ \varphi \in IF^\pi(\varphi^{F, \text{eff}}) $, Problem \ref{problem_main} is uniquely solvable almost everywhere. Define the leverages \( g_1(X_{C,t}) := \beta_{rM} \Var{\varepsilon_M} \beta_{rM}^\top\frac{\varepsilon_t^2}{\Var{\varepsilon_t}^2} \), and \( g_2(X_{C,t,M}) := \Var{ \varepsilon_r \mid \varepsilon_{t} } \left(\varepsilon_M^\top \Var{\varepsilon_M}^{-1} \beta_{Mt}\right)^2 \). Then the optimal propensity $ \pi^\ast$ is given by \[  (\pi_1^\ast, \pi_2^\ast) =
	 \begin{cases}
	  \left(\min\left(1, \max\left( \sqrt\frac{g_1}{\lambda c_1}, \sqrt{\frac{g_1 + \E{g_2| \varepsilon_{C,t}}}{\lambda(c_1 + \E{c_2| \varepsilon_{C,t}})}}\right)\right), \min\left(1, \sqrt\frac{g_2 c_1}{g_1 c_2}\right) \right), & g_1 < \lambda c_1 , \\ 
	  \left(1, \min\left(1, \sqrt{\frac{g_2}{\lambda c_2}}\right)\right), & g_1 \geq \lambda c_1 ,\\ 
 \end{cases}
	 \] where the constant $ \lambda > 0 $ is chosen to satisfy the problem's budget constraint \[ \mathrm{E}[c_0 + (\pi_1 c_1)(X_{C,t}) + \pi_1(X_{C,t})(\pi_2 c_2)(X_{C,t,M})] = b_0. \]}
\begin{theorem} \label{theorem_optimization} 
\TextTheoremOptimization
\end{theorem}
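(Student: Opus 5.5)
By Lemma~\ref{lemma_observed_data_EIF}, $\varphi_\xi^{\text{opt},\pi}$ attains the smallest asymptotic variance within $IF^\pi(\varphi^{F,\text{eff}})$ for each fixed $\pi$. The restricted Problem~\ref{problem_main} therefore reduces to minimizing the expression in Lemma~\ref{lemma_effect_EIF_var} over $\pi$. With the leverages $g_1,g_2$ of the statement, and using that $\varepsilon_{C,t}$ and $X_{C,t}$ (resp.\ $\varepsilon_{C,t,M}$ and $X_{C,t,M}$) generate the same $\sigma$-algebras, the problem becomes
\[
\min_{\pi_1,\pi_2}\; \E{\frac{g_1}{\pi_1} + \frac{g_2}{\pi_1\pi_2}}
\quad \text{s.t.} \quad \E{\pi_1 c_1 + \pi_1\pi_2 c_2} = b_0 - c_0,\qquad 0<\pi_1,\pi_2\le 1 .
\]
I would reparametrize by $q := \pi_1\pi_2 \in (0,\pi_1]$. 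The objective $\E{g_1/\pi_1 + g_2/q}$ is then strictly convex in $(\pi_1,q)$ wherever $g_1,g_2>0$, which holds a.e.\ by absolute continuity. The constraint $\E{\pi_1 c_1 + q c_2}$ is linear, and the feasible set $\{0<q\le\pi_1\le 1\}$ is convex. So I would introduce a single multiplier $\lambda>0$ for the budget and minimize the Lagrangian pointwise. This is legitimate because the Lagrangian is an expectation of a pointwise integrand in which $\pi_1$ depends only on $X_{C,t}$ and $q$ on $X_{C,t,M}$. Pointwise strict convexity gives uniqueness a.e., and monotonicity of the budget in $\lambda$, by a continuity/intermediate value argument, gives existence of $\lambda$ for $b_0$ in the stated range.

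The pointwise problem is solved in two nested steps. In the \emph{inner step}, for fixed $X_{C,t}$ and $\pi_1$, minimize $g_2/q+\lambda q c_2$ over $q\in(0,\pi_1]$. The unconstrained minimizer is $q=\sqrt{g_2/(\lambda c_2)}$, so $\pi_2=\min(1,\sqrt{g_2/(\lambda c_2)}/\pi_1)$. In the \emph{outer step}, substitute this back and minimize over $\pi_1$ the conditional expectation given $\varepsilon_{C,t}$. This yields a KKT condition: $\pi_1$ balances $g_1+\E{g_2 I(\pi_2=1)/\pi_2\mid \varepsilon_{C,t}}$ against $\lambda(c_1+\E{c_2 I(\pi_2=1)\mid\varepsilon_{C,t}})$.

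For the case split, if $g_1\ge\lambda c_1$ the unconstrained $\pi_1$-stationarity already exceeds one, so $\pi_1=1$ and $\pi_2=\min(1,\sqrt{g_2/(\lambda c_2)})$, which is the second case. If $g_1<\lambda c_1$, two sub-regimes arise. When the second-stage constraint is inactive, $\pi_1=\sqrt{g_1/(\lambda c_1)}$ and $\pi_2 = \sqrt{g_2/(\lambda c_2)}/\pi_1=\sqrt{g_2c_1/(g_1c_2)}$. When it is active ($\pi_2=1$), $\pi_1$ solves the merged stage, $\pi_1=\sqrt{(g_1+\E{g_2|\varepsilon_{C,t}})/(\lambda(c_1+\E{c_2|\varepsilon_{C,t}}))}$. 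I would show that the correct value is the maximum of these two candidates, capped at $1$, by comparing the one-sided derivatives at the kink $\pi_1=q$.

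The main obstacle is this outer step. The conditional expectation over $X_M$ given $X_{C,t}$ mixes points where the cap $\pi_2=1$ is active with points where it is not, so the clean ``$\max$'' formula needs justification. My first attempt will be to verify KKT optimality of the stated candidate directly via convexity in $(\pi_1,q)$: I would check subgradient conditions in the regions $q<\pi_1$ and $q=\pi_1$ separately. That check avoids solving the mixed equation explicitly, since any KKT point of a strictly convex problem is the unique minimizer.
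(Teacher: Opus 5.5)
Your reduction, the reparametrization $q=\pi_1\pi_2$, the inner step $\pi_2=\min\bigl(1,\sqrt{g_2/(\lambda c_2)}/\pi_1\bigr)$, your outer KKT condition and the case $g_1\ge\lambda c_1$ are all correct. The proposal breaks at the step you flagged. It cannot be repaired, because the ``$\max$'' formula is in general not the minimizer, so your KKT check will fail. Since $\pi_2=1$ exactly when $g_2\ge\lambda c_2\pi_1^2$, your KKT condition for an interior $\pi_1$ reads
\[
\lambda c_1\pi_1^2-g_1=\mathrm{E}\bigl[(g_2-\lambda c_2\pi_1^2)_+\mid X_{C,t}\bigr].
\]
The left side is strictly increasing and the right side non-increasing in $\pi_1$, so there is a unique root $r$. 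Set $A:=\sqrt{g_1/(\lambda c_1)}$ and $B:=\sqrt{(g_1+\mathrm{E}[g_2\mid X_{C,t}])/(\lambda(c_1+\mathrm{E}[c_2\mid X_{C,t}]))}$. At $A$ the left side is $0$. At $B$ it equals $\mathrm{E}[g_2-\lambda c_2B^2\mid X_{C,t}]$. In both cases it is at most the right side, hence $r\ge\max(A,B)$. Equality holds only if, given $X_{C,t}$, $g_2/c_2$ lies almost surely on one side of $\lambda r^2$.

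This one-sidedness fails generically. Given $X_{C,t}$, $g_2$ is a fixed multiple of $(\varepsilon_M^\top\mathrm{Var}(\varepsilon_M)^{-1}\beta_{Mt})^2$, with $\varepsilon_M$ independent of $\varepsilon_{C,t}$. So for Gaussian $\varepsilon_M$ and constant $c_2$ (the paper's simulation setting), the inequality is strict wherever $r<1$. A concrete instance: take $g_1=1$, $\lambda c_1=4$, $\lambda c_2=1$, and $g_2\in\{\delta,4\}$ with probability $1/2$ each. Then $A=1/2$ and $B\approx0.775$, but $r=\sqrt{2/3}\approx0.816$. Up to a constant, the reduced Lagrangian is $3/\pi_1+4.5\pi_1$, which is strictly smaller at $r$ than at $B$. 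Moreover, the stated $\pi_2^*=\min(1,\sqrt{g_2c_1/(g_1c_2)})$ equals your inner optimum $\min(1,\sqrt{g_2/(\lambda c_2)}/\pi_1)$ only when $\pi_1^*=A$. So whenever the $\max$ selects $B$, the stated pair is not even consistent with your own inner step.

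You will not find the missing argument in the paper either. Its proof solves the stationarity equations pointwise as if $\pi_1$ could depend on $X_M$. It then replaces $g_2,c_2$ by their conditional expectations given $\varepsilon_{C,t}$ and asserts optimality ``by the law of iterated expectation''. That substitution is exactly what your KKT condition shows to be invalid.

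What your measurability-respecting convex argument does establish is the following. The solution of the restricted problem is unique, and the optimal rule is $\pi_1^*=\min(1,r)$ and $\pi_2^*=\min\bigl(1,\sqrt{g_2/(\lambda c_2)}/\pi_1^*\bigr)$. This reduces to the stated second case when $g_1\ge\lambda c_1$, since then $r\ge1$. The stated first case is recovered only in the one-sided situations described above.
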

Several practical considerations accompany this optimality result. First, the conditional expectation $\E{g_2|\varepsilon_{C,t}} $ in the first-stage propensity $ \pi_1^\ast $ can be approximated by sampling from the empirical distribution function $ \hat{F}_n(\varepsilon_M) $. Second, when the solution $ \pi^\ast $ lies in the interior of the feasible set, the budget constraint admits a closed form expression for the Lagrange multiplier,
\[\lambda^{\text{interior},\ast} = \frac{\E{ \sqrt{{g_1}{c_1}} + \sqrt{g_2 c_2}}^2}{ (b_0 - c_0)^2}, \]
which provides a useful initialization for numerical optimization. In this interior solution, the propensity $ \pi_1^\ast $ scales linearly with $ (b_0 - c_0) $, resulting in the asymptotic variance $ \mathrm{Var}^\pi_\infty(\hat{\xi}_n) $ decreasing with $ 1 / (b_0 - c_0)  $ as a function of the average invested budget $ b_0$. Third, in regions where the second-stage leverage dominates, so $ g_2 c_1 > g_1 c_2 $, the optimal propensity satisfies $ \pi^\ast_2 = 1 $, indicating that $ X_{M,r} $ will always be sampled jointly. Fourth, practical design planning in finite samples requires further optimization techniques across different budget levels, with sample sizes adjusted to maintain a fixed total cost \citep{mareis_2025}. Finally, designs in which the sampling decision may depend only on a restricted subset of variables, as in causal‑fairness settings where sensitive attributes must be protected \citep{pmlr-v80-kilbertus18a}, also fall within our framework. In such cases, the propensity expressions in Theorem \ref{theorem_optimization} are replaced by versions that integrate out the disallowed components, in the same way the unobserved error $\varepsilon_M$ is handled.

\subsection{Optimized Back-Door Estimation} \label{section_back_door}
Front-door estimation implicitly contains a back-door estimation problem on the subvector $ X_{C,t,M} $. By isolating the contribution of the optimized influence function $ \varphi_{\beta_{Mt}}^{\text{opt}} $ from Lemma \ref{lemma_observed_data_EIF}, one obtains an analogous problem for the back-door setting. The resulting propensity has the same structure as in Theorem \ref{theorem_optimization}, but depends only on the first-stage sampling decision.
\begin{corollary} \label{corollary_back_door}
Let $ c_0 > 0 $ be the cost of measuring $ X_{C,t} $ and let the measurable, non-negative cost function $ c \in \mathcal{C}(\mathbb{R}^{d_C + 1}, \mathbb{R}_{> 0}) $ represent the cost of measuring $ X_{M} $ after $ X_{C,t} $ is already observed. For an average sample cost of $ b_0 \in \left( c_0, c_0 + \E{c_1(X_{C,t}) }\right] $, consider the optimization problem 
\newcommand{\argminexpr}{%
  \underset{
    \substack{
      \pi \in \mathcal{C}(\mathbb{R}^{d_C + 1}) \\
      \hat{\beta}_{Mt,n} \in \mathcal{E}_{\textrm{RAL}}({\beta}_{Mt}, \mathcal{M}_\pi)
    }
  }{\operatorname{argmin}}
  \; \mathrm{Var}_\infty(\hat{\beta}_{Mt,n};\pi)
}

\[
\begin{array}{c c l}
  \multirow{2}{*}{$\argminexpr$}
  & \qquad \text{s.t.} & 0 < \pi(X_{C,t}) \le 1 \text{ a.e.,} \\
  &             & \mathrm{E}[c_0 + (\pi c_1)(X_{C,t})] = b_0
\end{array}
\]
with the restriction that $ \varphi \in IF^\pi(\varphi_{\beta_{Mt}}^{F,\text{eff}}) $. If $ X_M $ is one-dimensional, the optimal propensity is 
\[
\pi^\ast = \min \left( 1, \E{ \frac{ \varepsilon_t^2  \Var{\varepsilon_M}  }{\Var{\varepsilon_t}^{2} \lambda c_1 }} \right),
\]
where $ \lambda > 0 $ is chosen to satisfy the budget constraint. The corresponding estimator for $ \hat \beta_{Mt,n} $ coincides with the construction in Lemma \ref{lemma_linear_estimator}.
\end{corollary}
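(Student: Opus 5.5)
The plan is to specialize the front-door argument to the sub-model on $X_{C,t,M}$ with a single sampling stage. First I identify the full-data efficient influence function for $\beta_{Mt}$. It is the $(M,t)$ entry of the block in Theorem~\ref{thm_full_data_EIF}, namely
\[
\varphi^{F,\text{eff}}_{\beta_{Mt}} = \varepsilon_M \varepsilon_t \Var{\varepsilon_t}^{-1}.
\]
Dropping the response $X_r$ does not change this entry. Indeed, the efficient score for $\beta_{Mt}$ in Theorem~\ref{thm_full_data_EIF} lies in the orthogonal component that involves only $(\varepsilon_C,\varepsilon_t,\varepsilon_M)$, so the decomposition into two orthogonal pieces restricts directly to the back-door model.

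Next I apply Lemma~\ref{lemma_tsiatis} with only the stage $\delta=1$ present, so that $\Delta\in\{1,2\}$ and the propensity is $\pi=\pi_1$. The argument of Lemma~\ref{lemma_observed_data_EIF} then gives the optimal element of $IF^\pi(\varphi^{F,\text{eff}}_{\beta_{Mt}})$ as
\[
\varphi^{\text{opt},\pi}_{\beta_{Mt}} = \frac{I(\Delta\ge 2)}{\pi(X_{C,t})}\,\varepsilon_M\varepsilon_t\Var{\varepsilon_t}^{-1}.
\]
The reason is that $\E{\varepsilon_M \mid \varepsilon_C,\varepsilon_t}=0$ by the factorization in $\mathcal{E}$, so the conditional-mean augmentation vanishes.

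As in Lemma~\ref{lemma_effect_EIF_var}, I compute the variance by conditioning on $X_{C,t}$ and using independence of $\varepsilon_M$ from $\varepsilon_{C,t}$. With $X_M$ one-dimensional this gives
\[
\Var{\varphi^{\text{opt},\pi}_{\beta_{Mt}}} = \E{\frac{\varepsilon_t^2\Var{\varepsilon_M}}{\Var{\varepsilon_t}^2\,\pi(X_{C,t})}} = \E{\frac{g(X_{C,t})}{\pi(X_{C,t})}}.
\]
The optimization is then a one-stage version of Theorem~\ref{theorem_optimization}: minimize $\E{g/\pi}$ subject to $\E{\pi c_1}=b_0-c_0$ and $0<\pi\le 1$. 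I form the Lagrangian $\E{g/\pi+\lambda\pi c_1}$ and minimize pointwise in $\pi$. Because $p\mapsto g/p+\lambda c_1 p$ is strictly convex on $(0,\infty)$, the pointwise minimizer over $(0,1]$ is unique almost everywhere where $g>0$. It equals the clipped stationary point, which gives the displayed formula with the threshold $\min(1,\cdot)$; this is the $\delta=1$ analogue of the case split $g_1\ge \lambda c_1$ in Theorem~\ref{theorem_optimization}.

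The multiplier $\lambda$ is determined by the budget. The map $\lambda\mapsto\E{\pi_\lambda c_1}$ is continuous and nonincreasing, by monotone and dominated convergence with $\pi_\lambda\le 1$. It tends to $\E{c_1}$ as $\lambda\to0$ and to $0$ as $\lambda\to\infty$, so a solution exists for every $b_0\in(c_0,c_0+\E{c_1}]$. Sufficiency follows from convexity: for any feasible $\pi$, $\E{g/\pi}\ge\E{g/\pi_\lambda+\lambda c_1(\pi_\lambda-\pi)}=\E{g/\pi_\lambda}$. Finally, the estimator solves the estimating equation built from $\varphi^{\text{opt},\pi}_{\beta_{Mt}}$, which is exactly the inverse-propensity-weighted equation on the $I(\Delta\ge2)$ subsample from Lemma~\ref{lemma_linear_estimator}.

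I expect the main obstacle to be reconciling the explicit form of the stated $\pi^\ast$ with the stationary point. The displayed formula has an outer expectation and no square root. My derivation instead gives $\pi^\ast=\min(1,\sqrt{g/(\lambda c_1)})$, which depends on $X_{C,t}$ through $\varepsilon_t$. I would try to check whether the outer expectation is intended as conditional on $X_{C,t}$, which is trivial here. I would also check whether the square root is absorbed by reparametrizing $\lambda$, or whether the formula needs correcting to the square-root form implied by Theorem~\ref{theorem_optimization} with $g_2\equiv 0$ and $c_2$ absent.
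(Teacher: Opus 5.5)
Your derivation is correct and follows the route the paper intends. The paper gives no separate proof. It only remarks in Section~\ref{section_back_door} that isolating $\varphi^{\text{opt},\pi}_{\beta_{Mt}}$ from Lemma~\ref{lemma_observed_data_EIF} yields a one-stage version of Theorem~\ref{theorem_optimization}, and your variance computation plus pointwise Lagrangian is exactly that reduction. Your convexity inequality and your monotonicity argument for $\lambda$ in fact supply the sufficiency and solvability steps that the proof of Theorem~\ref{theorem_optimization} only sketches through first-order conditions.

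The obstacle you flag should be settled against the displayed formula, not in its favour.

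The square root cannot be absorbed into $\lambda$. The ratio of $g/(\lambda' c_1)$ to $\sqrt{g/(\lambda c_1)}$ is $(\sqrt{\lambda}/\lambda')\sqrt{g/c_1}$, which is non-constant unless $g/c_1$ is a.s.\ constant. So on the interior region, $\pi = g/(\lambda' c_1)$ violates your stationarity condition $g/\pi^2=\lambda c_1$.

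Reading the outer expectation as unconditional makes $\pi^\ast$ a constant. Your own inequality $\E{g/\pi}\ge\E{g/\pi_\lambda}$ holds with equality only if $\pi=\pi_\lambda$ a.e., by strict convexity on $\{\varepsilon_t\neq 0\}$. Hence a constant design is strictly worse unless $g/c_1$ is a.s.\ constant or $b_0=c_0+\E{c_1}$.

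Reading the expectation as conditional on $X_{C,t}$ makes it vacuous, since everything inside is $X_{C,t}$-measurable ($\varepsilon_t=X_t-\beta_{tC}X_C$, and $\Var{\varepsilon_M}$ is a constant). The exponent is still wrong under that reading.

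Dropping the second stage in Theorem~\ref{theorem_optimization} itself collapses both of its cases to $\min\bigl(1,\sqrt{g/(\lambda c_1)}\bigr)$. Here dropping means setting $g_2\equiv 0$ and $c_2\equiv 0$, with $\beta_{rM}\Var{\varepsilon_M}\beta_{rM}^\top$ replaced by $\Var{\varepsilon_M}$. So the printed formula is a typo, as is $c$ versus $c_1$ in the hypotheses. What you should state and prove is
$\pi^\ast=\min\bigl(1,\sqrt{\varepsilon_t^2\Var{\varepsilon_M}/(\Var{\varepsilon_t}^{2}\lambda c_1)}\bigr)$.

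Two small points remain. First, at the right endpoint $b_0=c_0+\E{c_1}$ the only feasible design is $\pi\equiv 1$ a.e. A multiplier $\lambda>0$ producing it exists only if $\operatorname{ess\,inf}(g/c_1)>0$, so treat that endpoint directly rather than through the unattained limit $\lambda\to 0$.

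Second, your computation shows the one-dimensionality hypothesis is unnecessary. For general $d_M$ the variance is $\E{\varepsilon_t^2/\pi}\,\Var{\varepsilon_M}/\Var{\varepsilon_t}^2$, a scalar multiple of a fixed matrix. The same $\pi^\ast$ therefore minimizes it in the Loewner order.
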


\section{Computational Results} \label{section_Computational_results}
\subsection{Simulation Study} \label{section_simulation_study}

For our simulations, we study the exemplary front-door model given in Appendix \ref{appendix_simulation_study}. The model features non-Gaussian errors on $X_{C,t,r} $ and Gaussian errors on $X_M$, multivariate confounder and mediator variables with $ d_C = 2$, $d_M = 3 $, a constant cost function $ c_2 $ and a non-constant cost function $ c_1(X_{C,t}) = 0.1 \cdot \Vert X_{C,t} \Vert_2 $. The study examines calibration, computational sensitivity as well as 
the dependence of the asymptotic variance in Lemma \ref{lemma_effect_EIF_var} and its optimization in Theorem \ref{theorem_optimization} on model parameters and nuisance components.

\begin{figure}
\includegraphics[width = 0.49\textwidth]{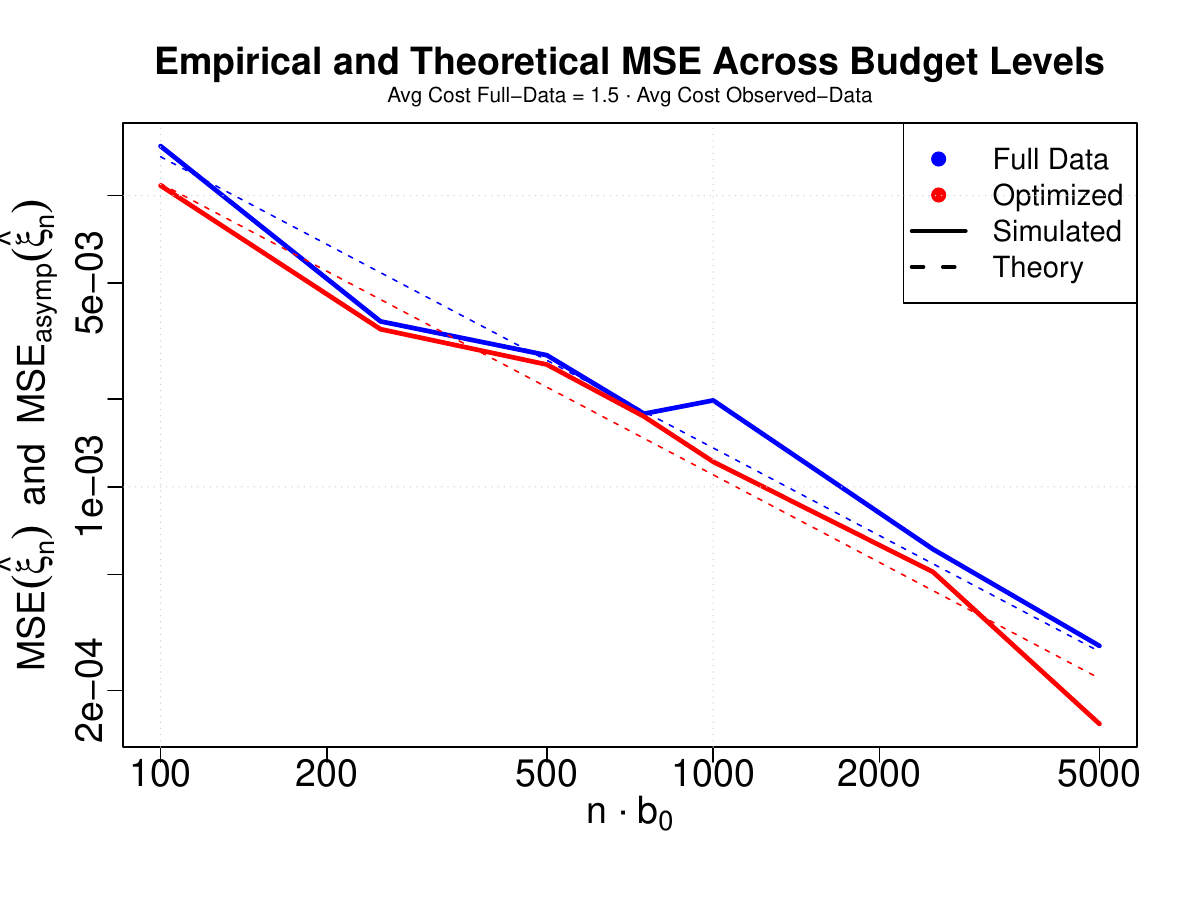}
\includegraphics[width = 0.49\textwidth]{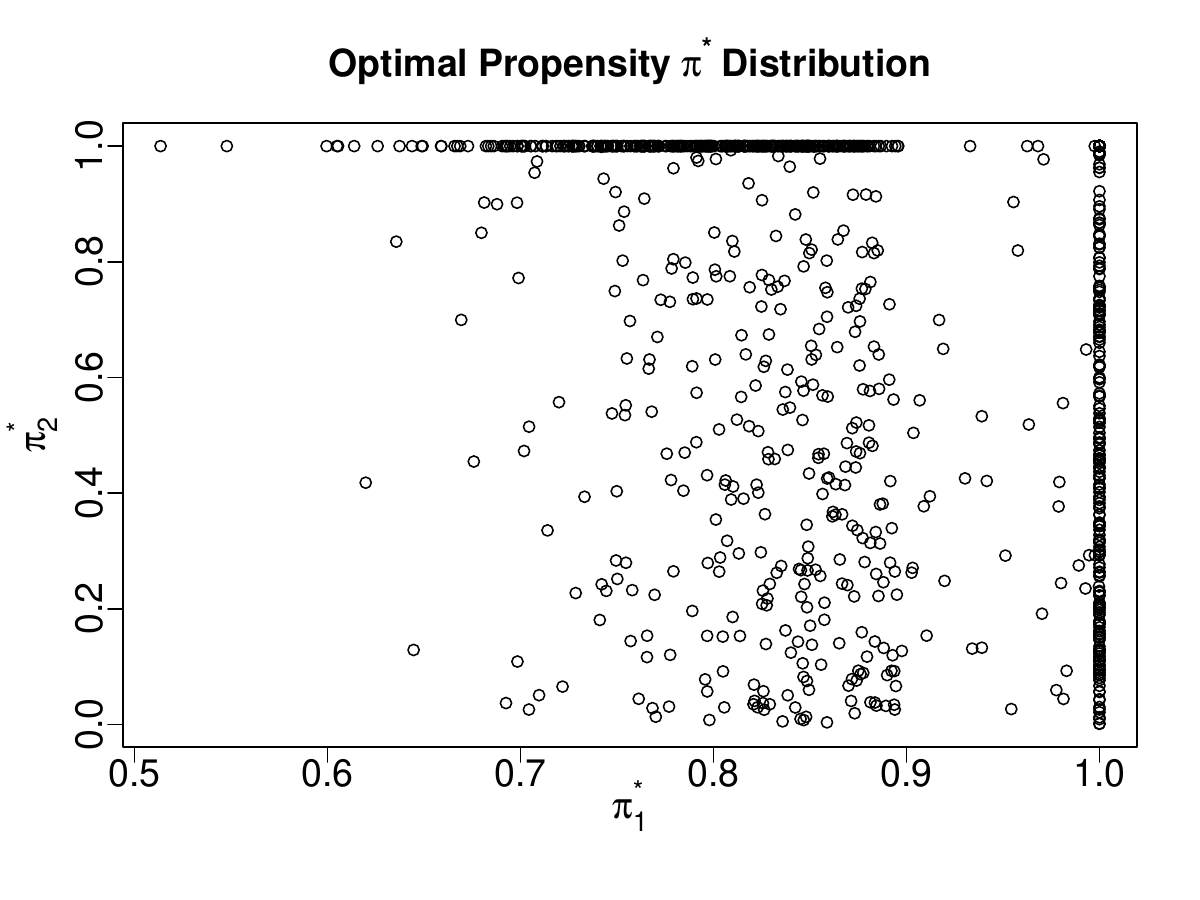}
\caption{Average empirical and theoretical MSE of the full-data and the optimized observed-data causal effect estimators across varying levels of budget $ n \cdot b_0 $ are presented in the left panel. In the right panel, the optimized propensity $ \pi^\ast $ distribution on $ 1,000$ samples is visualized, contrasting the full-data propensity $ \pi \equiv (1,1)^\top $.}
\label{figure_calibration}
\end{figure}
\begin{figure}
\center
\includegraphics[width = 0.235\textwidth]{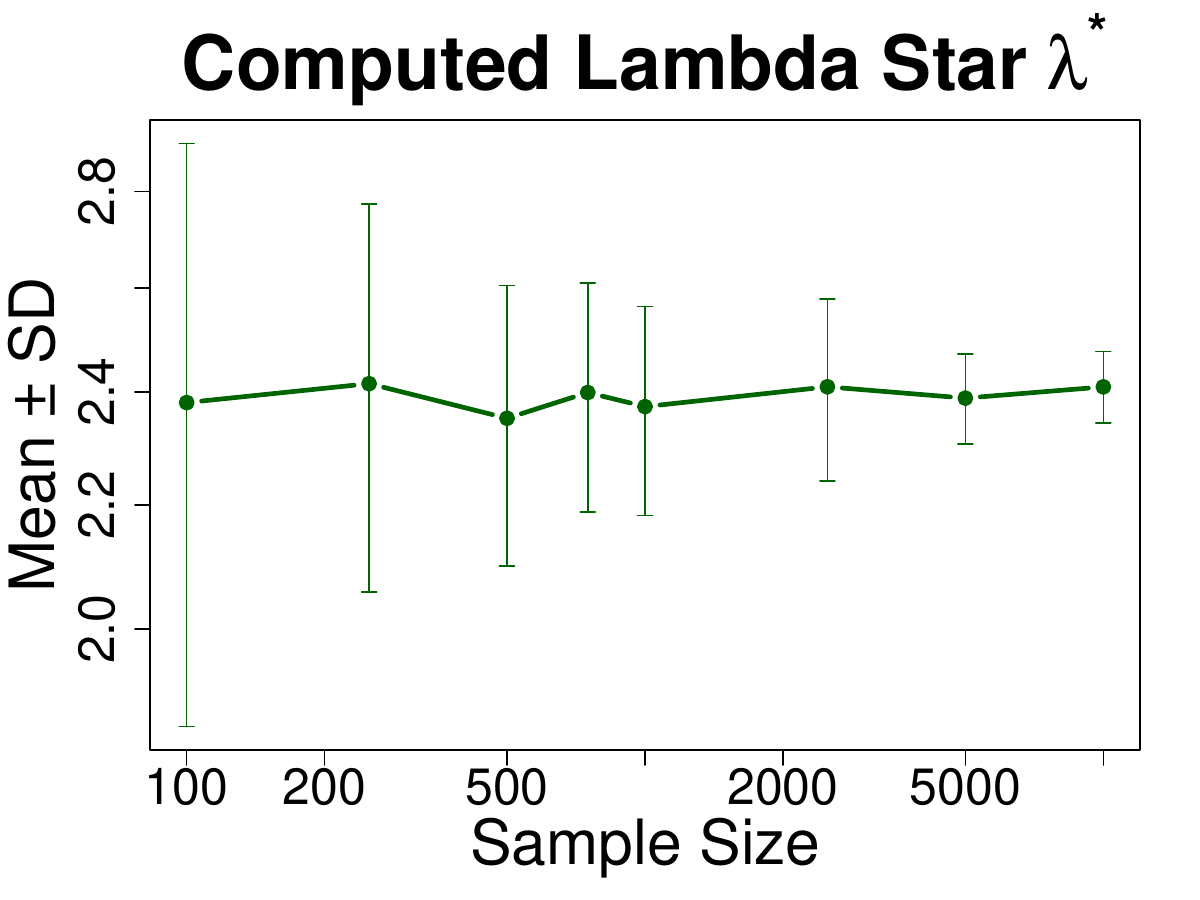}
\includegraphics[width = 0.235\textwidth]{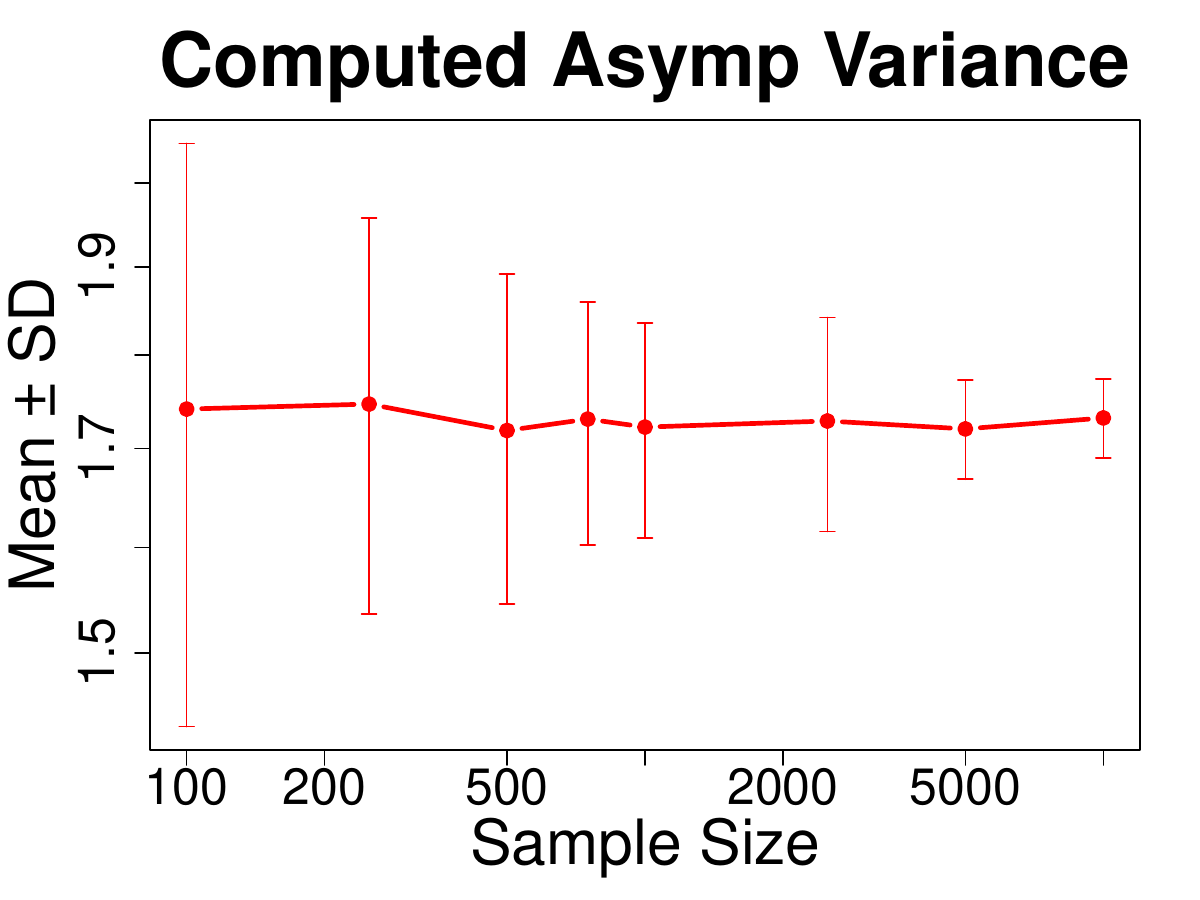} 
\includegraphics[width = 0.235\textwidth]{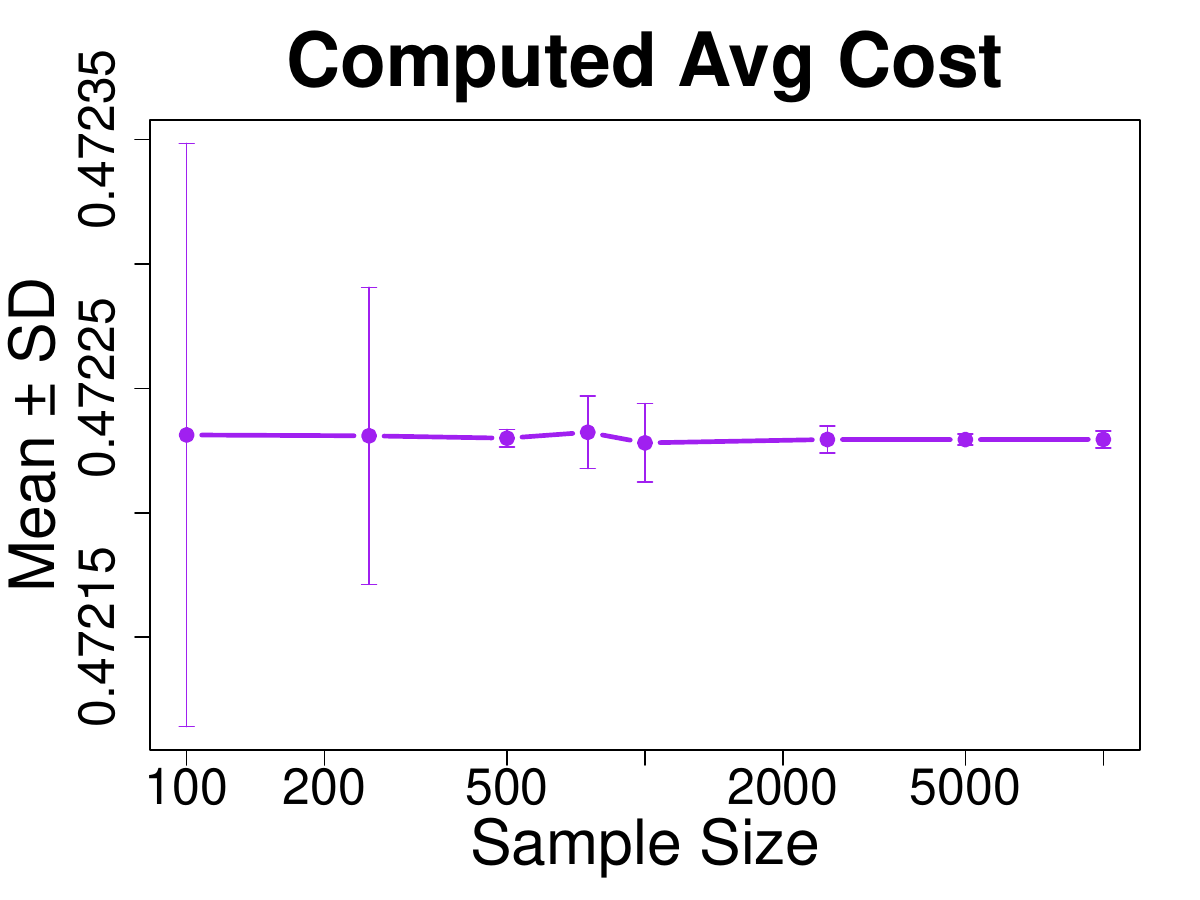}
\includegraphics[width = 0.235\textwidth]{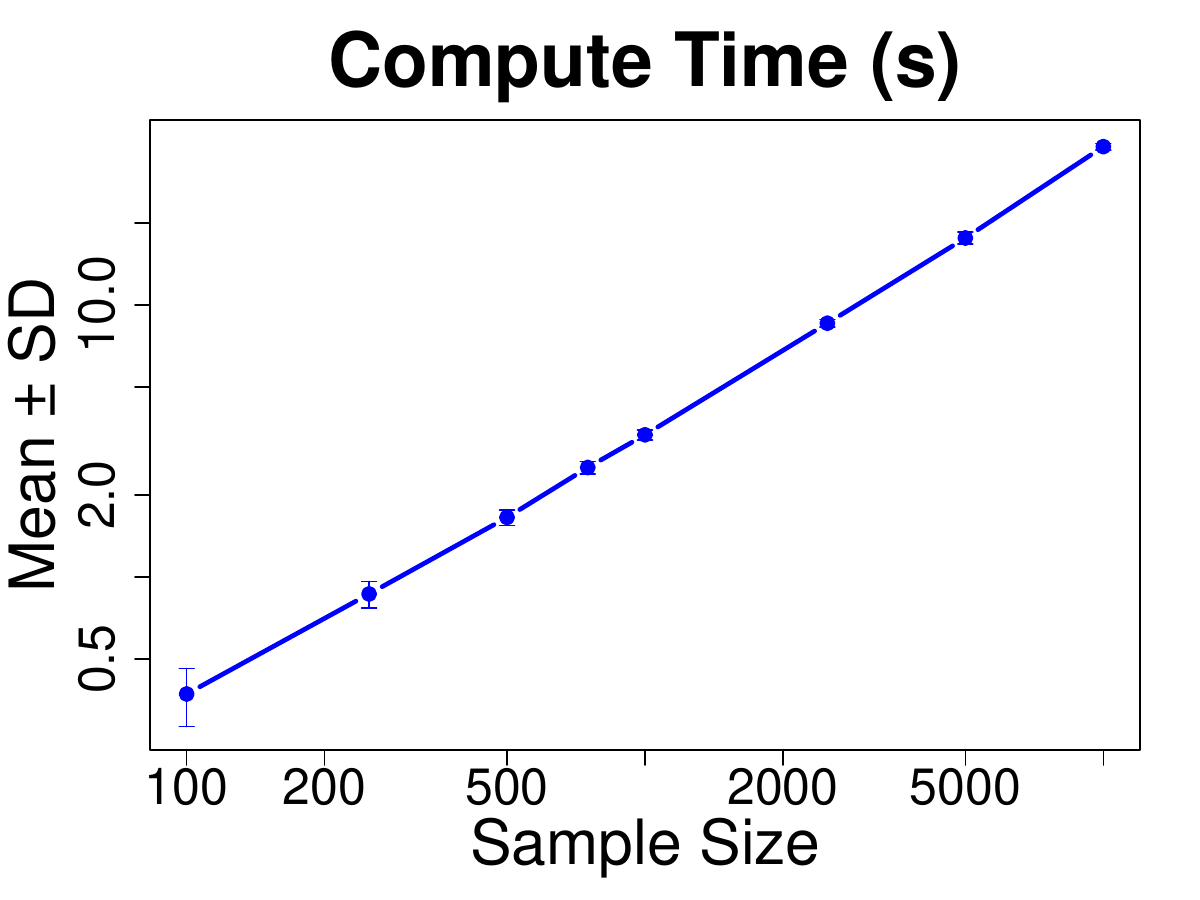}
\caption{Plots of the mean $ \pm 1 $SD of the optimized tuning parameter $ \lambda^\ast $, the computed asymptotic optimized variance $ \mathrm{Var}_\infty({\hat{\xi}_n};\pi^\ast)$, the computed average sample cost $ \E{c_0 + \pi_1^\ast c_1 + \pi_1^\ast \pi_2^\ast c_2} $ and compute time for sample sizes between $ 100 $ and $ 10,000$. }
\label{figure_computational_sensitivity}
\end{figure}
The first research question concerns calibration: How accurate does the theoretical asymptotic distribution of the causal effect estimator approximate the finite-sample mean-squared error (MSE), and what factors influence this reliability? 
Figure \ref{figure_calibration} compares in the left panel the theoretical asymptotic MSE with the average empirical MSE of both full-data and optimized observed-data estimators. The method is well calibrated even at modest sample sizes. There is a substantial increase in efficiency, or equivalently decrease in asymptotic MSE, when comparing the full-data design to the partial-data design at fixed budget levels, indicated by the x-axis. The optimal propensity function $ \pi^\ast $ is sensitive towards the input information and attains all four possible configurations $ \left((0,1), (0,1)\right)$, $ \left(1, (0,1)\right)$, $ \left((0,1),1\right) $, and $ \left(1,1\right) $. While Figure \ref{figure_calibration} reports  averages over $50$ replications, the uncertainty of a single asymptotic variance computation remains of interest. 

Figure \ref{figure_computational_sensitivity} displays the variability of the tuning parameter $ \lambda^\ast $, the asymptotic optimized variance $ \mathrm{Var}_\infty({\hat\xi_n};\pi^\ast)$, the realized average sample cost $ \E{c_0 + \pi_1^\ast c_1 + \pi^\ast_1 \pi^\ast_2 c_2} $, and the computational time. Although we find the average cost to be highly reliable, the variance in  optimized asymptotic exhibits non-negligible variability. In critical applications, we recommend quantifying the computational uncertainty, e.g., via bootstrapping, and protecting against it by considering conservative asymptotic variance estimates.

Next, we investigate how the marginal parameters and nuisance components affect the relative efficiency $  \mathrm{Var}_\infty({\hat\xi_n};{\pi^\ast}) / \mathrm{Var}_\infty({\hat\xi_n};1)$ and asymptotic variance, to identify scenarios where partial-measurement designs are particularly advantageous. A relative efficiency of $80\%$ means that, under the same budget, the optimized partial-measurements design achieves a $20\%$ reduction in asymptotic variance compared to full measurements. All experiments are summarized in Figure \ref{figure_sensitivity} and we restrict our discussion to the plots with major trends.
\begin{figure}
\includegraphics[width = 0.33\textwidth]{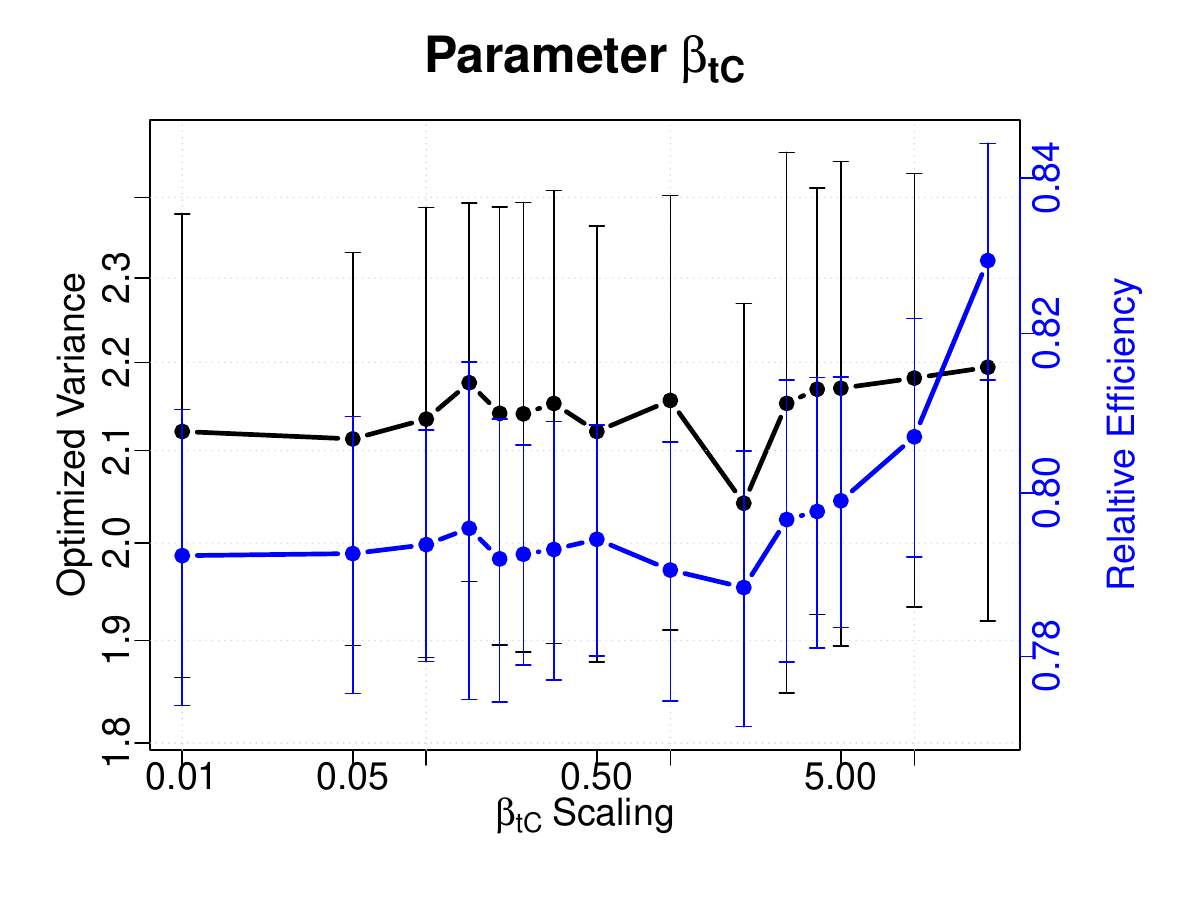} \includegraphics[width = 0.33\textwidth]{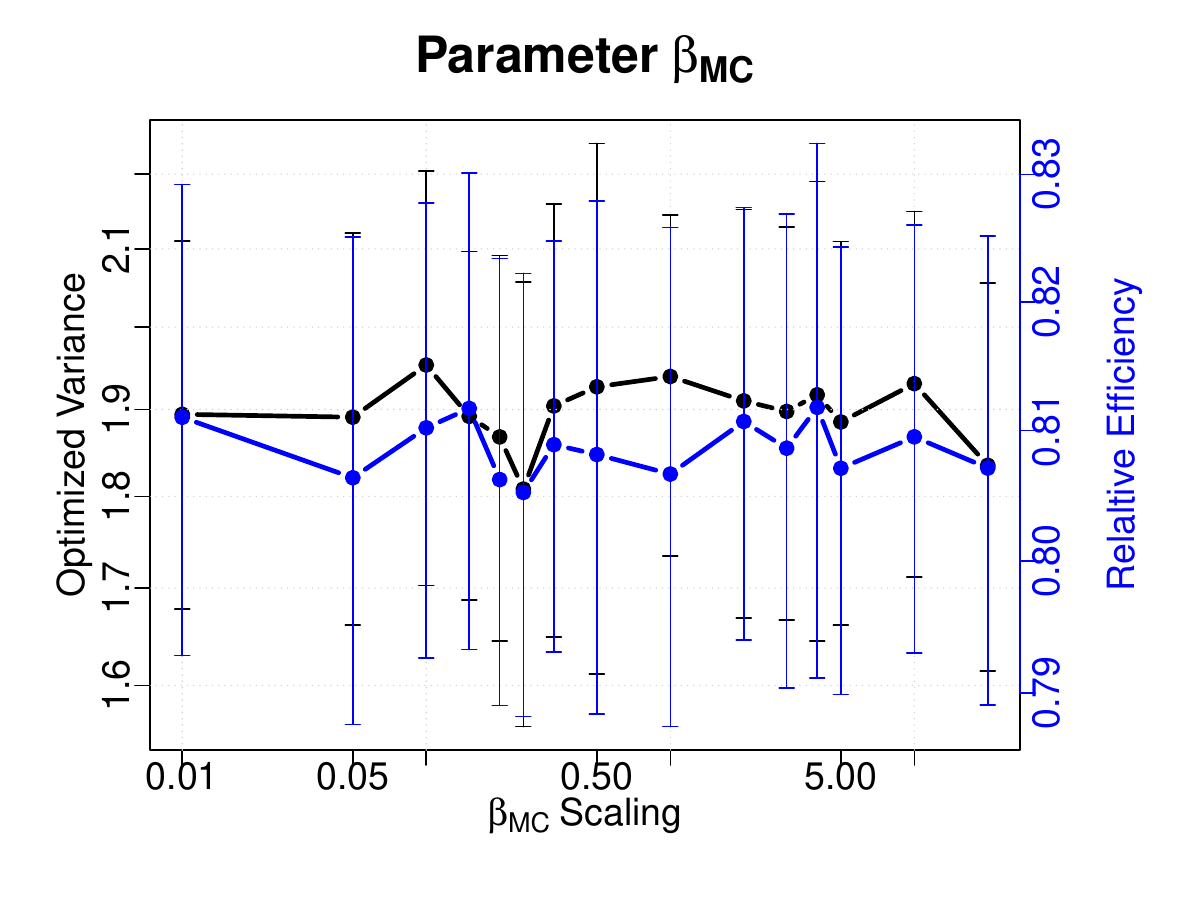} \includegraphics[width = 0.33\textwidth]{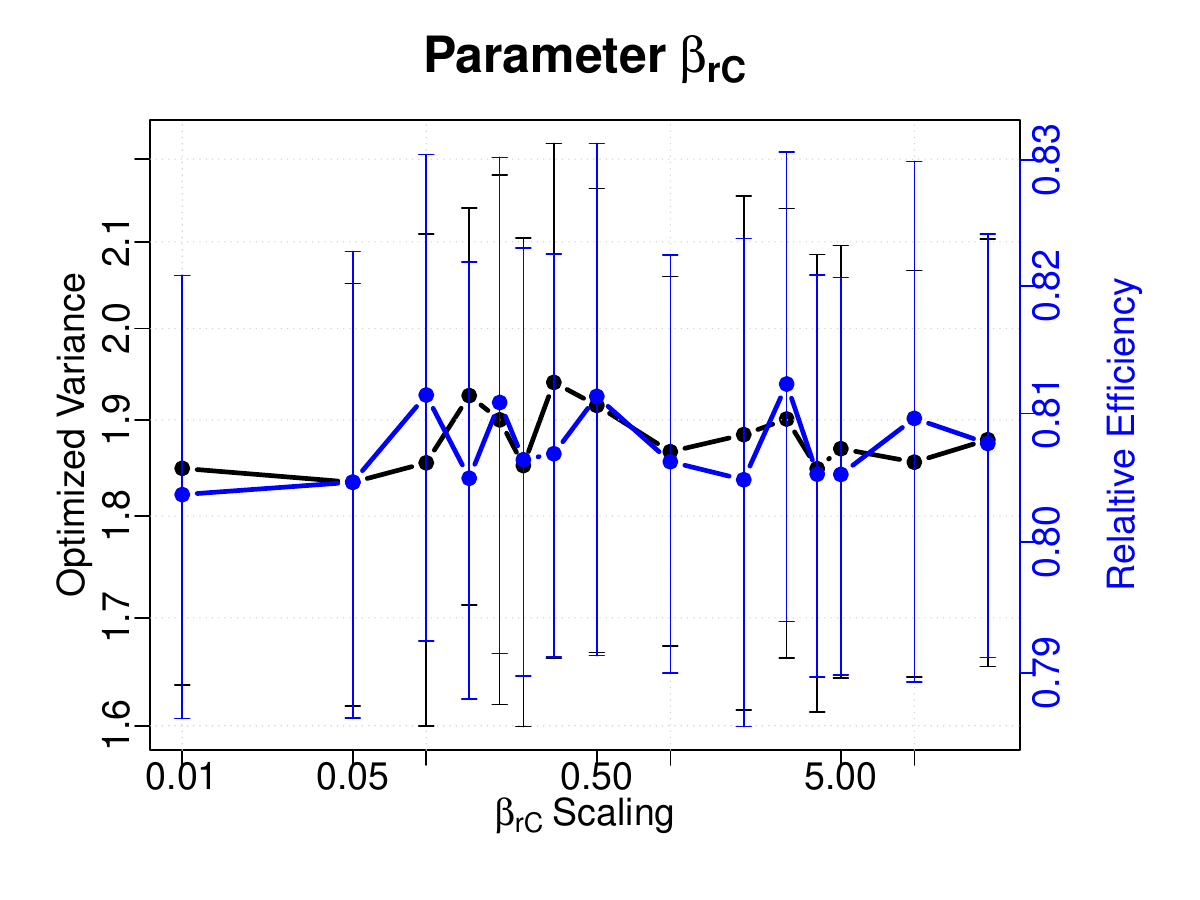}
\includegraphics[width = 0.33\textwidth]{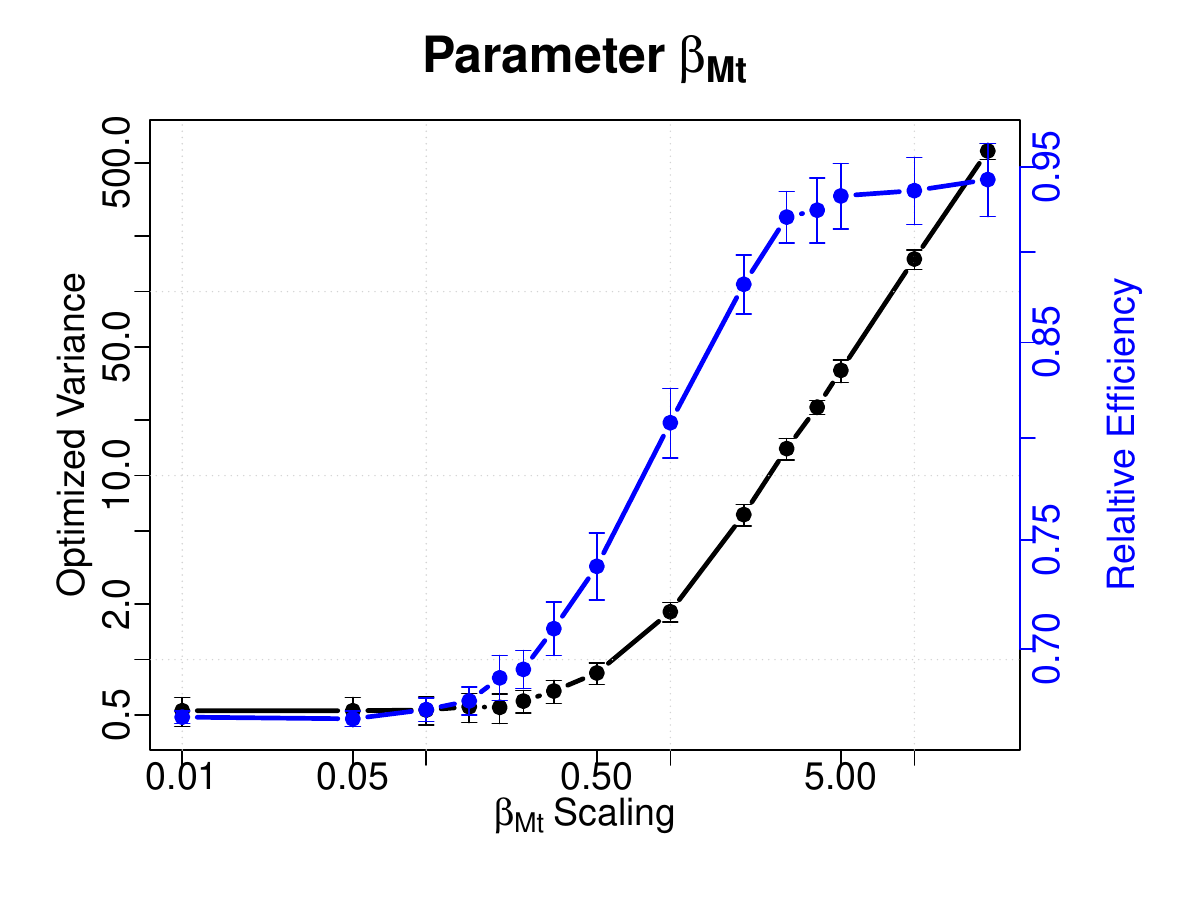}
\includegraphics[width = 0.33\textwidth]{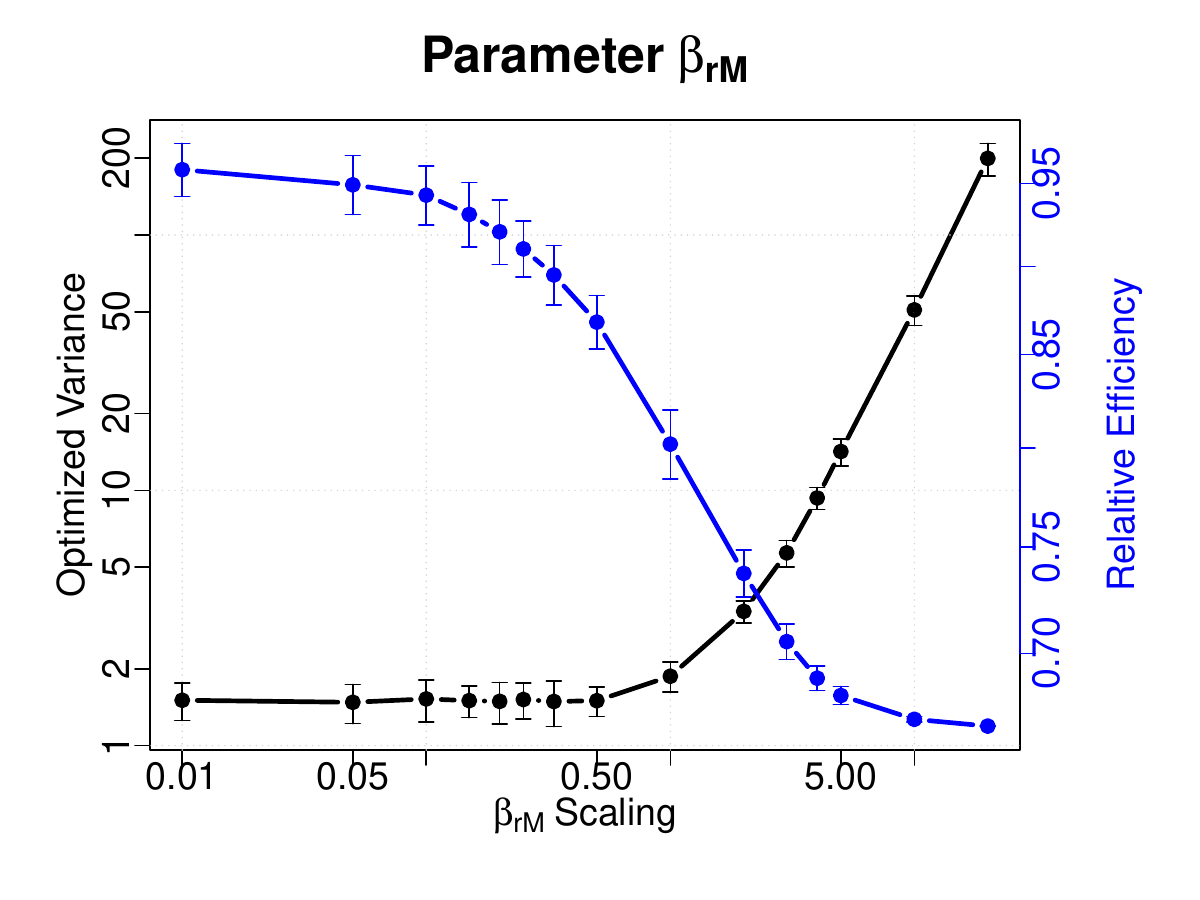}
\includegraphics[width = 0.33\textwidth]{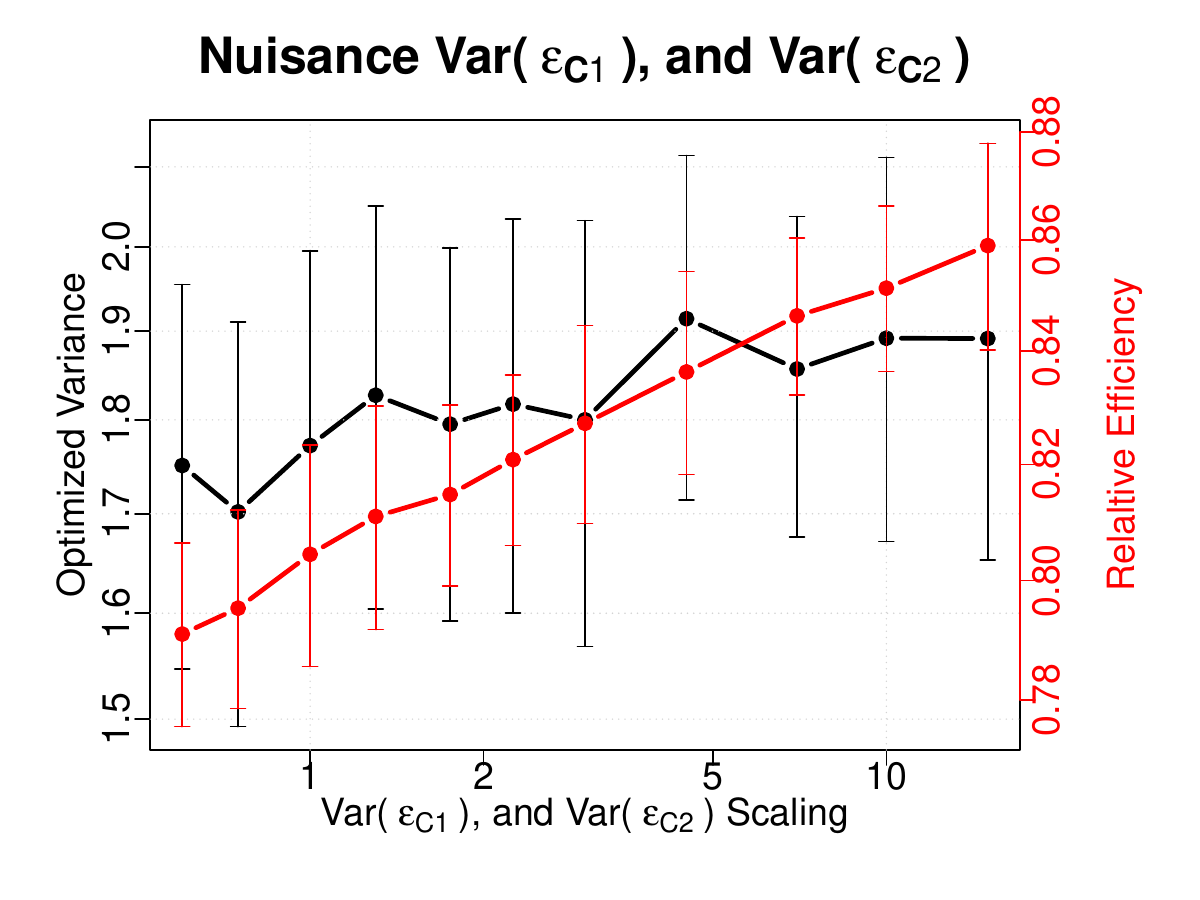}
\includegraphics[width = 0.33\textwidth]{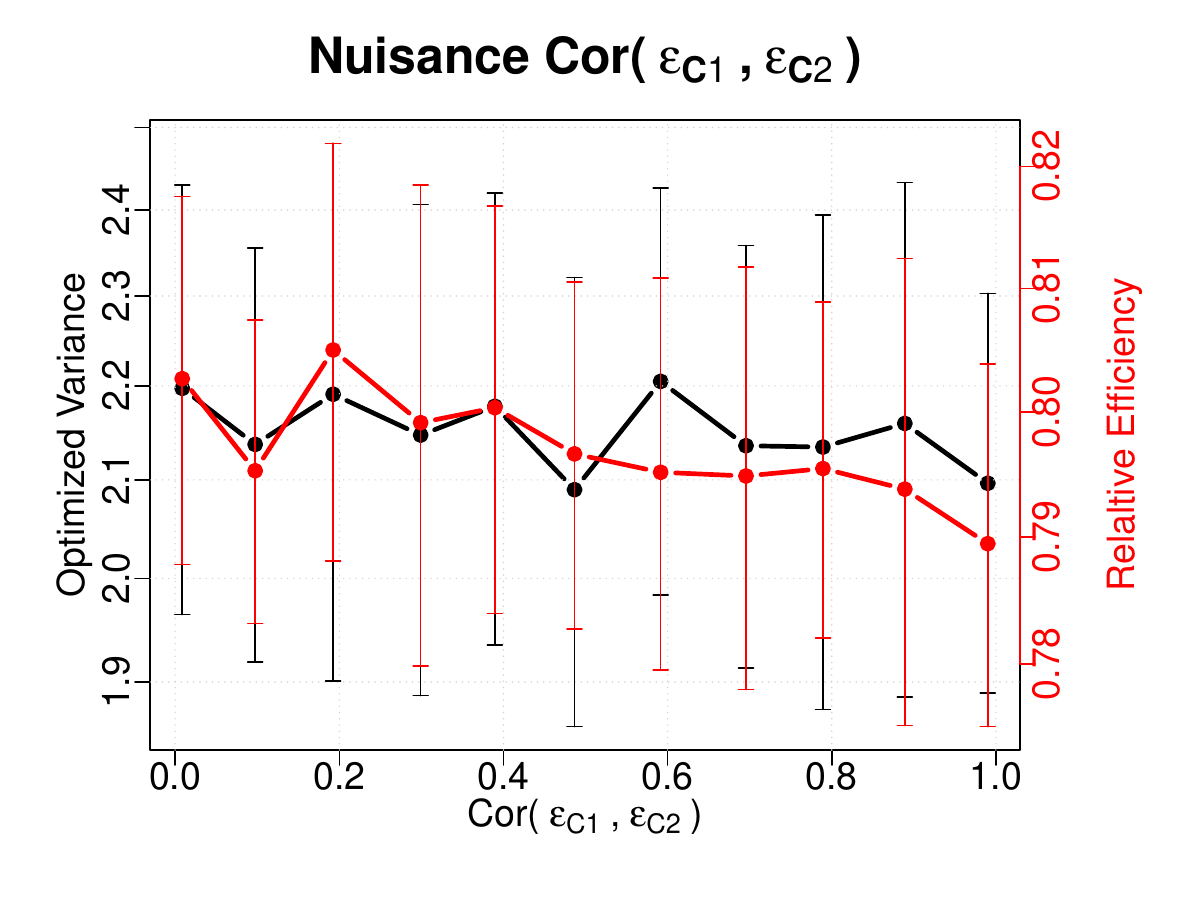}\includegraphics[width = 0.33\textwidth]{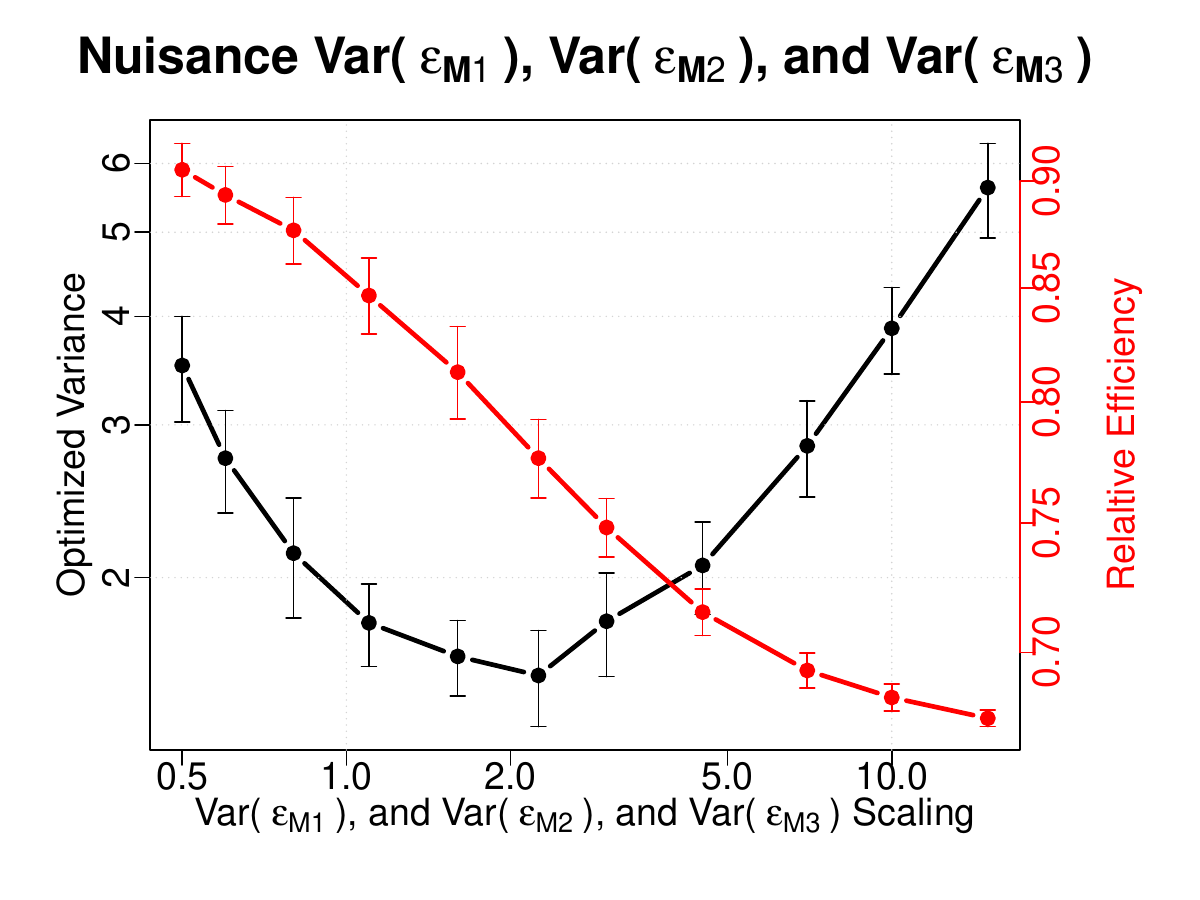}
\includegraphics[width = 0.33\textwidth]{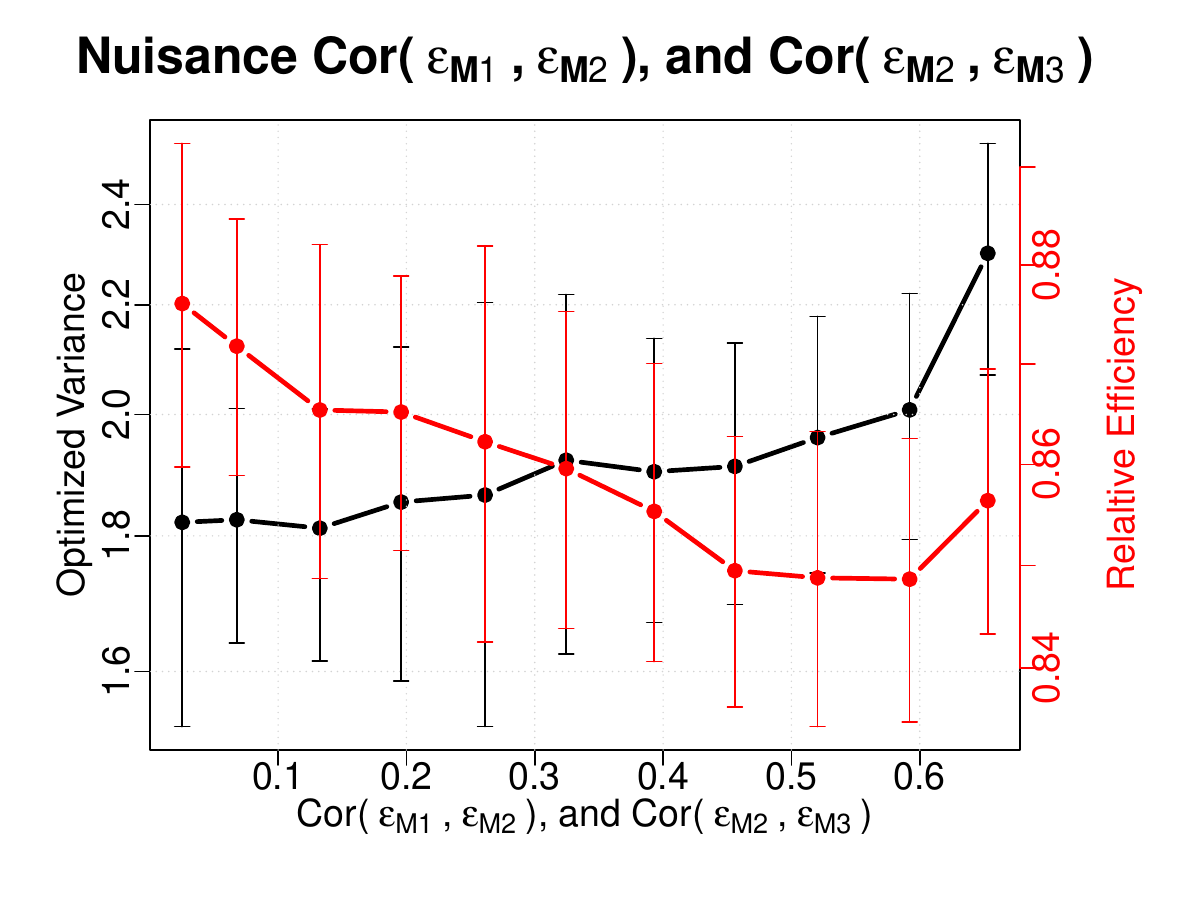}
\includegraphics[width = 0.33\textwidth]{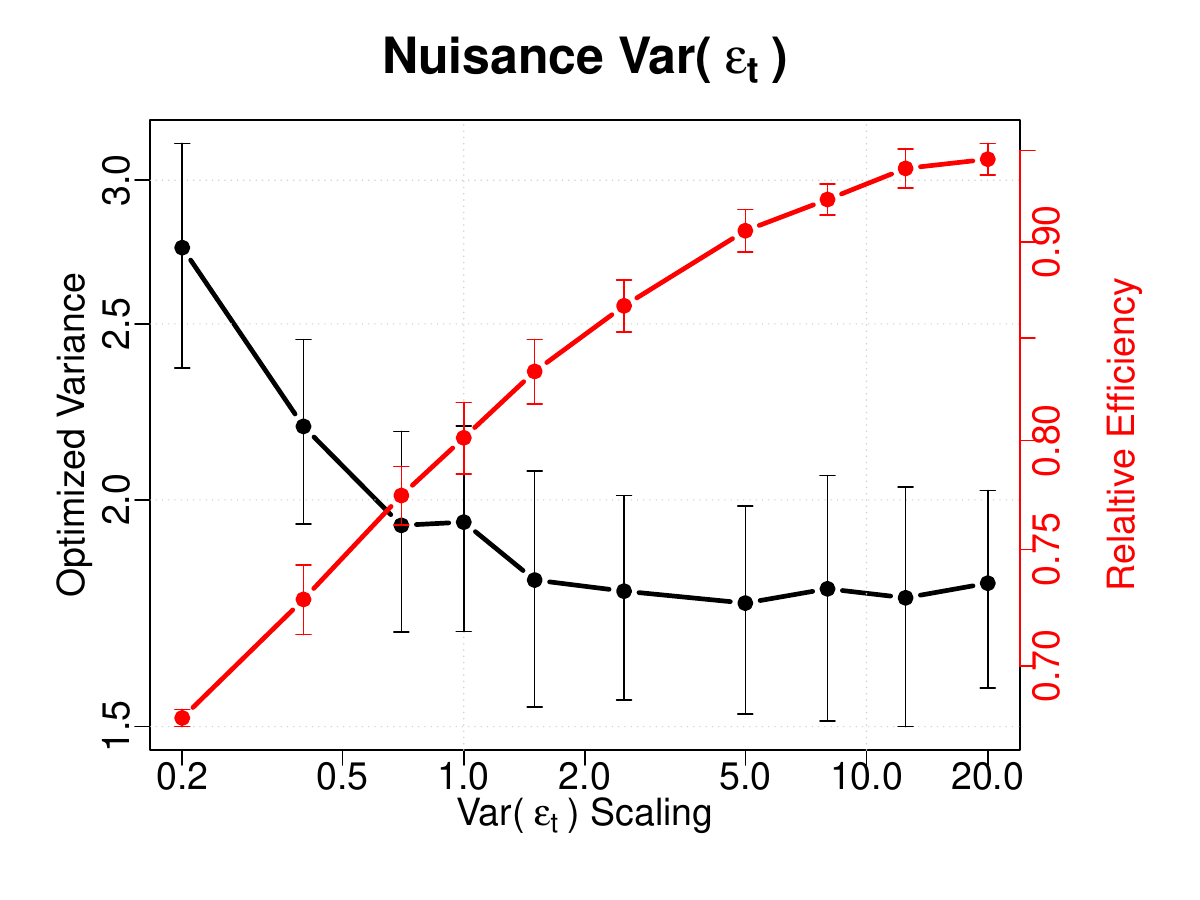}
\includegraphics[width = 0.33\textwidth]{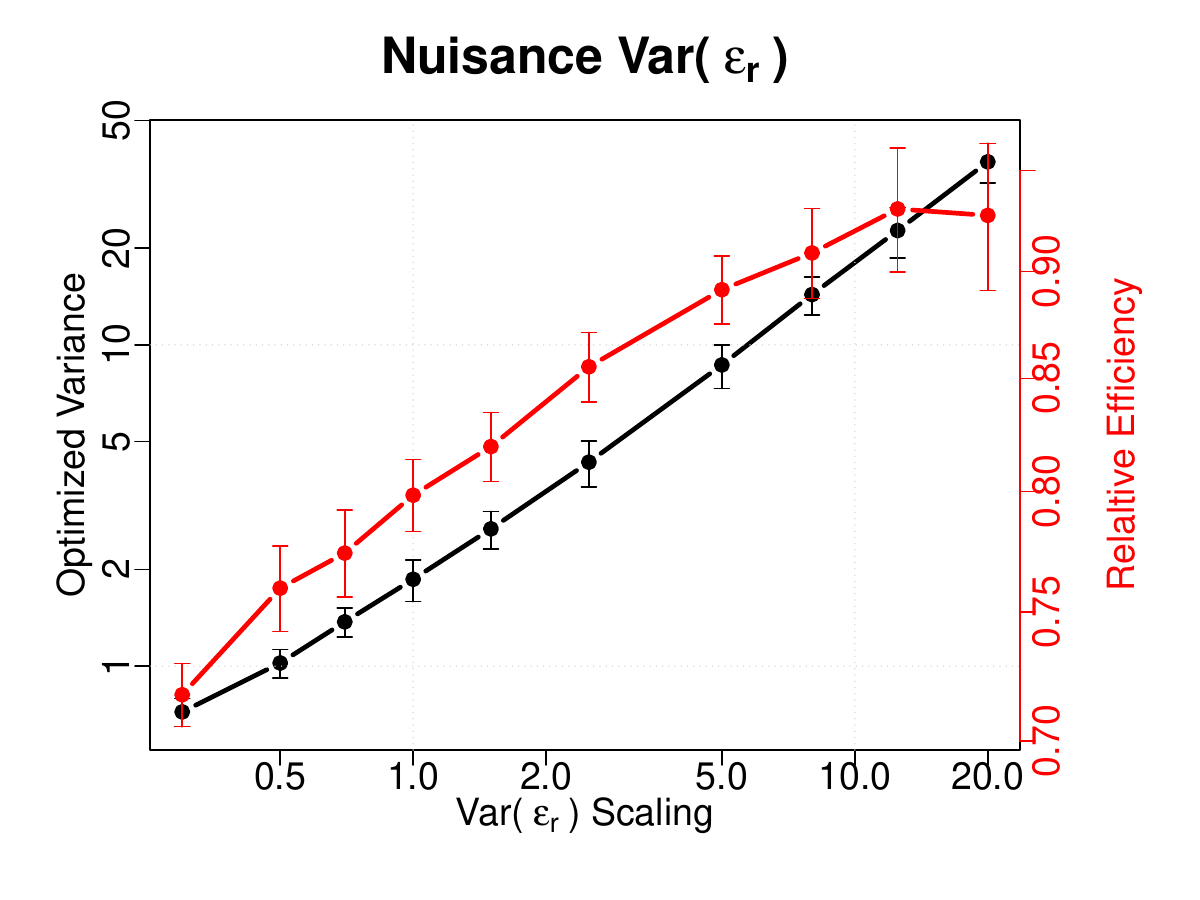}
\includegraphics[width = 0.33\textwidth]{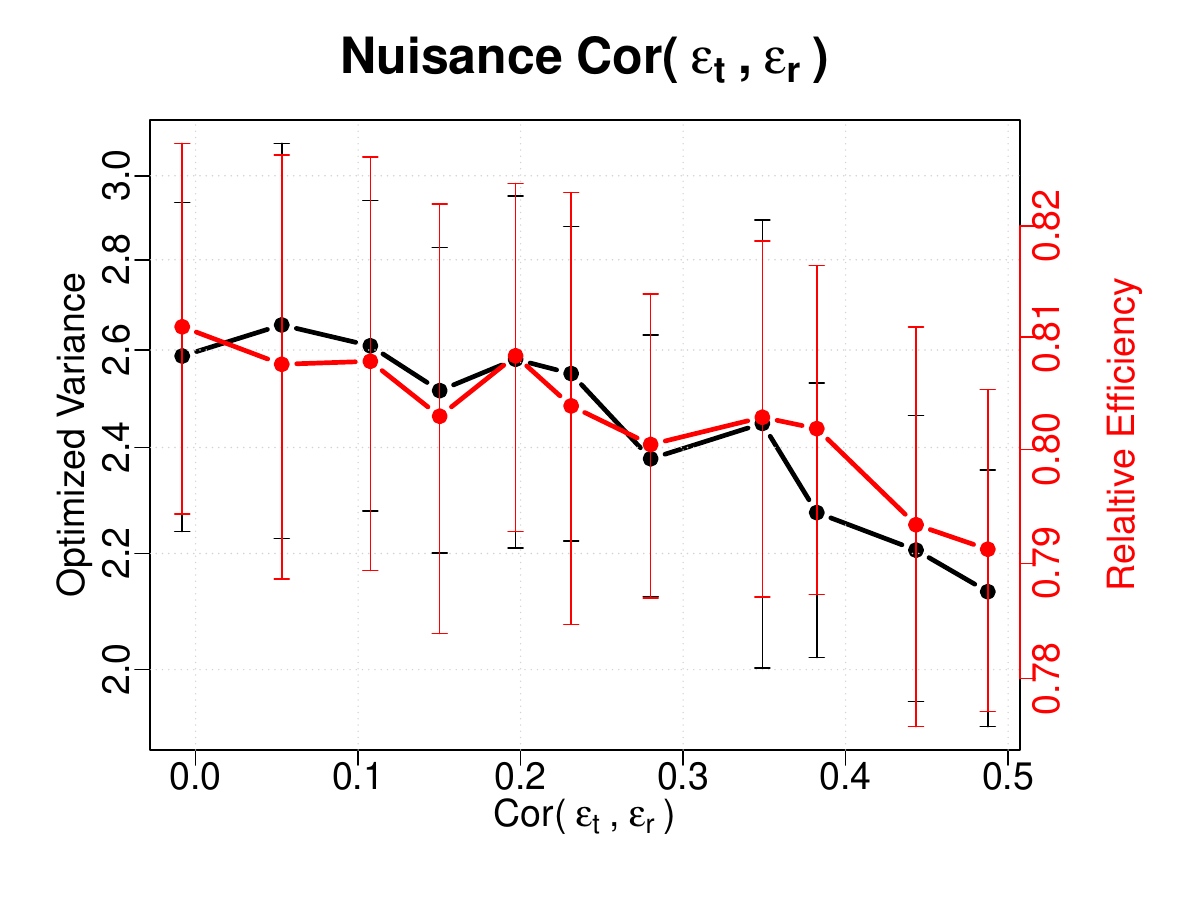}
\caption{The plots demonstrate the variation (mean $ \pm 1$SD) in asymptotic optimized variance $ \mathrm{Var}_\infty({\hat\xi^{\text{opt}}_n}; \pi^\ast)$ (black) and relative efficiency $ \mathrm{Var}_\infty({\hat\xi_n};\pi^\ast) / \mathrm{Var}_\infty({\hat\xi_n};1)$ (red / blue), and optimized variance by individual perturbations of the parameters and nuisance. For optimization, we specified the average cost of the observed-data estimator to be $ 2/3 $ of the full-data average cost, resulting in the relative efficiency scaling. All perturbations of parameters and nuisance are listed in Table \ref{table_alterations_experiment_parameter_nuisance_sensitivity}. The x-axis represents either the multiplicative scaling factor or the resulting (average) correlation(s). The y-axes differ in scale.} 
\label{figure_sensitivity}
\end{figure}
The general guideline is that partial-measurements designs are favorable when $ \beta_{Mt} $ is difficult to estimate or when $ \beta_{rM} $ is comparatively easy to learn. 
Such settings allow the design to exploit variance reduction through sample efficient estimation of $ \beta_{rM} $. This guideline is supported by the following findings. 
\begin{itemize}
    \item Decreasing the magnitude of parameter $ \Vert \beta_{Mt} \Vert $ (Figure \ref{figure_sensitivity} [2nd row, 1st col]), as well as reducing the variance of $ \varepsilon_t $ (Figure \ref{figure_sensitivity} [4, 1]) or increasing the variance of $ \varepsilon_M  $ (Figure \ref{figure_sensitivity} [3, 2]) make the estimation of $ \beta_{Mt} $ more challenging and increase the benefit of partial measurements. 
    \item Increasing the parameter $ \Vert \beta_{rM} \Vert $ (Figure \ref{figure_sensitivity} [2, 2]) or decreasing the variance of $ \varepsilon_r $ (Figure \ref{figure_sensitivity} [4, 2]) strengthens the signal along the path $ X_M \rightarrow X_r $, enabling $ \beta_{rM} $ to be estimated from relatively fewer samples and again enabling efficiency gains through partial measurements. 
    \item Increasing the variance of $ \varepsilon_C $ (Figure \ref{figure_sensitivity} [2, 3]) has a qualitatively different effect. It results in noisier back-door paths, making both $ \beta_{rM} $ and $ \beta_{Mt} $ harder to estimate and thereby degrading the performance of sensitive, leverage-based partial-measurements designs. 
    \item Finally, reducing the confounding between $ \varepsilon_t $ and $ \varepsilon_r $ (Figure \ref{figure_sensitivity} [4, 3]) transitions the system to an unconfounded setting, where full-data measurements of $ X_{t,r} $ suffice and the benefit of partial measurements diminishes. 
\end{itemize}

\subsection{Applications}
We demonstrate the advantages of partial-measurement designs on two real-world datasets and one semi-synthetic dataset drawn from biology, medicine and industry. 
\paragraph{Sachs dataset.}
The dataset of \cite{Sachs2005} ($ 853 $ samples; source: \href{doi.org/10.5281/zenodo.7679091}{10.5281/zenodo.7679091}, \cite{Mareis_Haug_Drton_2025}) records protein and phospholipid expression levels in human cells. The original study includes both observational and interventional data, yielding one of the few ground truth causal directed acyclic graphs (DAG) describing the influence between measurements. Within this graph, we determined two front-door configurations suitable for our analysis.
\paragraph{MIMIC-IV dataset.}
Medical professionals administer insulin to control elevated blood sugar levels of diabetes patients. In observational data, however, the effect of insulin is confounded by patient characteristics such as age and body mass index (BMI). To study this setting, we processed the electronic health dataset MIMIC-IV (\cite{MIMIC_Johnson}) of the Beth Israel Deaconess Medical Center, generating a dataset of size $ 999 $ samples on these four variables.
\paragraph{Industrial assembly dataset.}
The third dataset is generated using the Python package \texttt{causal\-Assembly} \citep{gobler24a}. The authors combined real-world assembly line measurements and expert knowledge to construct a graphical model. Their package provides a semi-synthetic sampling method in which the imposed graphical assumptions are satisfied by construction. Consequently, our back-door dataset on $3,000$ samples from \texttt{station1} meets the requirements.
\paragraph{}
Table \ref{table_results_application} reports the oversampling percentage and relative efficiency obtained by applying Theorem \ref{theorem_optimization}, respectively Corollary \ref{corollary_back_door}. Oversampling quantifies the required sample size under a partial measurement to match the total cost of a full-data design. For example, in the causalAssembly experiment, a partial-measurement dataset of $ 1.09 \cdot n $ yields a $ 5.3 \% $ variance reduction relative to a full-measurement dataset of size $ n $. Overall, the results confirm that exploiting parameter uncertainty through optimized experimental design can substantially reduce the estimation variance compared to the naive full-measurement design.

\begin{table} 
\centering
\begin{tabular}{lllllll}
\multirow{2}{*}{Dataset} & \multirow{2}{*}{$ X_C $} & \multirow{2}{*}{$ X_t $} & \multirow{2}{*}{$ X_M $} & \multirow{2}{*}{$ X_r $} & Oversampling & Relative \\ &&&&&Percentage&Efficiency \\  \hline
Sachs & PKA, PKC & Raf & Erk & Akt & $ 135 \% $ & $  0.7662$ \\ 
Sachs & PKA & Mek & Erk & Akt & $ 154 \% $ & $ 0.6802 $ \\
MIMIC & Age, BMI & Insulin & $ \Delta $ Glucose & & $ 115 \% $ &  $ 0.8987 $ \\
causalAssembly & S1mp3  & S1mp5 &  S1mp4 & & $ 109 \% $ &  $0.9470 $ \\ 
\end{tabular}
\caption{Results of Theorem \ref{theorem_optimization} andCorollary \ref{corollary_back_door} applied to three datasets. The cost functions are specified to be $ c_1 \equiv 1, c_2 \equiv 1 $ with base cost $ c_0 = 1 + d_C $ for Sachs and causalAssembly. Only for MIMIC we set $ c_0 = 1 $ as \textit{BMI} and \textit{Age} measurements are negligible.} 
\label{table_results_application}
\end{table}

\section{Conclusion}

The central contribution of this work is a unified framework that combines semiparametric efficiency theory and causal graphical modeling to address a problem that arises naturally in practice: how to allocate a limited measurement budget across study stages so as to minimize the variance of a causal effect estimator. The RAL estimators and optimized designs we derive show that substantial efficiency gains over full-measurement strategies are achievable without sacrificing inferential validity, a finding with direct relevance to experimental planning in medicine, biology, and industry.

The framework rests on three modeling choices whose relaxation we consider the most consequential directions for future work. First, the linearity assumption, which we regard as the strongest restriction imposed. It was essential for establishing a clean theoretical framework and deriving closed-form optimal designs, while offering a clear interpretation and serving as first-order approximations to more complex systems \citep{hastie2009, Pearl155170}. Appendix~\ref{appendix_misspecification} confirms that the optimized design retains its efficiency advantage under mild nonlinearity, though efficiency gains are expected to diminish as departures become more pronounced. Second, the present sampling schemes follow the intrinsic causal order. Designs that depart from this ordering would give rise to conceptually distinct augmented influence functions and may unlock additional efficiency gains, though the theoretical development would be substantially more involved. Here, we also see the possibility of sequential propensity updates offering additional finite-sample gains. And finally, the imposed causal front-door structure. Although the characterization of efficient observed-data influence functions beyond the augmentation class $IF^\pi(\varphi^{F,\text{eff}})$ remains open, extensions to identified causal graphs, including those visualized in Figure \ref{figure_identified_causal_graphs}, are feasible and promising. 

\input{Figure_Id_Graphs}

\acks{Leopold Mareis received funding from the German Research Foundation (DFG) under the Mathematical Research Data Initiative (project No. 460135501). Mathias Drton received funding from the European Research Council (ERC) under the European Union's Horizon 2020 research and innovation programme (grant
agreement No. 883818).}

\bibliography{bibliography}

\newpage
\appendix 

\section{Structure of the Results} \label{appendix_structure_of_results}
The following points summarize the logical structure of the results and clarify how the main lemmas and theorems relate to the influence‑function relations shown in Figure~\ref{figure_proof_logic}.
\begin{figure}
    \centering
    \input{rebuttal/Figure_proof_logic}
    \caption{Logical structure of the full‑data and observed‑data influence functions, augmentation classes, and the feasible set used in the optimization.}
    \label{figure_proof_logic}
\end{figure}
\begin{itemize}
    \item It is known that the affine space $ \varphi_0^F + \mathscr{T}^\perp$ characterizes all full-data influence functions, given in Figure~\ref{figure_proof_logic} by the {\color{violet} violet set}. 
    \item Theorem \ref{thm_full_data_EIF} identifies the full-data efficient influence function, shown in Figure \ref{figure_proof_logic} as $ \bm{\varphi^{F,\mathrm{eff}}_2}$. 
    \item Lemma \ref{lemma_tsiatis} provides a construction of augmented influence functions for each observed-data model $ \mathcal{M}_{(\pi_1,\pi_2)}$. Each full-data influence function $ \varphi^F$ generates a class $ IF^\pi(\varphi^F)$ of augmented observed-data influence functions. The disjoint union of all generated classes spans the set of augmented observed-data influence functions in $ \mathcal{M}_{(\pi_1,\pi_2)}$. Each arrow in Figure \ref{figure_proof_logic} illustrates the generation of such a class.
    \item Lemma \ref{lemma_observed_data_EIF} determines the most efficient representative within each augmentation class. These are the bold observed-data influence functions in Figure \ref{figure_proof_logic}, for example $ \bm{\varphi_{2.2}^{\pi^1}}$ in $ {\color{orange} IF^{\pi^1}(\varphi^{F,\mathrm{eff}}_2)}$ or $ \bm{\varphi_{1.3}^{\pi^2}}$ in $ {\color{red} IF^{\pi^2}(\varphi^{F}_1)}$.
    \item The main Problem \ref{problem_main} seeks the minimal-asymptotic-variance RAL estimator $\hat{\xi}_n$ over all models $\mathcal{M}_{\pi_1, \pi_2}$. Since the asymptotic variance equals $\mathrm{E}[\varphi^2]$, the problem is equivalent to minimizing the length of the observed-data influence function over all models. The feasible set corresponds to the entire second row of Figure \ref{figure_proof_logic}. If such an observed-data influence function were identified, the experimenter could then sample data from the corresponding optimal model and use the associated estimator for cost-efficient causal effect estimation.
    \item For technical reasons, however, we must restrict the search for observed-data influence functions to a smaller feasible set, namely to the augmentation classes generated by the efficient full-data influence function. In Figure \ref{figure_proof_logic}, this is depicted in the third row. 
    \item Lemma \ref{lemma_effect_EIF_var} computes, for each model $ \mathcal{M}_{(\pi_1,\pi_2)}$, the length of the most efficient observed-data influence function within $ IF^{\pi}(\varphi^{F,\text{eff}})$. This equals the asymptotic variance of the corresponding estimator. 
    \item Theorem \ref{theorem_optimization} then identifies the optimal influence function $ \varphi^{\textrm{opt}}$ within this restricted feasible set, based on Lemma 10. As discussed in the main text, $ \varphi^{\mathrm{opt}}$ might not be globally efficient. This would be the case if $ \mathrm{E}[(\bm{\varphi^{\pi^1}_{1.2}})^2] < \mathrm{E}[(\varphi^{\textrm{opt}})^2] \leq \mathrm{E}[(\bm{\varphi^{\pi^1}_{2.2}})^2]$ in Figure \ref{figure_proof_logic}.
    \item Although not necessarily observed-data efficient across all models, $\varphi^{\textrm{opt}}$ remains compelling. It improves (a) on all influence functions in its own augmentation class $IF^{\pi^\ast}(\varphi^{F,\text{eff}})$ and (b) on all other models' augmentation class of the observed-data efficient influence function. If $\varphi^{\textrm{opt}} = \bm{\varphi^{\pi^2}_{2.3}}$ in Figure \ref{figure_proof_logic}, this implies (a) $ \mathrm{E}[(\bm{\varphi^{\pi^2}_{2.3}})^2] <  \mathrm{E}[(\bm{\varphi^{\pi^2}_{2.1}})^2] $ and (b) $ \mathrm{E}[(\bm{\varphi^{\pi^2}_{2.3}})^2] <  \mathrm{E}[(\bm{\varphi^{\pi^1}_{2.2}})^2] $.
\end{itemize}

\section{Proofs} \label{appendix_proofs}

\subsection{Proof of Theorem \ref{thm_full_data_EIF}: Efficient Influence Function in Model \texorpdfstring{$\mathcal{M}_1$}{M\_1}}
\restatement{Theorem \ref{thm_full_data_EIF}}{\TextTheoremFullDataEIF}
We need an auxiliary lemma for the proof. It derives the orthogonal complement of the nuisance tangent space $ \Lambda_\varepsilon^\perp $. This characterization is needed to construct the efficient score and, after normalization, the efficient influence function.

\begin{lemma} 
Let $ \varepsilon_r^\perp = \varepsilon_r - \E{\varepsilon_r  \, \middle| \, \varepsilon_t}$. The orthogonal complement to the $ \varepsilon $ nuisance tangent space is given by 
\begin{align} \label{formula_nuisance_orthogonal_compelement}
\Lambda_{\varepsilon}^{\perp} := \left\{ 
(g_C^\top,  g_t,  g_M^\top, g_r(\varepsilon_t))
\begin{pmatrix}
\varepsilon_C \\ 
\varepsilon_t \\
\varepsilon_M \\
\varepsilon_r^\perp
\end{pmatrix}
\,\middle|\, 
\begin{array}{l}
	g_C \in \mathbb{R}^{d_C}, \\
	g_t \in \mathbb{R}, \\
	g_M \in \mathbb{R}^{d_M}, \\
	g_r \in L^2(\mathbb{R})
\end{array} \right\}.
\end{align}
\end{lemma}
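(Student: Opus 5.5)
We work in the error coordinates, since $X=(I-\beta)^{-1}\varepsilon$ is a bijection for fixed $\beta$. The density then factorizes as
\[
p_\varepsilon(\varepsilon) = p_C(\varepsilon_C)\,p_{tr}(\varepsilon_t,\varepsilon_r)\,p_M(\varepsilon_M).
\]
The only restriction on the nuisance is that each factor is an arbitrary density. Because $\varepsilon\in\mathcal{H}$, each factor is also mean zero.

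\textbf{Step 1: the nuisance tangent space.} Standard semiparametric arguments (Tsiatis, Ch.~4) give $\Lambda_\varepsilon=\Lambda_C\oplus\Lambda_{tr}\oplus\Lambda_M$, where each summand comes from scores of submodels perturbing one factor while keeping its mean zero:
\begin{align*}
\Lambda_C &= \{a(\varepsilon_C)\in L^2_0 : \E{a(\varepsilon_C)\varepsilon_C^\top}=0\},\\
\Lambda_M &= \{a(\varepsilon_M)\in L^2_0 : \E{a(\varepsilon_M)\varepsilon_M^\top}=0\},\\
\Lambda_{tr} &= \{a(\varepsilon_t,\varepsilon_r)\in L^2_0 : \E{a\,\varepsilon_t}=0,\ \E{a\,\varepsilon_r}=0\}.
\end{align*}
The three summands are mutually orthogonal because the factors are independent. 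The mean-zero constraint on the perturbed density is what produces the linear side conditions. Absolute continuity and Assumption~\ref{assumption_regularity} guarantee that these scores are dense in the stated sets.

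\textbf{Step 2: compute the orthocomplement.} Decompose $L^2_0(\varepsilon)$ into orthogonal pieces: functions of a single block, plus interaction terms. An element $f\perp\Lambda_\varepsilon$ must have its $L^2_0(\varepsilon_C)$-projection orthogonal to $\Lambda_C$. Within $L^2_0(\varepsilon_C)$, that orthocomplement is the linear span of the coordinates of $\varepsilon_C$, giving the term $g_C^\top\varepsilon_C$. The same argument gives $g_M^\top\varepsilon_M$. For the $(t,r)$ block, the orthocomplement of $\Lambda_{tr}$ inside $L^2_0(\varepsilon_t,\varepsilon_r)$ is $\mathrm{span}\{\varepsilon_t,\varepsilon_r\}$.

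This conflicts with the claimed form $g_t\varepsilon_t+g_r(\varepsilon_t)\varepsilon_r^\perp$, which is the main obstacle. The stated set arises only if the nuisance model for $(\varepsilon_t,\varepsilon_r)$ is larger. Specifically, the regression $\E{\varepsilon_r\mid\varepsilon_t}$ must be unrestricted and only $\varepsilon_t$ required to be mean zero, which amounts to a regression-type restriction where $\varepsilon_r$ enters only through its linear mean.

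I would reconcile the two as follows. Since $X_r=\beta_{rC}X_C+\beta_{rM}X_M+\varepsilon_r$ with $\varepsilon_r$ arbitrarily dependent on $\varepsilon_t$, the identifying restriction for $\beta_{r\cdot}$ is only a conditional moment restriction given $\varepsilon_t$. This matches the restricted moment model of Tsiatis, Thm.~4.8, whose orthocomplement is $\{g(\varepsilon_t)(\varepsilon_r-\E{\varepsilon_r\mid\varepsilon_t})\}$. The $\varepsilon_t$-part then contributes only $g_t\varepsilon_t$.

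Concretely, I would treat the joint law of $(\varepsilon_t,\varepsilon_r)$ as a marginal for $\varepsilon_t$ times a conditional law of $\varepsilon_r$ given $\varepsilon_t$ with free conditional mean. The tangent space then splits into three parts:
\begin{itemize}
\item scores $a(\varepsilon_t)$ with $\E{a\,\varepsilon_t}=0$;
\item scores $b(\varepsilon_t,\varepsilon_r)$ with $\E{b\mid\varepsilon_t}=0$ and $\E{b\,\varepsilon_r^\perp\mid\varepsilon_t}=0$;
\item the free conditional-mean direction.
\end{itemize}

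\textbf{Step 3: verify and finish.} Take $f(\varepsilon_t,\varepsilon_r)\in L^2_0$ orthogonal to this tangent space. Write $f=\E{f\mid\varepsilon_t}+(f-\E{f\mid\varepsilon_t})$. The first summand must be orthogonal to the first class of scores, which forces $\E{f\mid\varepsilon_t}=g_t\varepsilon_t$. The second summand must be orthogonal to the second class, which forces it to equal $g_r(\varepsilon_t)\varepsilon_r^\perp$ by the Hilbert-space argument of Tsiatis, Thm.~4.8.

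Interaction terms across blocks, such as $a(\varepsilon_C)b(\varepsilon_M)$, lie in the tangent space: they are scores of no factor, but the full unrestricted product structure does not constrain them. So they are excluded from the orthocomplement. I would check this by a direct inner-product computation against product-type perturbations.

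Conversely, each displayed element is orthogonal to every score. This follows from the moment side conditions and iterated conditional expectations.

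The hardest step is justifying the conditional-mean freedom for $\varepsilon_r$ given $\varepsilon_t$. I would derive it from the fact that $\beta_{rC}$ and $\beta_{rM}$ are identified only through $\varepsilon_r^\perp$, which makes the nuisance set effectively larger than literal mean-zero $\varepsilon_r$.
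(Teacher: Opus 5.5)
\textbf{There is a genuine gap.} You correctly observe in Step 2 that if the $(t,r)$ factor is unrestricted apart from mean zero, as Definition~\ref{definition_ANMFD} literally says, the complement in that block is $\mathrm{span}\{\varepsilon_t,\varepsilon_r\}$. Your reconciliation, however, runs in the wrong direction. Enlarging the nuisance model enlarges the tangent space and shrinks its complement. With $\E{\varepsilon_r\mid\varepsilon_t}$ free and only $\E{\varepsilon_t}=0$ imposed, the block tangent space is $\{f\in L^2_0(\varepsilon_t,\varepsilon_r):\E{f\varepsilon_t}=0\}$, and the complement collapses to $\mathrm{span}\{\varepsilon_t\}$.

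Concretely, your third class contains the scores $s_\delta=-\partial_{\varepsilon_r}\log p(\varepsilon_r\mid\varepsilon_t)\,\delta(\varepsilon_t)$. Integration by parts gives $\E{\varepsilon_r^\perp s_\delta\mid\varepsilon_t}=\delta(\varepsilon_t)$, so $\E{g_r(\varepsilon_t)\varepsilon_r^\perp s_\delta}=\E{g_r(\varepsilon_t)\delta(\varepsilon_t)}$. This vanishes for all $\delta$ only if $g_r\equiv 0$. Step 3 never tests this class, and your converse (``each displayed element is orthogonal to every score'') fails exactly there.

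The paper gets the stated form by the opposite choice. It sets $\Lambda_{\varepsilon_{t,r}}=\Lambda_{\varepsilon_t}\oplus\Lambda_{\varepsilon_{r|t}}$ with $\Lambda_{\varepsilon_{r|t}}=\{h(\varepsilon_t,\varepsilon_r):\E{h\mid\varepsilon_t}=0,\ \E{h\varepsilon_r\mid\varepsilon_t}=0\}$. These are only perturbations that \emph{preserve} $\E{\varepsilon_r\mid\varepsilon_t}$, as in the restricted moment model, where the conditional mean is specified rather than free. They are precisely your first two classes, with nothing added. The paper then shows $\Lambda_{\varepsilon_{r|t}}^\perp=\{h:\E{h\mid\varepsilon_{t,r}}=g_0(\varepsilon_t)+g_r(\varepsilon_t)\varepsilon_r^\perp\}$ via an explicit split $h=h_1+(h-h_1)$. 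Intersecting with $\Lambda_{\varepsilon_t}^\perp=\{h:\E{h\mid\varepsilon_t}=g_t\varepsilon_t\}$ forces $g_0(\varepsilon_t)=g_t\varepsilon_t$.

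Your handling of cross-block interactions is also wrong. Under the product factorization every nuisance score is a sum of single-block functions. Perturbing $p_C$ and $p_M$ separately yields the score $a(\varepsilon_C)+b(\varepsilon_M)$; a score $a(\varepsilon_C)b(\varepsilon_M)$ would need a perturbation that breaks independence, which leaves the model. For mean-zero $a,b$ and any $C$-block score $s$, independence gives $\E{a(\varepsilon_C)b(\varepsilon_M)s(\varepsilon_C)}=\E{as}\E{b}=0$, and likewise for the other blocks. So such products lie in $\Lambda_\varepsilon^\perp$, and the inner-product check you propose would prove the opposite of your claim.

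The paper's proof avoids this by characterizing each complement through conditional expectations, $\Lambda_{\varepsilon_j}^\perp=\{h:\E{h\mid\varepsilon_j}=g_j^\top\varepsilon_j\}$. This is exactly your Step 2 projection argument, and it admits interactions automatically. The displayed set is thus the main-effects part of $\Lambda_\varepsilon^\perp$. The full complement is that set plus all functions with zero conditional mean given each of $\varepsilon_C$, $\varepsilon_{t,r}$, $\varepsilon_M$. The rest of the paper needs this reading: the efficient influence functions $\varepsilon_M\varepsilon_t\Var{\varepsilon_t}^{-1}$ and $\varepsilon_r^\perp\varepsilon_M^\top\Var{\varepsilon_M}^{-1}$ in Theorem~\ref{thm_full_data_EIF} are exactly such cross-block products and must be orthogonal to $\Lambda_\varepsilon$. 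An argument that excludes interaction terms from the complement therefore cannot be completed.
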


\begin{proof}
Using the factorization of the error term in Equation \eqref{formula_varepsilon_factorization}, the nuisance tangent space $ \Lambda_\varepsilon $ decomposes as $ \Lambda_{\varepsilon_C} \oplus \Lambda_{\varepsilon_{t,r}} \oplus \Lambda_{\varepsilon_M} $. Only the component $ \Lambda_{\varepsilon_{t,r}} $ needs further refinement and we instead use the representation $ \Lambda_{\varepsilon_{t,r}} = \Lambda_{\varepsilon_{t}} \oplus \Lambda_{\varepsilon_{r | t}}$. Let $ \zeta_j: \mathbb{R}^d \to \mathbb{R}^{d_j}, \quad \zeta: x \mapsto x_j $ denote the projection onto the components for all $ j \in \{ C, t, M, r\} $. Similar to \citep{Tsiatis_Semiparametric}, Thm. 4.6 and Thm 4.7, the nuisance tangent spaces for $ j \in \{ C, t, M\} $ are
\begin{align*}
\Lambda_{\varepsilon_j} &= \left\{ h \in L^2_0(\mathbb{R}^d) \,\middle|\, \exists h^\prime \in L^2_0(\mathbb{R}^{d_j}) : h =  h^\prime \circ \zeta_j \wedge \E{h \varepsilon_j} = 0 \right\} \quad \text{ , and} \\
\Lambda_{\varepsilon_{r|t}} &= \left\{ h \in L^2(\mathbb{R}^d) \,\middle|\, \exists h^\prime \in L^2(\mathbb{R}^2) : h =  h^\prime \circ \zeta_{t,r} \wedge \forall \varepsilon_t:  \E{h \,\middle|\, \varepsilon_t} = 0 ,  \E{h \varepsilon_r  \,\middle|\, \varepsilon_t} = 0 \right\}.
\end{align*}
Any concatenation in the restriction, e.g., $ h^\prime \circ \zeta_C$, must be understood that the function $ h $ is only allowed to depend on the entries in respective component, here $ C$. The tangent spaces $ ( \Lambda_{\varepsilon_C}, \Lambda_{\varepsilon_t}, \Lambda_{\varepsilon_M}) $ are pairwise orthogonal as they act on different components and so is $ \Lambda_{\varepsilon_{r|t}} $ to $ ( \Lambda_{\varepsilon_C}, \Lambda_{\varepsilon_M}) $. For $ h_1 \in \Lambda_{\varepsilon_t} $ and $ h_2 \in \Lambda_{\varepsilon_{r|t}} $ we find orthogonality via $
\E{h_1^\top h_2} = \E{\E{h_1^\top h_2 \mid \varepsilon_t}} = \E{h_1^\top \E{ h_2 \mid \varepsilon_t}} = 0 $ as $ h_1 $ is $ \sigma(\varepsilon_t) $-measurable and $ \E{ h_2 \mid \varepsilon_t} = 0$. Next, we show that the orthogonal space to $ \Lambda_{\varepsilon}$ is given by Equation \eqref{formula_nuisance_orthogonal_compelement}. Since the nuisance tangent space $ \Lambda_{\varepsilon}$ is a direct sum, its orthogonal complement is the intersection of the orthogonal complements of its components. For $ j \in \{ C, t, M \} $, consider the proposed orthogonal spaces
\[ 
\Lambda_{\varepsilon_j, \text{prop}}^\perp =  \left\{h \in L^2(\mathbb{R}^{d}) \, \middle| \, \exists g_j \in \mathbb{R}^{d_j} \text{ with } \E{h| \varepsilon_j} =  g_j^\top \varepsilon_j  \right\}. 
\] 
To see $ \Lambda_{\varepsilon_j, \text{prop}}^{\perp} \subseteq \Lambda_{\varepsilon_j}^{\perp} $, note that for any $ a \in \Lambda_{\varepsilon_j} $ and $ h \in \Lambda_{\varepsilon_j, \text{prop}}^{\perp} $, we find $ \E{ah} = \E{a \E{ h \, \middle| \, \varepsilon_j} } = g_j^\top \E{a  \varepsilon_j} = 0$. To show $ \Lambda_{\varepsilon_j, \text{prop}}^{\perp} \supseteq \Lambda_{\varepsilon_j}^{\perp} $, let $ h \in L_0^2(\mathbb{R}^d) $ be arbitrary. We define the function $ h_1 := ( h - \E{h\, \middle| \, \varepsilon_j}) + \E{h \varepsilon_j}^\top \Var{\varepsilon_j}^{-1}  \varepsilon_j \in \Lambda_{\varepsilon_j, \text{prop}}^{\perp}$ and argue that $ h - h_1 \in \Lambda_{\varepsilon_j} $. The function $ h - h_1 = \E{h\, \middle| \, \varepsilon_j} - \E{h \varepsilon_j}^\top \Var{\varepsilon_j}^{-1}  \varepsilon_j $ is by construction $ \sigma(\varepsilon_j)$-measurable and as also $ \E{ h - h_1 } = 0 $ as well as $ \E{ ( h - h_1) \varepsilon_j } = 0 $, we conclude $  \Lambda_{\varepsilon_j, \text{prop}}^{\perp} = \Lambda_{\varepsilon_j}^{\perp} $. Analogously, we define for the conditional component
\[
\Lambda_{\varepsilon_{r | t}, \text{prop}}^\perp =  \left\{h \in L^2(\mathbb{R}^{d}) \, \middle| \, \exists g_0, g_r \in L^2(\mathbb{R}) : \E{h| \varepsilon_{t,r}} =  g_0(\varepsilon_t) + g_r(\varepsilon_t)^\top \varepsilon_r^\perp \right\}. 
\] 
To see $ \Lambda_{\varepsilon_{r|t}, \text{prop}}^{\perp} \subseteq \Lambda_{\varepsilon_{r|t}}^{\perp} $, note that for any $ a \in \Lambda_{\varepsilon_{r|t}} $ and $ h \in \Lambda_{\varepsilon_{r|t}, \text{prop}}^{\perp} $, we find 
\begin{align*}
\E{ah} & = \E{a \E{ h \, \middle| \, \varepsilon_{t,r}} } = \E{a g_0(\varepsilon_t)} +  \E{a  g_r(\varepsilon_t)^\top \varepsilon_r^\perp} \\ &= \E{\E{a| \varepsilon_t} g_0(\varepsilon_t)} +  \E{a  g_r(\varepsilon_t)^\top \varepsilon_r}  - \E{a  g_r(\varepsilon_t)^\top \E{\varepsilon_r | \varepsilon_t}} \\ & = 0 + \E{ g_r(\varepsilon_t)^\top\E{a \varepsilon_r| \varepsilon_t}   }  - \E{\E{a| \varepsilon_t}  g_r(\varepsilon_t)^\top \E{\varepsilon_r | \varepsilon_t}} = 0 - 0.
\end{align*}
To show $ \Lambda_{\varepsilon_{r|t}, \text{prop}}^{\perp} \supseteq \Lambda_{\varepsilon_{r|t}}^{\perp} $, let $ h \in L_0^2(\mathbb{R}^d) $ be arbitrary. We define the function 
\[ 
h_1 := ( h - \E{h\, \middle| \, \varepsilon_{t,r}}) + \E{h  \, \middle| \, \varepsilon_t}  +  \E{h \varepsilon_r^\perp \, \middle| \, \varepsilon_t}^\top \Var{\varepsilon_r \, \middle| \, \varepsilon_t}^{-1} \varepsilon_r^\perp \in \Lambda_{\varepsilon_{r|t}, \text{prop}}^{\perp} 
\]
 and argue that $ h - h_1 \in \Lambda_{\varepsilon_{r|t}} $. The function $ h - h_1 $ is by construction $ \sigma(\varepsilon_{t,r})$-measurable. Using $ \E{  \varepsilon_r (\varepsilon_r^\perp)^\top \, \middle| \, \varepsilon_t} =  \Var{  \varepsilon_r \, \middle| \, \varepsilon_t} $  we find 
\begin{align*}
\E{ h - h_1  \, \middle| \, \varepsilon_t } & = \E{\E{h\, \middle| \, \varepsilon_{t,r}} - \E{h  \, \middle| \, \varepsilon_t}  -  \E{h \varepsilon_r^\perp \, \middle| \, \varepsilon_t}^\top \Var{\varepsilon_r \, \middle| \, \varepsilon_t}^{-1} \varepsilon_r^\perp  \, \middle| \, \varepsilon_t} \\ 
& = 0 - \E{h \varepsilon_r^\perp \, \middle| \, \varepsilon_t}^\top \Var{\varepsilon_r \, \middle| \, \varepsilon_t}^{-1} \E{ \varepsilon_r^\perp \, \middle| \, \varepsilon_t } = 0 \quad \text{, and} \\ 
\E{ ( h - h_1) \varepsilon_r \, \middle| \, \varepsilon_t} 
&= \E{\E{h\, \middle| \, \varepsilon_{t,r}} \varepsilon_r - \E{h\, \middle| \, \varepsilon_{t}} \varepsilon_r -   \E{h \varepsilon_r^\perp \, \middle| \, \varepsilon_t}^\top \Var{\varepsilon_r \, \middle| \, \varepsilon_t}^{-1} \varepsilon_r^\perp \varepsilon_r \, \middle| \, \varepsilon_t}\\
& = \E{h( \varepsilon_r - \E{\varepsilon_r | \varepsilon_t}) \varepsilon_r  \, \middle| \, \varepsilon_t}   - \E{  \varepsilon_r (\varepsilon_r^\perp)^\top \, \middle| \, \varepsilon_t}   \Var{\varepsilon_r \, \middle| \, \varepsilon_t}^{-1} \E{h \varepsilon_r^\perp \, \middle| \, \varepsilon_t}  =0.
\end{align*}
Intersecting the orthogonal complements $ \Lambda_{\varepsilon_C}^\perp, \Lambda_{\varepsilon_{t}}^\perp,  \Lambda_{\varepsilon_M}^\perp, \Lambda_{\varepsilon_{t | r}}^\perp $ yields the expression for $ \Lambda_\varepsilon^\perp $ in the lemma. The linear $ \varepsilon_t $ term in $ \Lambda_{\varepsilon_{t}}^\perp $ eliminates the functional term $ g_0(\varepsilon_t) $ from the conditional component $ \Lambda_{\varepsilon_{t | r}}^\perp $, completing the characterization. 
\end{proof}
\begin{proof}[Theorem \ref{thm_full_data_EIF}]
To obtain the efficient score for parameter $ \beta_{ij} $, $ i,j \in [d]$, in $ \mathcal{M}_1 $, we project the full-data score $ S_{\beta_{ij}}^F = (\partial / \partial \beta_{ij}) \log (p_\varepsilon) $ onto the nuisance tangent space $ \Lambda_{\varepsilon} $. The projection takes the form
\[
\Pi( S_{\beta_{ij}}^F | \Lambda_{\varepsilon}  ) = 
\begin{pmatrix}
\E{S_{\beta_{ij}}  \varepsilon_C} \\ 
\E{S_{\beta_{ij}}  \varepsilon_t} \\ 
\E{S_{\beta_{ij}}  \varepsilon_M} \\
\E{S_{\beta_{ij}} \varepsilon_r^\perp | \varepsilon_t}
\end{pmatrix}^\top
\E{
\begin{pmatrix}
\varepsilon_C \varepsilon_C^\top & 0 & 0 & 0 \\ 
0 & \varepsilon_t^2 & 0 & 0 \\ 
0 & 0 & \varepsilon_M \varepsilon_M^\top & 0 \\ 
0 & 0 & 0 & \varepsilon_r^2 \\ 
\end{pmatrix}
}^{-1} 
\begin{pmatrix}
\varepsilon_C \\
\varepsilon_t \\
\varepsilon_M \\ 
\varepsilon_r^\perp \\ 
\end{pmatrix}
\]
The expectations $ \E{S_{\beta_{ij}}  \varepsilon_k} $ follow from the centered the error terms $ \varepsilon_C, \varepsilon_t, \varepsilon_M, \varepsilon_r^\perp $ and the structural equation. Let $ i \neq r $ and let $ p_{\varepsilon_i, 0} $ denote the true density of~$ \varepsilon_i $. We have 
\[ 
\int \left(x_i -  \beta_{ij} x_j  - \beta_{i(-j)} x_{-j} \right) p_{\varepsilon_i, 0}\left(x_i -  \beta_{ij} x_j  - \beta_{i(-j)} x_{-j} \right) \; \mathrm{d}x_i = 0 
\]
for all $ \beta \in \mathcal{B} $ and $ x_{-i} = x_{1, \dots, i-1, i+1, \dots, d} \in \mathbb{R}^{d - 1} $. Note that $ \beta_{i, i} = 0 $.
Taking the derivative w.r.t. $ \beta_{ij} $ at the truth $ \beta_{ij, 0 } $ and interchanging integration and differentiation yields for $ \beta_{ij, 0 } \neq 0 $ 
\begin{align} \label{equation_expectation_score_error}
\int (- x_j) p_{\varepsilon_i, 0}(\varepsilon_i) \; \mathrm{d} \varepsilon_i + \int \varepsilon_i S_{\beta_{ij}}(\varepsilon_i) p_{\varepsilon_i, 0}(\varepsilon_i) \; \mathrm{d} \varepsilon_i = 0 \Leftrightarrow  X_j = \mathrm{E}_{\varepsilon_i}[ \varepsilon_i S_{\beta_{ij}}].
\end{align}
If $ \beta_{ij, 0 } $ is however equal to $ 0 $, this derivation gives $ \mathrm{E}_{\varepsilon_i}[ \varepsilon_i S_{\beta_{ij}}] =0 $. Similarly, we get $ \mathrm{E}_{\varepsilon_i}[ \varepsilon_i S_{\beta_{kj}}] =0 $ for all $ k \neq i $ as $ \varepsilon_i $ expressed in $ X $ is not depending on $ \beta_{kj} $. Under $ i = r$, we can rely for all $ \varepsilon_t $ on 
\[
\int \left(x_r -  \beta_{r,} x  - \E{\varepsilon_r | \varepsilon_t} \right) p_{\varepsilon_r | \varepsilon_t, 0}\left(x_r -  \beta_{r,} x, \varepsilon_t \right) \; \mathrm{d}x_r = 0 .
\]
Now we differentiate $ \partial / \partial \beta_{rj} $ to evaluate at $ \beta_{rj, 0 } \neq 0 $ and get $ X_j = \mathrm{E}_{\varepsilon_r | \varepsilon_t}[ \varepsilon_r^\perp S_{\beta_{rj}} | \varepsilon_t ] $ from
\begin{align*}
\int (- x_j) p_{\varepsilon_r | \varepsilon_t, 0}(\varepsilon_r \mid \varepsilon_t) \; \mathrm{d} \varepsilon_r + \int \varepsilon_r^\perp S^{r | t}_{\beta_{rj}}(\varepsilon_r \mid \varepsilon_t) p_{\varepsilon_r | \varepsilon_t, 0}(\varepsilon_r \mid \varepsilon_t) \; \mathrm{d} \varepsilon_r & = 0 \\ 
\Leftrightarrow \int (- x_j) p_{\varepsilon_r | \varepsilon_t, 0}(\varepsilon_r \mid \varepsilon_t) \; \mathrm{d} \varepsilon_r + \int \varepsilon_r^\perp S_{\beta_{rj}}(\varepsilon) p_{\varepsilon_r | \varepsilon_t, 0}(\varepsilon_r \mid \varepsilon_t) \; \mathrm{d} \varepsilon_r &= 0 
\end{align*}
as the score decomposed due to independencies into $ S(\varepsilon) = S^{r | t}(\varepsilon_r | \varepsilon_t) + \sum_{j \in \{ C, t, M\}} S^j(\varepsilon_j) $. The remaining scores $ S_{\beta_{kj}} $ all have zero conditional expectation with the residual response $ \varepsilon_r^\perp $ given $ \varepsilon_t $. The efficient influence function $ \varphi_{\beta} $ is obtained by orthogonalizing the projected score $ \Pi( S_{\beta_{ij}}^F | \Lambda_{\varepsilon}  ) $ with respect to the parameter space $ \beta $ and normalizing by the inverse Fisher information matrix $ \E{S^{F, \text{eff}} (S^{F, \text{eff}})^\top}^{-1}$. This yields for $ \beta_{tC} $ the full-data efficient influence function
\begin{align*}
\varphi_{\beta_{tC}}^{F, \text{prop}}  & =  ( \varepsilon_t X_C^\top \E{\varepsilon_t^2}^{-1} ) \E{ \E{\varepsilon_t^2}^{-1} X_C \varepsilon_t^2 X_C^\top \E{\varepsilon_t^2}^{-1} }^{-1} \\
& = \varepsilon_tX_C^\top \Var{X_C}^{-1} = \varepsilon_t \varepsilon_C^\top \Var{\varepsilon_C}^{-1}.
\end{align*}
In this step, we have orthogonalized the components within $ \varphi_{\beta_{tC}}^{F, \text{prop}}  $ such that we find 
\[
\E{ S_{\beta_{tC}}^\top  \varphi_{\beta_{tC}}^{F, \text{prop}}} = \E{ \E{S_{\beta_{tC}}^\top  \varepsilon_t \, \middle| \varepsilon_C }  \varepsilon_C^\top \Var{\varepsilon_C}^{-1} }  = \E{ X_C  \varepsilon_C^\top \Var{\varepsilon_C}^{-1} } = \mathrm{Id},
\] 
where $ \E{S_{\beta_{tC}}^\top  \varepsilon_t \, \middle| \varepsilon_C }  = X_C $ follows as a result of $  \E{S_{\beta_{tC}}^\top  \varepsilon_t }  = X_C $ from Equation \eqref{equation_expectation_score_error}. Using the relation $ \E{S_{\beta_{ij}}^\top \varepsilon_t }  = 0 $ for all $ i \neq t $ or $ j \not\in C $ we obtain $ \E{ S_{\beta_{ij}}^\top \varphi_{\beta_{tC}}^{F, \text{prop}}}  = 0 $ for all remaining parameter components. So $ \varphi_{\beta_{tC}}^{F, \text{prop}}$ is an influence function by \cite{Tsiatis_Semiparametric}, Thm. 4.2. As $ \varphi_{\beta_{tC}}^{F, \text{prop}}$ lies after projection in the tangent space $ \mathscr{T}$, it must be the efficient influence function by \citep{Tsiatis_Semiparametric}, Thm 4.3, proving the efficiency of $ \varphi_{\beta_{tC}}^{F, \text{prop}} = \varphi_{\beta_{tC}}^{F, \text{eff}} $ in Model $ \mathcal{M}_1 $ for the parameter $ \beta_{tC}$.  Analogously we obtain $ \varphi_{\beta_{Mt}}^{F, \text{eff}} = \varepsilon_M \varepsilon_t \Var{\varepsilon_t}^{-1} $. 
When we however focus on the remaining components of $ \beta $, say $ \beta_{MC} $, scores of other $ \beta $ components corresponding to incoming edges of $ M $, so $ S_{\beta_{Mt} } $ have non-zero expectation with $ \varepsilon_M $. Correcting for these components is the orthogonalization in the parameter space. In particular, the efficient influence function for $ \beta_{MC} $ reads as
\[
\varphi_{\beta_{MC}}^{F, \text{eff}} = \varepsilon_M ( \varepsilon_C^\top \Var{\varepsilon_C}^{-1} - \varepsilon_t \Var{\varepsilon_t}^{-1} \beta_{tC}). 
\]
For any $ i \in [d_M], j \in [d_C] $, we obtain, with the unit vectors $ e_i, e_j $, 
\begin{align*}
\E{ S_{\beta_{M_i C_j}} \varphi_{\beta_{MC}}^{F, \text{eff}} } & = \E{ \E{ S_{\beta_{M_iC_j}} \varepsilon_M \, \middle| \, \varepsilon_{C,t}} ( \varepsilon_C^\top \Var{\varepsilon_C}^{-1} - \varepsilon_t \Var{\varepsilon_t}^{-1} \beta_{tC} ) }  \\ 
& = \E{ e_i e_j^\top X_C  ( \varepsilon_C^\top \Var{\varepsilon_C}^{-1} - \varepsilon_t \Var{\varepsilon_t}^{-1} \beta_{tC} ) }  = e_i e_j^\top
\end{align*}
as $ \E{X_C \varepsilon_t} = 0 $. Now we can inspect the inner product of $ \varphi_{\beta_{MC}}^{F, \text{eff}} $  with the score $ S_{\beta_{Mt}} $ to find
\begin{align*}
\E{ S_{\beta_{M_it}} \varphi_{\beta_{MC}}^{F, \text{eff}} } & = \E{ \E{ S_{\beta_{M_i t}} \varepsilon_M \, \middle| \, \varepsilon_{C,t}} ( \varepsilon_C^\top \Var{\varepsilon_C}^{-1} - \varepsilon_t \Var{\varepsilon_t}^{-1} \beta_{tC} ) }  \\ 
& = \E{ e_i X_t  ( \varepsilon_C^\top \Var{\varepsilon_C}^{-1} - \varepsilon_t \Var{\varepsilon_t}^{-1} \beta_{tC} ) } \\ 
& = \E{ e_i ( \beta_{tC} \varepsilon_C + \varepsilon_t)  ( \varepsilon_C^\top \Var{\varepsilon_C}^{-1} - \varepsilon_t \Var{\varepsilon_t}^{-1} \beta_{tC} ) }  = 0. 
\end{align*}
The crucial point is the fact that $ X_t $ decomposes to $\beta_{tC} \varepsilon_C + \varepsilon_t $. Using that the expectation of any other score component multiplied with $ \varepsilon_M $ is zero, we have thus shown that the proposed function is orthogonal in the parameter nuisance space. As $ \varphi_{\beta_{MC}}^{F, \text{eff}} $ lies as a result of linearity in the nuisance tangent space $ \mathscr{T} $, it must be efficient for $ \beta_{MC} $ in $ \mathcal{M}_1 $. Using $ X_j = \mathrm{E}_{\varepsilon_r | \varepsilon_t}[ \varepsilon_r^\perp S_{\beta_{rj}} | \varepsilon_t ] $ we find $ \varphi_{\beta_{rM}}^{F, \text{eff}} = \varepsilon_r^\perp \varepsilon_M^\top \Var{\varepsilon_M}^{-1} $, which is also orthogonal to all functions in $ \Lambda_\varepsilon^\perp $. 
Analogously proceeding, we summarize the full-data efficient influence function for $ \beta_{(t,M,r)(C,t,M)} $ in $ \mathcal{M}_1 $ to be $ \varphi_{\beta_{(t,M,r)(C,t,M)}}^{F, \text{eff}} = $ 
\[ \resizebox{\textwidth}{!}{ $
\begin{pmatrix}
\varepsilon_t \varepsilon_C^\top \Var{\varepsilon_C}^{-1} & 0 & 0  \\ 
\varepsilon_M ( \varepsilon_C^\top \Var{\varepsilon_C}^{-1} - \varepsilon_t \Var{\varepsilon_t}^{-1} \beta_{tC}) &  \varepsilon_M \varepsilon_t \Var{\varepsilon_t}^{-1} & 0 \\ 
\varepsilon_r^\perp (\varepsilon_C^\top \Var{\varepsilon_C}^{-1} - \varepsilon_M^\top \Var{\varepsilon_M}^{-1}(\beta_{Mt}\beta_{tC} + \beta_{MC})) & 0 &  \varepsilon_r^\perp \varepsilon_M^\top \Var{\varepsilon_M}^{-1}  \\ 
\end{pmatrix}. $}\]
The component $ \beta_{S,} $ is irrelevant in the efficient estimation of $ \xi $ and can therefore be considered as nuisance; see also \cite{bhattacharya2020identification}. 
Finally, by linearity of the causal effect $ \xi = \beta_{rM} \beta_{Mt} $, the efficient influence function for $ \xi $ must be \[ \varphi_{\xi}^{F, \text{eff}} = \beta_{rM} \varepsilon_M \varepsilon_t \Var{\varepsilon_t}^{-1} + \varepsilon_r^\perp \varepsilon_M^\top \Var{\varepsilon_M}^{-1} \beta_{Mt} ,\]
the corresponding linear combination of the efficient influence functions for $ \beta_{rM} $ and $ \beta_{Mt} $. This function is orthogonal to all nuisance directions and therefore efficient in $ \mathcal{M}_1 $.
\end{proof}

\subsection{Proof of Lemma \ref{lemma_observed_data_EIF}: Efficient Influence Function in Model \texorpdfstring{$\mathcal{M}_\pi$}{M\_pi}}
\restatement{Lemma \ref{lemma_observed_data_EIF}}{\TextLemmaObservedDataEIF}
\begin{proof}
We define the coarsened model tangent space $ \Lambda_\pi $ as the tangent space in Model $ \mathcal{M}_\pi $ with respect to the propensity function $ \pi $. The corresponding observed-data augmentation space is 
\begin{align*}
\Lambda_{2,\pi} = &\left\{\sum_{\delta \in \{ 1, 2\}}  \frac{I(\Delta = \delta) - (1 - \pi_\delta)I(\Delta \geq \delta)}{\Pi_{j = 1}^\delta \pi_j} (h_\delta \circ G_\delta)  \, \middle| \, \right. \\ 
&\qquad \qquad \left. h_1 \in L_0^2(\mathbb{R}^{d_C + 1}, \mathbb{R}), h_2 \in L_0^2(\mathbb{R}^{d_C + 1 + d_M}, \mathbb{R}) \right\}. 
\end{align*}
For any full-data influence function $ \varphi^F $, the set of all corresponding augmented observed-data influence functions in Model $ \mathcal{M}_\pi $ is the functional space
\[
IF^\pi(\varphi^F) =  \left\{ \left[\frac{I(\Delta = \infty)\varphi^F}{\pi_1 \pi_2} + h \right] - \Pi\left([\, \cdot \,] |  \Lambda_\pi \right) \, \middle| \, \forall h \in \Lambda_{2,\pi} \right\}.
\]
Applying \cite{Tsiatis_Semiparametric}, Thm 10.1 and Thm 10.4, to our setting yields the variance-minimizing, optimal element in $ IF^\pi(\varphi^F) $. Denoting this optimal augmented full-data influence function by $ \varphi^\ast $, its observed-data representation is
\begin{align*}
\mathcal{J}(\varphi^\ast) &= \frac{I(\Delta = \infty)\varphi^\ast}{\pi_1 \pi_2 } + \frac{I(\Delta = 1) -  (1 - \pi_1) I(\Delta \geq 1)}{\pi_1} \E{ \varphi^\ast \mid \varepsilon_{C,t}} \\ 
& \quad + \frac{I(\Delta = 2) - (1 - \pi_2) I(\Delta \geq 2)}{\pi_1 \pi_2} \E{ \varphi^\ast \mid \varepsilon_{C,t,M}}.
\end{align*}
Each component in the full-data efficient influence function $ \varphi_{\beta}^{F,\text{eff}} $ satisfies \[
\E{ \varphi_{\beta_{ij}}^{F,\text{eff}} | \varepsilon_{C,t}} \in \{ 0, \varphi_{\beta_{ij}}^{F,\text{eff}}\}, \quad \E{ \varphi_{\beta_{ij}}^{F,\text{eff}} | \varepsilon_{C,t,M}} \in \{ 0, \varphi_{\beta_{ij}}^{F,\text{eff}} \}  \quad,  i,j \in [d] .
\]
Consequently, the optimal full-data influence function within the augmentation space $ IF^\pi(\varphi_{\beta}^{F,\text{eff}}) $, and analogously within $ IF^\pi(\varphi_{\xi}^{F,\text{eff}}) $, simplifies~to
\[
\varphi_{\beta_{(t,M,r)(C,t,M)}}^\text{{opt}}(X; \pi)  = \begin{pmatrix}
\text{diag}(1) & 0 & 0  \\
0 & \frac{\text{diag}\left(I(\Delta \geq 2)\right)}{\pi_1(X_{C,t})}& 0 \\
0 & 0 & \frac{\text{diag}\left(I(\Delta = \infty)\right)}{\pi_1(X_{C,t})\pi_2(X_{C,t,M})} \\
\end{pmatrix} \varphi^{F,\text{eff}}_{\beta_{(t,M,r)(C,t,M)}}(X) 
\]
and 
\[
\varphi_{\xi}^\text{{opt}}(X; \pi)  = \frac{I(\Delta \geq 2)\beta_{rM} \varepsilon_M \varepsilon_t \Var{\varepsilon_t}^{-1} }{\pi_1(X_{C,t})} + \frac{I(\Delta = \infty)\varepsilon_r^\perp \varepsilon_M^\top \Var{\varepsilon_M}^{-1} \beta_{Mt}}{\pi_1(X_{C,t})\pi_2(X_{C,t,M})}.
\]
It is important to note that $ \varphi_{\beta} $ might not be the efficient influence function in the observed-data Model $ \mathcal{M}_\pi $. Every full-data influence function lies in the affine space $ \varphi_{\beta}^{F,\text{eff}} + \Lambda_\varepsilon^\perp $ as characterized in Equation \eqref{formula_nuisance_orthogonal_compelement}. Because the operator $ \mathcal{J}$ is linear, the efficient observed-data influence function for $\beta $ must take the form $ \varphi_{\beta}^{F,\text{eff}} + u $, where $ u \in \Lambda_\varepsilon^\perp $ is chosen to minimize the variance 
\[ \Var{\mathcal{J}(\varphi^\text{F,eff}_\beta)} + \Var{\mathcal{J}(u)} + 2 \mathrm{Cov}\left(\mathcal{J}(u), \mathcal{J}(\varphi^\text{F,eff}_\beta)\right). \]
\end{proof}

\subsection{Proof of Lemma \ref{lemma_linear_estimator}: Optimized Estimators are Solutions to Linear Equations} \label{appendix_proof_nested_least_squares}
\restatement{Lemma \ref{lemma_linear_estimator}}{\TextLemmaEfficientEstimator}
\begin{proof}
We prove the statement by deriving the optimized estimator for $ \beta_{(t,M,r)(C,t,M)} $ and subsequently applying the delta method to $ \xi = \phi(\beta) \equiv \beta_{rM} \beta_{Mt} $.
Let $ \hat{\beta}_n $ denote the solution to the $ d^2 $ estimating equations $ 0^{d^2} = \sum_{i} \varphi^{\text{opt},\pi}_{\beta} (X^{(i)}; \theta)$ on the  observed-data influence $ \varphi^{\text{opt},\pi}_{\beta} $ function from Lemma \ref{lemma_observed_data_EIF}. This estimator satisfies $ \sqrt{n} ( \hat{\beta}_n -  \beta ) \overset{P_{\theta; \pi}}{\longrightarrow} \mathcal{N}_{d^2} \left( 0, \textrm{Var}(\varphi^{\text{opt},\pi}_{\beta}) \right) $ (\cite{Tsiatis_Semiparametric}, Chapter~3.3). Using the structural equation model $ X = \beta X + \varepsilon $,  the relevant estimating equations take the form
\[
\resizebox{\textwidth}{!}{$
\begin{pmatrix}
0^{d_C} \\ 0^{d_M} \\ 0^{d_M \times d_C} \\ 0^{1 \times d_M} \\ 0^{1 \times d_C} 
\end{pmatrix} = \begin{pmatrix}
\sum_{i = 1}^n  \left(X_t^{(i)} - \beta_{Ct} X_{C}^{(i)}  \right) X_C^{(i)\top} \Var{\varepsilon_C}^{-1} \\ 
\sum_{i = 1}^n  \frac{I(\Delta \geq 2)}{\pi_1(X_{C,t}^{(i)})} \left(X_{M}^{(i)} - \beta_{Mt} X_{t}^{(i)} - \beta_{MC} X_C^{(i)} \right) \hat\varepsilon_t^{(i)} \Var{\varepsilon_t}^{-1} \\ 
\sum_{i = 1}^n  \frac{I(\Delta \geq 2)}{\pi_1(X_{C,t,M}^{(i)})} \left(X_{M}^{(i)} - \beta_{Mt} X_{t}^{(i)} - \beta_{MC} X_C^{(i)} \right)^\top \left(X_{C}^{(i)\top} \Var{\varepsilon_C}^{-1}  - \hat\varepsilon_t^{(i)} \Var{\varepsilon_t}^{-1} \beta_{tC}  \right) \\
\sum_{i = 1}^n  \frac{I(\Delta = \infty)}{\pi_1(X_{C,t}^{(i)})\pi_2(X_{C,t,M}^{(i)})} \left(X_{r}^{(i)} - \beta_{rM} X_{M}^{(i)} - \beta_{rC} X_C^{(i)} - \E{X_{r}^{(i)} - \beta_{rM} X_{M}^{(i)} - \beta_{rC} X_C^{(i)} | \varepsilon_t} \right)^\top \hat\varepsilon_M^{(i)} \Var{\varepsilon_M}^{-1} \\
\sum_{i = 1}^n  \frac{I(\Delta = \infty)}{\pi_1(X_{C,t}^{(i)})\pi_2(X_{C,t,M}^{(i)})} \left(X_{r}^{(i)} - \beta_{rM} X_{M}^{(i)} - \beta_{rC} X_C^{(i)} - \E{X_{r}^{(i)} - \beta_{rM} X_{M}^{(i)} - \beta_{rC} X_C^{(i)} | \varepsilon_t} \right)^\top \left(X_{C}^{(i)\top} \Var{\varepsilon_C}^{-1}  - \hat\varepsilon_M^{(i)} \Var{\varepsilon_M}^{-1} (\beta_{Mt} \beta_{tC} + \beta_{MC})  \right)  
\end{pmatrix}.
$}
\]
Each block corresponds to a linear equation system and can be solved iteratively. The first block uniquely determines the closed-form estimator for $ \beta_{tC} $ to be \[
\hat\beta_{tC, n} =  \left(\sum_{i=1}^{n} X_t^{(i)} X_C^{(i)\top}\right) \left(\sum_{i=1}^{n} X_C^{(i)} X_C^{(i)\top}\right)^{-1}.
\] 
This allows us to replace to specify all treatment error terms as $ \hat{\varepsilon}_t^{(i)} = X_t^{(i)} - \hat\beta_{Ct, n} X_{C}^{(i)} $. The next two equation systems have to be solved simultaneously. Let $ W_i = {I(\Delta \geq 2)}/{\pi_1(X_{C,t}^{(i)})}$. Define the weighted empirical correlations
\begin{align*}
 &E_{Mt} = \sum_{i = 1}^n W_i X_{M}^{(i)} \hat\varepsilon_t^{(i)}, 
 &&E_{tt} = \sum_{i = 1}^n W_i X_{t}^{(i)} \hat\varepsilon_t^{(i)}, 
 &&E_{Ct} = \sum_{i = 1}^n W_i X_{C}^{(i)} \hat\varepsilon_t^{(i)}, \\ 
 &E_{MC} = \sum_{i = 1}^n W_i X_{M}^{(i)} \hat\varepsilon_C^{(i)\top}, 
 &&E_{tC} = \sum_{i = 1}^n W_i X_{t}^{(i)} \hat\varepsilon_C^{(i)\top}, 
 &&E_{CC} = \sum_{i = 1}^n W_i X_{C}^{(i)} \hat\varepsilon_C^{(i)\top}.
\end{align*}
Subtracting $ \beta_{tC} $ times the second system from the third system yields the joint solution to the linear equation
\[
\begin{pmatrix}
\hat\beta_{Mt,n} \\ 
\hat\beta_{MC,n} 
\end{pmatrix} = 
\begin{pmatrix}
E_{tt} & E_{Ct} \\ 
E_{tC}  & E_{CC} 
\end{pmatrix}^{-1}
\begin{pmatrix}
E_{Mt} \\ 
E_{MC} 
\end{pmatrix}.
\]
The mediating residuals are $ \hat\varepsilon_M^{(i)} = X_{M}^{(i)} - \hat{\beta}_{Mt, n} X_{t}^{(i)} - \hat{\beta}_{MC, n} X_C^{(i)} $. For the final set of equations, we incorporate the orthogonality restriction between $ \varepsilon_t $ and $ \varepsilon_r^\perp $. Since $\varphi_\beta^\text{opt} $ is orthogonal to all elements in $ \Lambda_{r|t} $, only $\varepsilon_t$-linear components of $ \E{\varepsilon_r^{(i)} | \varepsilon_t} $ contribute to the estimation. Hence, we can replace this conditional expectation by its best linear predictor $ \gamma \hat\varepsilon_t^{(i)} $. Let $ W_i = {I(\Delta = \infty)}/{\pi_1(X_{C,t}^{(i)})\pi_2(X_{C,t,M}^{(i)})}$ and define $ Z^{(i)}_1 = ( X_C^{(i)}, \hat\varepsilon_t^{(i)}, \hat\varepsilon_M^{(i)}) $, $ Z^{(i)}_2 = ( X_C^{(i)}, \hat\varepsilon_t^{(i)}, X_M^{(i)}) $, and $ \theta = (\beta_{rC}^\top, \gamma, \beta_{rM}^\top)^\top $. The last two equation systems for the estimators of $ \beta_{rC} $ and $ \beta_{rM}$ are jointly solved by
\[
\begin{pmatrix}
\sum_i W_i X_C^{(i)} (X_r^{(i)} - Z^{(i)\top}_2 \theta )  = 0^{d_C} \\ 
\sum_i W_i \hat\varepsilon_t^{(i)} (X_r^{(i)} - Z^{(i)\top}_2 \theta )  = 0 \\ 
\sum_i W_i \hat\varepsilon_M^{(i)} (X_r^{(i)} - Z^{(i)\top}_2 \theta )  = 0^{d_M}
\end{pmatrix}
\Leftrightarrow
\begin{pmatrix}
\hat\beta_{rC,n} \\
\hat\gamma_n \\  
\hat\beta_{rM,n}
\end{pmatrix} = (\sum_i W_i Z^{(i)}_1 Z^{(i)\top}_2)^{-1}  (\sum_i W_i Z^{(i)}_1 X_r^{(i)}).
\]
Using orthogonality of the efficient influence function components, i.e., $ \mathrm{Cov}\left( \varphi^{\text{opt},\pi}_{\beta_{Mt}}, \varphi^{\text{opt},\pi\top}_{\beta_{rM}}  \right) = 0 $, we obtain the joint asymptotic distribution
\[ 
\sqrt{n}\left(\begin{pmatrix}
\hat\beta_{Mt, n} \\ \hat\beta_{rM, n}^\top 
\end{pmatrix} - \begin{pmatrix}
\beta_{Mt} \\ \beta_{rM}^\top 
\end{pmatrix}\right) \overset{P_{\theta; \pi}}{\longrightarrow}
\mathcal{N}_{2d_M} \left(
0, 
\begin{pmatrix}
\Var{\varphi^{\text{opt},\pi}_{\beta_{Mt}}} & 0 \\ 0 & \Var{\varphi^{\text{opt},\pi}_{\beta_{rM}}} 
\end{pmatrix} \right). \]
Applying the delta method to $ \hat\xi_n =  \hat\beta_{rM, n} \hat\beta_{Mt,n} $ with $ \xi_n =  \beta_{rM} \hat\beta_{Mt} $ yields the asymptotic 
\[
 \sqrt{n}(\hat{\xi}_n - \xi) \overset{P_{\theta; \pi}}{\longrightarrow} \mathcal{N}\left(0, \nabla \phi(\beta)^{\top}  \begin{pmatrix}
\Var{\varphi^{\text{opt},\pi}_{\beta_{Mt}}} & 0 \\ 0 & \Var{\varphi^{\text{opt},\pi}_{\beta_{rM}}} 
\end{pmatrix}  \nabla \phi(\beta) \right),
\]
where $ \nabla \phi(\beta)^{\top} = (\beta_{rM}, \beta_{Mt}^\top)  $. Thus the asymptotic variance of $ \hat\xi_n = \hat\beta_{rM, n} \hat\beta_{Mt,n}$ coincides with the derived efficiency bound  $ \Var{\varphi_\xi^{\text{opt},\pi}} $ within $ IF^\pi(\varphi_\xi^{F,\text{eff}}) $, as stated in Lemma \ref{lemma_observed_data_EIF}.
\end{proof}

\subsection{Proof of Lemma \ref{lemma_effect_EIF_var}: Variance of Efficient Effect Estimator in Model \texorpdfstring{$\mathcal{M}_\pi$}{M\_pi}}
\restatement{Lemma \ref{lemma_effect_EIF_var}}{\TextLemmaEffectEIFVar}
\begin{proof}
Using the coarsening probabilities  $ \mathcal{P}\left( \Delta \geq 2  \,  | \, \varepsilon_{C,t} \right) = \pi_1(X_{C,t}) $,  $ \mathcal{P}\left( \Delta = \infty  \,  | \, \varepsilon_{C,t,M} \right) = \pi_1(X_{C,t})\pi_2(X_{C,t,M}) $, and the observation that $ \E{\varepsilon_r^\perp \varepsilon_t \varepsilon_M  \varepsilon_M^\top }  = 0 $ is cancelling mixed terms, we get 
\begin{align*}
& \Var{ \varphi_\xi^{\text{opt},\pi}} = \\ 
& \quad = \E{ \frac{I(\Delta \geq 2)}{\pi_1^2} \left(\beta_{rM} \varepsilon_M \varepsilon_t \Var{\varepsilon_t}^{-1}\right)^2 } + \E{ \frac{I(\Delta = \infty)}{\pi_1^2\pi_2^2} \left(\varepsilon_r^\perp \varepsilon_M^\top \Var{\varepsilon_M}^{-1} \beta_{Mt}\right)^2 } \\  
& \quad = \E{ \frac{\varepsilon_t^2}{ \Var{\varepsilon_t}^{2} \pi_1^2} \E{ I(\Delta \geq 2)  \,  | \, \varepsilon_{C,t} } \beta_{rM} \E{ \varepsilon_M  \varepsilon_M^\top  \,  | \, \varepsilon_{C,t} } \beta_{rM}^\top }\\ 
& \qquad +  \E{ \frac{\left(\varepsilon_M^\top \Var{\varepsilon_M}^{-1} \beta_{Mt}\right)^2}{\pi_1^2\pi_2^2} \E{ I(\Delta = \infty)  \,  | \, \varepsilon_{C,t,M} } \E{ \varepsilon_r^{\perp 2}\,  | \, \varepsilon_{C,t,M} }   } \\  
& \quad = \E{ \frac{\varepsilon_t^2 \beta_{rM}  \Var{\varepsilon_M} \beta_{rM}^\top }{ \Var{\varepsilon_t}^{2} \pi_1(X_{C,t})} } +  \E{ \frac{\left(\varepsilon_M^\top \Var{\varepsilon_M}^{-1} \beta_{Mt}\right)^2 \Var{ \varepsilon_r\,  | \, \varepsilon_{t} } }{\pi_1(X_{C,t})\pi_2(X_{C,t,M})}   } . 
\end{align*}
\end{proof}

\subsection{Proof of Theorem \ref{theorem_optimization}: Proof of the Optimization Problem}
\restatement{Theorem \ref{theorem_optimization}}{\TextTheoremOptimization}
\begin{proof}
The functions $ g_1$ and $ g_2 $ are defined so that $ \Var{\varphi_\xi^{\text{opt}}} = \mathrm{E}[g_1 / \pi_1 + g_2 / (\pi_1 \pi_2)]$.
Let $ \lambda \in \mathbb{R} $ and let $ \nu_1, \nu_2, \kappa_1, \kappa_2 $ be measurable functions. 
The Lagrangian $ \mathcal{L}(\pi, \lambda, \nu , \kappa)  $ associated with this optimization problem is
\begin{align} \label{equation_lagrangian}
 \mathcal{L}(\pi, \lambda, \nu , \kappa)  := \int & \frac{g_1 }{\pi_1} + \frac{g_2}{\pi_1 \pi_2} + \lambda (c_1 \pi_1 + c_2 \pi_1 \pi_2 - b_0 ) - \nu_1 \pi_1+ \kappa_1 (\pi_1 - 1) \\ & \qquad - \nu_2 \pi_2+ \kappa_2 (\pi_2 - 1) \; \text{d} \mathcal{P}_{\theta} ,
\end{align}
where we suppress the index $ \mathcal{P}_{\theta, \pi}  $ and write $ \mathcal{P}_{\theta}$, since $ \pi $ affects only the distribution of the sampling stage variable $ \Delta $ \citep{Ito2008}.  First, we consider a variation in the second-stage propensity. For $ h \in \mathcal L ^2 $, define $ \pi_{\gamma; 2} = (\pi_1, \pi_2 + \gamma h) $ and compute the derivative of the Lagrangian at $ \gamma = 0 $:
$$
\left. \frac{ \partial \mathcal L (\pi_{\gamma;2}, \lambda, \nu , \kappa)}{ \partial \gamma}\right\vert_{\gamma = 0} =  \int \left(  -\frac{g_2}{\pi_1\pi_2^2} + \lambda c_2 \pi_1 - \nu_2 + \kappa_2 \right) h \; \text{d} \mathcal{P}_{\theta}
$$
For any optimal $ \pi $, this expression must vanish for all $ h $, and therefore the bracketed term must vanish almost everywhere. For interior points of the constraint set, complementary slackness yields the pointwise condition
\begin{align} \label{equation_variation2}
-\frac{g_2}{\pi_1\pi_2^2} + \lambda c_2 \pi_1 = 0 \quad \Leftrightarrow \quad \pi_2^2 = \frac{g_2}{\lambda c_2 \pi_1^2} . 
\end{align}
Analogously, a variation on $ \pi_1 $ yields the necessary condition 
\begin{align} \label{equation_variation1}
-\frac{g_1}{\pi_1^2} - \frac{g_2}{\pi_1^2 \pi_2} + \lambda (c_1 + c_2 \pi_2) = 0 \quad \Leftrightarrow \quad \pi_1^2 = \frac{g_1 + g_2 / \pi_2}{\lambda (c_1 + c_2 \pi_2)}.
\end{align}
Substituting the expression for $ \pi_2 $ from Equation \eqref{equation_variation2} gives \[ g_2 / \pi_2 = \sqrt{g_2 \lambda c_2} \pi_1,  c_2 \pi_2 = \sqrt{g_2 c_2} / (\pi_1 \sqrt{\lambda}) \] and therefore Equation \eqref{equation_variation1} simplifies to $ \pi_1^2 = g_1 / (\lambda c_1) =: \tilde{\pi}_1^2 > 0 $. Using this representation in Equation \eqref{equation_variation2} yields $ \pi_2^2 = (g_2 c_1) / (g_1 c_2) =: \tilde{\pi}_2^2 > 0 $. Whenever both $ \tilde{\pi}_1^2 $ and $ \tilde{\pi}_2^2 $ lie in the interior of the feasible set, the pair $ \tilde{\pi}_{1,2} $ is optimal. If either exceeds 1, the corresponding component must be clipped at the boundary and the KKT conditions adjust accordingly. When $ \pi_2^\ast = 1 $ or $ \pi_1^\ast = 1 $ the Equations \eqref{equation_variation1} and \eqref{equation_variation2} reduce to $ \pi_1^2 = \frac{g_1 + g_2}{\lambda (c_1 + c_2)} $ and $ \pi_2^2 = \frac{g_2}{\lambda c_2} $, yielding mixed solutions. Finally, if both $ \tilde{\pi}_1^2 $ and $ \tilde{\pi}_1^2 $ exceed $ 1 $, no mixed solution is feasible, since the right-hand sides of Equations \eqref{equation_variation2} and \eqref{equation_variation1} are non-decreasing with decreasing as $ \pi_1 $ and $ \pi_2 $, decrease. Thus the solution is $ \pi^\ast = (1, 1) $. Collecting all cases, every solution of the Lagrangian \eqref{equation_lagrangian} must satisfy the following pointwise conditions:
\[
(\pi_1, \pi_2) = 
	 \begin{cases}
	  \left(\sqrt\frac{g_1}{\lambda c_1}, \sqrt\frac{g_2 c_1}{g_1 c_2} \right) & \text{, where } g_1 < \lambda c_1 \text{ , and } g_2 c_1 < g_1 c_2 \\ 
	  \left(\min\left(1, \sqrt{\frac{g_1 + g_2}{\lambda(c_1 + c_2)}}\right), 1\right) & \text{, where } g_1 < \lambda c_1 \text{ , and }  g_2 c_1 \geq g_1 c_2 \\ 
	  \left(1, \min\left(1, \sqrt{\frac{g_2}{\lambda c_2}}\right)\right) & \text{, where } g_1 \geq \lambda c_1  \\ 
 \end{cases}
\]
Rearranging terms and using continuity at $ g_2 c_1 = g_1 c_2$, this can be equivalently expressed as
\[ 
(\pi_1, \pi_2) = 
	 \begin{cases}
	  \left(\min\left(1, \max\left( \sqrt\frac{g_1}{\lambda c_1}, \sqrt{\frac{g_1 + g_2}{\lambda(c_1 + c_2)}}\right)\right), \min\left(1, \sqrt\frac{g_2 c_1}{g_1 c_2}\right) \right) & \text{, where } g_1 < \lambda c_1 \\ 
	  \left(1, \min\left(1, \sqrt{\frac{g_2}{\lambda c_2}}\right)\right) & \text{, where } g_1 \geq \lambda c_1 \\ 
 \end{cases}.
\] 
However, since $ c_2 $ and $ g_2 $ depend on $ \varepsilon_M $, they are not available when computing $ \pi_1 $. Thus the above representation it is not admissible. Its conditional version, 
\[
(\pi_1^\ast, \pi_2^\ast) = 
	 \begin{cases}
	  \left(\min\left(1, \max\left( \sqrt\frac{g_1}{\lambda c_1}, \sqrt{\frac{g_1 + \E{g_2| \varepsilon_{C,t}}}{\lambda(c_1 + \E{c_2| \varepsilon_{C,t}})}}\right)\right), \min\left(1, \sqrt\frac{g_2 c_1}{g_1 c_2}\right) \right) & \text{, } g_1 < \lambda c_1 \\ 
	  \left(1, \min\left(1, \sqrt{\frac{g_2}{\lambda c_2}}\right)\right) & \text{, } g_1 \geq \lambda c_1 \\ 
 \end{cases}
\]
is admissible and optimal. Note that $ (\pi_1^\ast, \pi_2^\ast) $ is continuous and $ \pi_2^\ast $ remains unchanged. Optimality follows by applying the law of iterated expectation to the Lagrangian in Equation \eqref{equation_lagrangian}. Finally, the optimal multiplier $ \lambda^\ast $ is determined by enforcing the budget constraint $ \mathrm{E}[c_0 + (\pi_1 c_1)(X_{C,t}) + \pi_1(X_{C,t})(\pi_2 c_2)(X_{C,t,M})] = b_0 $ which concludes the solution to a restricted version of Problem \ref{problem_main}.
\end{proof}

\section{Simulation Study. Setup and Experiments} \label{appendix_simulation_study}
This section describes the front-door model and the experiments analysed in Section \ref{section_simulation_study}. Our ground truth parameter matrix, with dimension specifications $ d_C = 2$, $ d_M = 3$, is given by 
\[
\beta = 
\left(
\begin{array}{cc|c|ccc|c|c}
0 & 0 & 0 & 0 & 0 & 0 & 0 & \cdot \\ 
0 & 0 & 0 & 0 & 0 & 0 & 0 & \cdot \\  \hline
0.5 & -0.2 & 0 & 0 & 0 & 0 & 0 & \cdot \\ \hline
0.3 & 0.1 & 0.7 & 0 & 0 & 0 & 0 & \cdot \\ 
0.5 & 0.2 & 0.2 & 0 & 0 & 0 & 0 & \cdot \\ 
-0.1 & 0.3 & 0.1 & 0 & 0 & 0 & 0 & \cdot \\ \hline
0.2 & -0.1 & 0 & 0.5 & 0.4 & -0.3 & 0 & \cdot \\ \hline
\cdot & \cdot & \cdot & \cdot & \cdot & \cdot & \cdot & \cdot
\end{array}
\right)
\]
This specifications determines the causal effect $ \xi $ to be $ \beta_{rM} \beta_{Mt} = 0.4$. The data is now generated by the structural equations $ X = (I - \beta)^{-1} \varepsilon $ where the error distributions have the form 
\begin{align*}
&
\varepsilon_C \sim t_2 \left( \begin{pmatrix} 0 \\ 0 \end{pmatrix}, \sigma_C, df = 5 \right) , \varepsilon_{tr} \sim t_2 \left( \begin{pmatrix} 0 \\ 0 \end{pmatrix}, \sigma_{tr}, df = 5 \right) \text{  and } 
\varepsilon_M \sim \mathcal{N}_3 \left( \begin{pmatrix} 0 \\ 0 \end{pmatrix}, \sigma_M \right) \text{ with} \\ 
&
\sigma_C = \begin{pmatrix} 1 & 0.7 \\  0.7 & 1.5 \end{pmatrix}, 
\sigma_{tr} = \begin{pmatrix} 1 & -0.5 \\ -0.5 & 1.5 \end{pmatrix} \text{, and }
\sigma_M = \begin{pmatrix} 1 & 0.3 & 0 \\ 0.3 & 1.5 & -0.5 \\ 0 & -0.5 & 1 \end{pmatrix}. 
\end{align*}
We fixed the cost functions to scale with the norm and to be constant, i.e., \[ c_1(x_{Ct}) = 0.1 \cdot \Vert x_{Ct} \Vert_2,  \quad c_2(x_{CtM}) = 0.5. \]
\textbf{Experiment to Figure \ref{figure_calibration}:} We created $ 50 $ replications of datasets with sample sizes \[N = (100, 250, 500, 750, 1000, 2500, 5000, 7500)\] under $ \pi_1 \equiv 1,  \pi_2 \equiv 1 $. For each dataset, the estimate $ \hat\xi_{\text{naive}} $, average sample cost $ c_{\text{naive}} $ and asymptotic variance $ v_{\text{naive}} $ under $ \pi_1 \equiv 1, \pi_2 \equiv 1  $ were determined. We computed the optimal propensities $ \pi^\ast $ at a budget of $ b_0 = c_{\text{naive}} / 1.5 $ and determined the estimate $ \hat\xi_{\text{opt}}$ and optimized asymptotic variance $ v_{\text{opt}}$ on dataset created by the optimal propensities $ \pi^\ast $ of sizes $ 1.5 \cdot N $ to counteract the smaller budget. In Figure \ref{figure_calibration}, the left plot portrays the empirical mean-squared errors $ MSE({\hat\xi_{\text{naive}}}) , MSE({\hat\xi_{\text{opt}}}) $ and their theoretical counterparts $ MSE_{\text{naive}} = v_{\text{naive}}, MSE_{\text{opt}} =v_{\text{opt}} $ at different budget levels $ n~\cdot~b_0 $. The right plot visualizes the distribution of the optimized propensity $ \pi^\ast $ on $ 1,000 $ samples. \\
\textbf{Experiment to Figure \ref{figure_computational_sensitivity}:} As in the experiment to Figure \ref{figure_calibration}, we generated 50 replications of datasets with sample sizes $ N $ under  $ \pi_1 \equiv 1,  \pi_2 \equiv 1 $. For each dataset, we optimized for $ {\lambda}^\ast $, computed the asymptotic variance $ v_{\text{opt}}$, its cost $ c_{\text{naive}} $ and stored the computational time $ \tau $. The four plots in Figure \ref{figure_computational_sensitivity} show the means at the various sample sizes along with the uncertainty measure of $ \pm $ one standard deviation. Note that all computations were run on an Apple M2 chip with 16GB of RAM. \\
\textbf{Experiment to Figure \ref{figure_sensitivity}:} In this experiment, we varied one parameter or nuisance of our data generation process at a time and studied its effects. For reference, we listed the exact alterations in Table \ref{table_alterations_experiment_parameter_nuisance_sensitivity}.
\begin{table}
\centering
\begin{tabular}{c|c}
Scaled Parameter / Nuisance & Multiplicative scaling range $ \alpha $ \\ \hline
$\beta_{tC} = \alpha \cdot (0.5, -0.2)$  & $ (\frac{1}{100}, \frac{5}{100}, \frac{10}{100}, \frac{15}{100}, \frac15, \frac14, \frac13, \frac12, 1, 2, 3, 4, 5, 10, 20) $   \\ 
$ \beta_{MC} = \alpha \cdot \begin{pmatrix} 0.3 & 0.1 \\ 0.5 & 0.2 \\ -0.1 & 0.3 \end{pmatrix}  $ & $ (\frac{1}{100}, \frac{5}{100}, \frac{10}{100}, \frac{15}{100}, \frac15, \frac14, \frac13, \frac12, 1, 2, 3, 4, 5, 10, 20) $ \\ 
$ \beta_{rC} = \alpha \cdot (0.2, -0.1) $ & $ (\frac{1}{100}, \frac{5}{100}, \frac{10}{100}, \frac{15}{100}, \frac15, \frac14, \frac13, \frac12, 1, 2, 3, 4, 5, 10, 20) $ \\ 
$ \beta_{Mt} = \alpha \cdot \begin{pmatrix} 0.7 \\ 0.2 \\ 0.1 \end{pmatrix} $ & $ (\frac{1}{100}, \frac{5}{100}, \frac{10}{100}, \frac{15}{100}, \frac15, \frac14, \frac13, \frac12, 1, 2, 3, 4, 5, 10, 20) $  \\
$ \beta_{rM} = \alpha \cdot (0.5, 0.4, -0.3)$ & $ (\frac{1}{100}, \frac{5}{100}, \frac{10}{100}, \frac{15}{100}, \frac15, \frac14, \frac13, \frac12, 1, 2, 3, 4, 5, 10, 20) $  \\ 
$ \sigma_C =  \begin{pmatrix}  1 \alpha & 0.7 \\ 0.7 &  1.5 \alpha\end{pmatrix}$  & $ (0.6, 0.75, 1, 1.3, 1.75, 2.25, 3, 4.5, 7, 10, 15) $  \\ 
$ \sigma_C =  \begin{pmatrix}  1  & 0.7\alpha \\ 0.7\alpha &  1.5 \end{pmatrix}$ & \verb| seq(0, sqrt(3), length.out = 11) |\\ 
$ \sigma_M = \begin{pmatrix} 1\alpha & 0.3 & 0 \\ 0.3 & 1.5\alpha & -0.5 \\ 0 & -0.5 & 1\alpha \end{pmatrix} $ & $ (0.5, 0.6, 0.8, 1.1, 1.6, 2.25, 3, 4.5, 7, 10, 15) $ \\ 
$ \sigma_M = \begin{pmatrix} 1 & 0.3\alpha & 0 \\ 0.3\alpha & 1.5 & -0.5\alpha \\ 0 & -0.5\alpha & 1 \end{pmatrix}$ &  \verb| seq(0, 2, by = 0.2) | \\ 
$ \sigma_{tr} = \begin{pmatrix} 1\alpha & -0.5 \\ -0.5 & 1.5 \end{pmatrix}     $ & $ (0.2, 0.4, 0.7, 1, 1.5, 2.5, 5, 8, 12.5, 20) $ \\ 
$ \sigma_{tr} = \begin{pmatrix} 1 & -0.5 \\ -0.5 & 1.5\alpha \end{pmatrix}$ & $ (0.3, 0.5, 0.7, 1, 1.5, 2.5, 5, 8, 12.5, 20)$ \\ 
$ \sigma_{tr} = \begin{pmatrix} 1 & -0.5\alpha \\ -0.5\alpha & 1.5 \end{pmatrix}$ & \verb| -seq(0, 1.2, length.out = 11) | \\ 
\end{tabular}
\caption{Alterations in the experiments to Figure \ref{figure_sensitivity}.}
\label{table_alterations_experiment_parameter_nuisance_sensitivity}
\end{table}
For each alteration, we generated $ 50 $ datasets of size $ 500 $ and computed the asymptotic variance $ v_{\text{naive}} $ and average sample cost $ c_{\text{naive}} $. The budget for the optimization problem was set to be $ b_0 = c_{\text{naive}} / 1.5 $ giving the asymptotic optimized variance $ v_{\text{opt}} $. Here, we did not generate a second larger dataset, but instead scaled the ratio of the asymptotic variances accordingly to get the relative efficiency $ v_{\text{opt}} / (1.5 \cdot v_{\text{naive}}) $. A relative efficiency of, say, $80\%$ indicates that under the same budget, the asymptotic optimized partial-measurements variance is $20\%$ smaller than the always-measuring asymptotic variance. The $ 12 $ plots showcase the means with the uncertainty of $ \pm $ one standard deviation against either the scaling factor $ \alpha $ or the resulting (average) correlation(s). Note that the scaling ranges were carefully chosen so as to generate positive definite matrices.
\\ 
\textbf{Computational Note:} To approximate the conditional expectations in Theorem \ref{theorem_optimization} at an evaluation point $ x_{C,t} $, we simulated realizations of $ X_{C,t,M} $ using a subsample of $ \varepsilon_M $, applied the functional, and then computed the weighted average of the subsample's propensity.

\section{Model Misspecification. Non-Linear Data} \label{appendix_misspecification}
This appendix extends the analysis underlying Figure \ref{figure_sensitivity} by examining efficiency gains under deliberate model misspecification. The data‑generating mechanism is identical to the linear setting (details in Appendix \ref{appendix_simulation_study}) except for the structural equation governing the mediator $ X_M$, which is replaced by the quadratic expression
\[ \begin{pmatrix}
    X_{M_1} \\ X_{M_2}  \\ X_{M_3} 
\end{pmatrix} = \begin{pmatrix}
    0.7 & -0.1 \\ 0.2 & -0.2 \\ 0.1 & -0.4
\end{pmatrix} 
\begin{pmatrix}
    X_t \\ X_t^2 
\end{pmatrix} + 
\begin{pmatrix}
 \varepsilon_{M_1} \\ \varepsilon_{M_2}  \\ \varepsilon_{M_3} 
\end{pmatrix}.
\]
Under this specification, the causal effect of $X_t$ on $X_r$ becomes a quadratic function of the treatment. To accommodate this, we estimate the causal effect using quadratic regression models while keeping the design optimization step unchanged. That is, the optimized partial‑data design is still constructed under the (incorrect) assumption of linearity. This experiment therefore analyzes the robustness of the optimized design when the estimation problem becomes non-linear.

We evaluate the causal effect at ten equally spaced treatment values in the interval $[-0.1, 0.1]$. Figure \ref{figure_reanalysis} reports the mean squared error (MSE), averaged over 50 replications and all evaluation points, across a range of budget levels. The display illustrates that the optimized partial‑data design continues to outperform the full‑data design, yielding smaller MSEs despite the slight misspecification. As the non‑linear component of the data‑generating process however becomes more pronounced, we expect the misalignment between the design stage and the estimation stage to increase, potentially reducing the efficiency gains.

\begin{figure}
    \centering
    \includegraphics[width=0.75\linewidth]{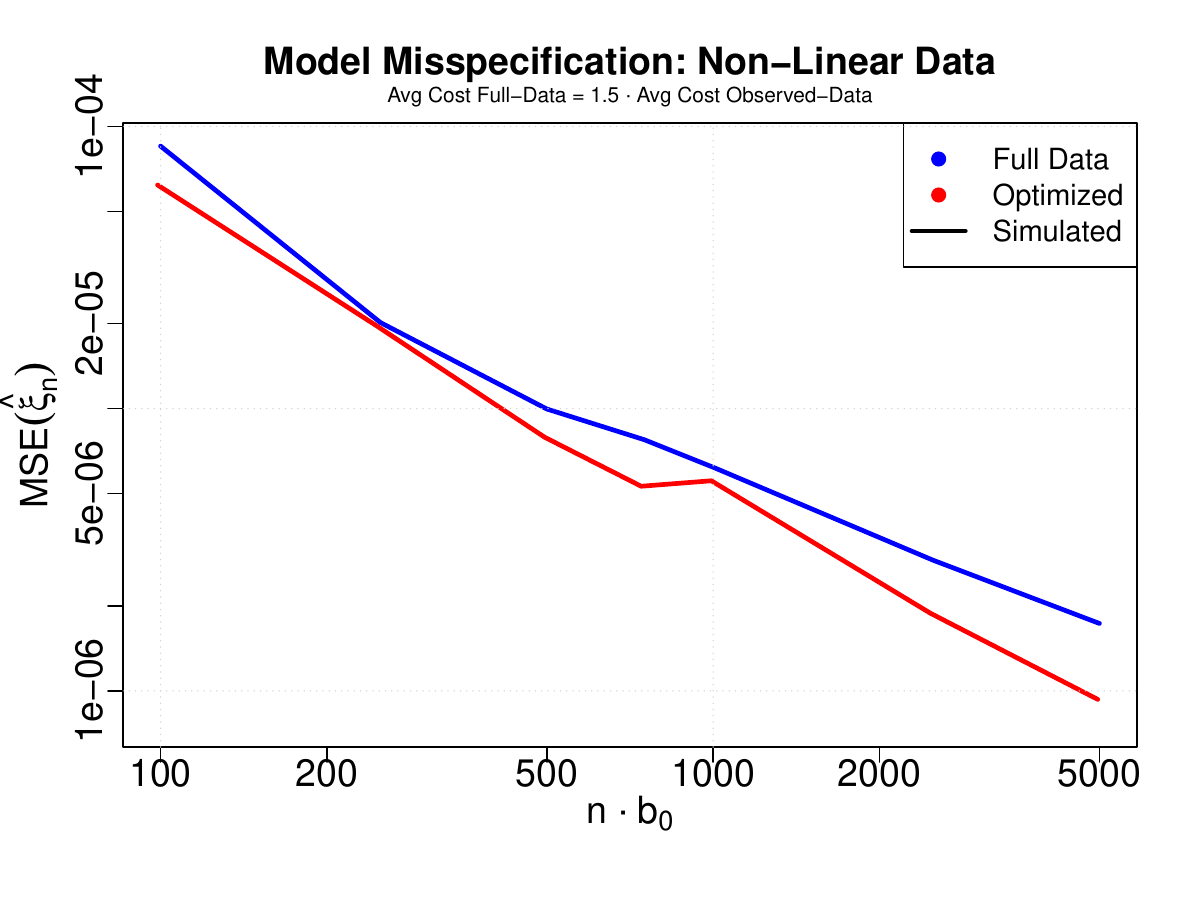}
    \caption{Mean squared error of the estimated causal effect under quadratic data‑generating mechanisms, evaluated and averaged at ten treatment values in $[-0.1, 0.1]$ over 50 replications.}
    \label{figure_reanalysis}
\end{figure}

\end{document}

%% file: rebuttal/Figure_generation.tex
\def\dashlength{3pt}
\begin{tikzpicture}[scale = 0.85, >=Stealth, node style/.style={scale = 0.8,  minimum height=1cm, font=\small}]
\node[node style, teal] (B) at (2,2) {$X_C $};
\node[node style] (t) at (0,0) {$X_t$};
\node[node style] (M) at (2,0) {$ X_M $};
\node[node style, olive] (c) at (0,-2) {$\Delta$};
\node[node style] (r) at (4,0) {$X_r$};

\draw[->, teal] (B) -- (t);
\draw[->, teal] (B) -- (M);
\draw[->, teal, dash pattern=on \dashlength off \dashlength, dash phase=0pt]  (B) to [bend right=50] (c);
\draw[->, olive, dash pattern=on \dashlength off \dashlength, dash phase=\dashlength] (B) to [bend right=50] (c);
\draw[->, teal] (B) -- (r);

\draw[<->] (t) to [out=-45,in=-135] (r);
\draw[->] (t) -- (M);
\draw[->, olive] (t) -- (c);

\draw[->, olive] (M) to [bend right=20]  (c);
\draw[->] (M) -- (r);
\filldraw[fill=white, fill opacity=0.1, draw=white, rounded corners=5pt]
  (-0.75,-3.625) rectangle (4.5, 3.625);
\end{tikzpicture} \hfill
\begin{tikzpicture}[scale = 0.85, >=Stealth, node style/.style={scale = 0.8, scale = 1, minimum height=1cm, align=left, anchor = north}]
\node[node style ] (B) at (4,2.5) {Measure $x_{Ct}$ at cost $c_0$};
\node[node style, blue, anchor = north west] (P0) at (-4,1.5) {With probability};
\node[node style, blue] (P1) at (1,1.5) {$\pi_1(x_{Ct})$};
\node[node style, blue] (P11) at (5.15,1.5) {$1 - \pi_1(x_{Ct})$};
\node[node style] (t) at (0,0) {$\begin{matrix} \textrm{Return } x_{Ct} \\ \Delta = 1 \end{matrix}$};
\node[node style] (M) at (4,0) {Measure $ x_M $ at cost $c_1(x_{Ct})$};
\node[node style, blue, anchor = north west] (P02) at (-4,-1.5) {With Probability };
\node[node style, blue] (P2) at (1,-1.5) {$\pi_2(x_{CtM})$};
\node[node style, blue] (P21) at (5.15, -1.5) {$1 - \pi_2(x_{CtM})$};
\node[node style, anchor=north] (r) at (0, -3) {$\begin{matrix} \textrm{Return } x_{CtM} \\ \Delta = 2 \end{matrix}$};
\node[node style, anchor = north] (S) at (4,-3) {$\begin{matrix} \textrm{Measure } x_{r} \textrm{ at cost } c_2(x_{CtM}) \\ \textrm{Return } x_{CtMr} \\ \Delta = \infty \end{matrix}$};

\def\dashlength{3pt}
\draw[->] (B) -- (t);
\draw[->] (B) -- (M);
\draw[->] (M) -- (r);
\draw[->] (M) -- (S);

\filldraw[fill=violet, fill opacity=0.1, draw=violet, rounded corners=5pt]
  (-4.5,-1.25) rectangle (7, 1.6);
\node[node style, violet, anchor = north west] (S1) at (-4, -0.5) {First Stage};
\filldraw[fill=orange, fill opacity=0.1, draw=orange, rounded corners=5pt]
  (-4.5,-4.5) rectangle (7, -1.5);
\node[node style, orange, anchor = north west] (S1) at (-4, -3.75) {Second Stage};
\filldraw[fill=white, fill opacity=0.1, draw=black, rounded corners=5pt]
  (-4.75,-4.75) rectangle (7.25, 2.5);
\end{tikzpicture}

%% file: Figure_Id_Graphs.tex
\begin{figure}
\center
\begin{minipage}{0.3\textwidth}
\begin{tikzpicture}[scale = 0.6, >=Stealth, node style/.style={scale = 0.6, minimum height=1cm, font=\small}]
\node[node style, teal] (B) at (2,2) {$X_C $};
\node[node style] (t) at (0,0) {$X_t$};
\node[node style] (M) at (2,0) {$ X_M $};
\node[node style, olive] (c) at (0,-2) {$\Delta$};
\node[node style] (r) at (4,0) {$X_r$};
\node[node style, violet] (S) at (4,-2) {$ X_S $};

\def\dashlength{3pt}
\draw[->, teal] (B) -- (t);
\draw[->, teal] (B) -- (M);
\draw[->, teal, dash pattern=on \dashlength off \dashlength, dash phase=0pt]  (B) to [bend right=50] (c);
\draw[->, olive, dash pattern=on \dashlength off \dashlength, dash phase=\dashlength] (B) to [bend right=50] (c);
\draw[->, teal] (B) -- (r);
\draw[<->, teal, dash pattern=on \dashlength off \dashlength, dash phase=0pt]  (B) to[bend left=50] (S);
\draw[<->, violet, dash pattern=on \dashlength off \dashlength, dash phase=\dashlength] (B) to[bend left=50] (S);

\draw[<->] (t) to [out=-45,in=-135] (r);
\draw[->] (t) -- (M);
\draw[->, olive] (t) -- (c);
\draw[->, violet] (t) to [bend right=30] (S);

\draw[->, olive] (M) to [bend right=20]  (c);
\draw[->] (M) -- (r);
\draw[->, violet] (M) to [bend left=20]  (S);
\draw[<->, dashed, violet] (M) to (S);

\draw[->, violet] (r) -- (S); 
\end{tikzpicture}
\end{minipage}
\begin{minipage}{0.3\textwidth}
\begin{tikzpicture}[scale = 0.6, >=Stealth, node style/.style={scale = 0.6, minimum height=1cm, font=\small}]
\node[node style, teal] (B) at (2,2) {$X_C$};
\node[node style] (t) at (0,0) {$X_t$};
\node[node style] (M) at (2,0) {$ X_M $};
\node[node style, olive] (c) at (0,-2) {$\Delta$};
\node[node style] (r) at (4,0) {$X_r$};
\node[node style, violet] (S) at (4,-2) {$ X_S $};

\def\dashlength{3pt}
\draw[->, teal] (B) -- (t);
\draw[<->, teal] (B) -- (M);
\draw[->, teal, dash pattern=on \dashlength off \dashlength, dash phase=0pt]  (B) to [bend right=50] (c);
\draw[->, olive, dash pattern=on \dashlength off \dashlength, dash phase=\dashlength] (B) to [bend right=50] (c);
\draw[->, teal] (B) -- (r);
\draw[->, teal, dash pattern=on \dashlength off \dashlength, dash phase=0pt]  (B) to[bend left=50] (S);
\draw[->, violet, dash pattern=on \dashlength off \dashlength, dash phase=\dashlength] (B) to[bend left=50] (S);

\draw[<->] (t) to [out=-45,in=-135] (r);
\draw[->] (t) -- (M);
\draw[->, olive] (t) -- (c);
\draw[->, violet] (t) to [bend right=30] (S);

\draw[->, olive] (M) to [bend right=20]  (c);
\draw[->] (M) -- (r);
\draw[->, violet] (M) to [bend left=20]  (S);

\draw[->, violet] (r) -- (S); 
\end{tikzpicture}
\end{minipage}
\begin{minipage}{0.3\textwidth}
\begin{tikzpicture}[scale = 0.6, >=Stealth, node style/.style={scale = 0.6, minimum height=1cm, font=\small}]
\node[node style, teal] (B) at (2,2) {$X_C$};
\node[node style] (t) at (0,0) {$X_t$};
\node[node style] (M) at (2,0) {$X_M $};
\node[node style, olive] (c) at (0,-2) {$\Delta$};
\node[node style] (r) at (4,0) {$X_r$};
\node[node style, violet] (S) at (4,-2) {$ X_S$};

\def\dashlength{3pt}
\draw[<->, teal] (B) to (t);
\draw[->, teal] (B) -- (M);
\draw[->, teal, dash pattern=on \dashlength off \dashlength, dash phase=0pt]  (B) to [bend right=50] (c);
\draw[->, olive, dash pattern=on \dashlength off \dashlength, dash phase=\dashlength] (B) to [bend right=50] (c);
\draw[<->, teal] (B) -- (r);
\draw[->, teal, dash pattern=on \dashlength off \dashlength, dash phase=0pt]  (B) to [bend left=50] (S);
\draw[->, violet, dash pattern=on \dashlength off \dashlength, dash phase=\dashlength] (B) to[bend left=50] (S);

\draw[<->] (t) to [out=-45,in=-135] (r);
\draw[->] (t) -- (M);
\draw[->, olive] (t) -- (c);
\draw[->, violet] (t) to [bend right=30] (S);

\draw[->, olive] (M) to [bend right=20]  (c);
\draw[->] (M) -- (r);
\draw[->, violet] (M) to [bend left=20]  (S);

\draw[->, violet] (r) -- (S); 
\end{tikzpicture}
\end{minipage}
\caption{These causal graphs allow for identification  \citep{drton_linear_identifiability} of linear causal effect $ \xi $. Directed edges result in non-zero elements in the design matrix $ B $, whereas bidirected edges indicate dependencies in the error vector $ \varepsilon $.}
\label{figure_identified_causal_graphs}
\end{figure}

%% file: rebuttal/Figure_proof_logic.tex
\begin{tikzpicture}[scale = 1, >=Stealth, node style/.style={scale = 1, minimum height=1cm, align=left, anchor = north}]
\node[node style ] (a) at (0, 2) {$\varphi^F_1$};
\node[node style ] (b) at (2, 2) {$\bm{\varphi^{F,\textbf{eff}}_2}$};
\node[node style ] (c) at (4, 2) {$\varphi^F_3$};
\node[node style ] (c) at (6, 2) {$\dots$};
\node[node style, violet, anchor = west, rotate=90] (S1) at (-5.5, 0.5) {$\begin{matrix} \textrm{Full-Data}  \\ \textrm{Influence} \\ \textrm{Functions} \end{matrix}$ };
\node[node style, anchor = west, rotate=90] (S2) at (-5.5, -2.4) {$\begin{matrix} \textrm{\color{red}Observed-Data}  \\ \textrm{\color{orange}Influence} \\ \textrm{\color{red}Functions} \end{matrix}$ };
\node[node style, anchor = west, rotate=90] (S2) at (-5.5, -5.5) {$\begin{matrix} \textrm{Feasible Set}  \\ \textrm{\color{black}Restriction} \\ \textrm{\color{black}in Thm. 11} \end{matrix}$ };

\node[node style ] (d) at (-4.25, 0) {$\varphi^{\pi^1}_{1.1}$};
\node[node style ] (e) at (-3.25, 0) {$\bm{\varphi^{\pi^1}_{1.2}}$};
\node[node style ] (f) at (-4.25, -0.5) {$\varphi^{\pi_1}_{1.3}$};
\node[node style ] (f0) at (-3.25, -0.5) {$\ddots$};
\node[node style ] (g) at (-2.25, 0) {$\varphi^{\pi^1}_{2.1}$};
\node[node style ] (h) at (-1.25, 0) {$\bm{\varphi^{\pi^1}_{2.2}}$};
\node[node style ] (i) at (-2.25, -0.5) {$\varphi^{\pi^1}_{2.3}$};
\node[node style ] (i0) at (-1.25, -0.5) {$\ddots$};
\node[node style ] (j) at (-0.25, 0) {$\varphi^{\pi^1}_{3.1}$};
\node[node style ] (k) at (0.75, 0) {$\bm{\varphi^{\pi^1}_{3.2}}$};
\node[node style ] (l) at (-0.25, -0.5) {$\varphi^{\pi^1}_{3.3}$};
\node[node style ] (l0) at (0.75, -0.5) {$\ddots$};
\node[node style ] (l1) at (1.65, -0.5) {$\dots$};

\node[node style ] (m) at (-4.25+7, 0) {$\varphi^{\pi^2}_{1.1}$};
\node[node style ] (n) at (-3.25+7, 0) {$\varphi^{\pi^2}_{1.2}$};
\node[node style ] (o) at (-4.25+7, -0.5) {$\bm{\varphi^{\pi^2}_{1.3}}$};
\node[node style ] (o0) at (-3.25+7, -0.5) {$\ddots$};
\node[node style ] (p) at (-2.25+7, 0) {$\varphi^{\pi_2}_{2.1}$};
\node[node style ] (q) at (-1.25+7, 0) {$\varphi^{\pi^2}_{2.2}$};
\node[node style ] (r) at (-2.25+7, -0.5) {$\bm{\varphi^{\pi_2}_{2.3}}$};
\node[node style ] (r) at (-1.25+7, -0.5) {$\ddots$};
\node[node style ] (s) at (-0.25+7, 0) {$\varphi^{\pi^2}_{3.1}$};
\node[node style ] (t) at (0.75+7, 0) {$\varphi^{\pi^2}_{3.2}$};
\node[node style ] (u) at (-0.25+7, -0.5) {$\bm{\varphi^{\pi^2}_{3.3}}$};
\node[node style ] (u0) at (0.75+7, -0.5) {$\ddots$};
\node[node style ] (u1) at (1.65+7, -0.5) {$\dots$};

\node[node style ] (g) at (-2.25, -3.5) {$\varphi^{\pi^1}_{2.1}$};
\node[node style ] (h) at (-1.25, -3.5) {$\bm{\varphi^{\pi^1}_{2.2}}$};
\node[node style ] (i) at (-2.25, -4) {$\varphi^{\pi^1}_{2.3}$};
\node[node style ] (i0) at (-1.25, -4) {$\ddots$};
\node[node style, color=orange] (w) at (-1.8, -4.75) {\small$IF^{\pi^1}(\bm{\varphi^{F,\textbf{eff}}_2})$};
\node[node style ] (p) at (-2.25+7, -3.5) {$\varphi^{\pi_2}_{2.1}$};
\node[node style ] (q) at (-1.25+7, -3.5) {$\varphi^{\pi^2}_{2.2}$};
\node[node style ] (r) at (-2.25+7, -4) {$\bm{\varphi^{\pi_2}_{2.3}}$};
\node[node style ] (r) at (-1.25+7, -4) {$\ddots$};
\node[node style, color=red] (z) at (-1.8+7, -4.75) {\small$IF^{\pi^2}(\bm{\varphi^{F,\textbf{eff}}_2})$};
\draw (-5.6,-3) -- (9.5,-3);

\node[node style, color=orange] (v) at (-4, -1.25) {\small$IF^{\pi^1}(\varphi^F_1)$};
\node[node style, color=orange] (w) at (-1.8, -1.25) {\small$IF^{\pi^1}(\bm{\varphi^{F,\textbf{eff}}_2})$};
\node[node style, color=orange] (x) at (0, -1.25) {\small $IF^{\pi^1}(\varphi^F_3)$};
\node[node style, color=black] (ab) at (-1.5, -2) {{\small Influence Functions in Model} $\mathcal{M}_{\pi^1= (\pi^1_1, \pi_2^1)}$};

\node[node style, color=red] (y) at (-4+7, -1.25) {\small$IF^{\pi^2}(\varphi^F_1)$};
\node[node style, color=red] (z) at (-1.8+7, -1.25) {\small$IF^{\pi^2}(\bm{\varphi^{F,\textbf{eff}}_2})$};
\node[node style, color=red] (aa) at (7, -1.25) {\small$IF^{\pi^2}(\varphi^F_3)$};
\node[node style, color=black] (ac) at (-1.5+7, -2) {{\small Influence Functions in Model} $\mathcal{M}_{\pi^2 = (\pi^2_1, \pi_2^2)}$};

\node[node style, color=black] (ad) at (9.5, -0.5) {$\dots$};
\node[node style, color=black] (ad) at (9.5, -4) {$\dots$};

\def\dashlength{3pt}
\draw[->] (-0.1,1.15) -- (-3.75, 0.1);
\draw[->] (1.9,1.15) -- (-1.75, 0.1);
\draw[->] (3.9,1.15) -- (0.25, 0.1);
\draw[->] (0.1,1.15) -- (-3.75+7, 0.1);
\draw[->] (2.1,1.15) -- (-1.75+7, 0.1);
\draw[->] (4.1,1.15) -- (0.25+7, 0.1);

\filldraw[fill=violet, fill opacity=0.1, draw=violet, rounded corners=5pt]
  (-1,1) rectangle (7, 2);
\filldraw[fill=orange, fill opacity=0.1, draw=orange, rounded corners=5pt]
  (-4.75, -2) rectangle (-2.75,0);
\filldraw[fill=orange, fill opacity=0.2, draw=orange, rounded corners=5pt]
  (-2.75,-5.5) rectangle (-0.75, -3.5);
\filldraw[fill=orange, fill opacity=0.2, draw=orange, rounded corners=5pt]
  (-2.75,-2) rectangle (-0.75, 0);
\filldraw[fill=orange, fill opacity=0.3, draw=orange, rounded corners=5pt]
  (-0.75,-2) rectangle (1.25, 0);
\filldraw[fill=orange, fill opacity=0.4, draw=orange, rounded corners=5pt]
  (1.25,-2) rectangle (2, 0);
\filldraw[fill=white, fill opacity=0.1, draw=black, rounded corners=5pt]
  (-4.75,-2) rectangle (2, 0);

\filldraw[fill=red, fill opacity=0.1, draw=red, rounded corners=5pt]
  (-4.75+7, -2) rectangle (-2.75+7,0);
\filldraw[fill=red, fill opacity=0.2, draw=red, rounded corners=5pt]
  (-2.75+7,-2) rectangle (-0.75+7, 0);
\filldraw[fill=red, fill opacity=0.2, draw=red, rounded corners=5pt]
  (-2.75+7,-5.5) rectangle (-0.75+7, -3.5);
\filldraw[fill=red, fill opacity=0.3, draw=red, rounded corners=5pt]
  (-0.75+7,-2) rectangle (1.25+7, 0);
\filldraw[fill=red, fill opacity=0.4, draw=red, rounded corners=5pt]
  (1.25+7,-2) rectangle (2+7, 0);
\filldraw[fill=white, fill opacity=0.1, draw=black, rounded corners=5pt]
  (-4.75+7,-2) rectangle (2+7, 0);

\end{tikzpicture}